\newcommand{\indep}{\mathop{\perp\!\!\!\!\perp}}
\newtheorem*{Estimation and Inference}{Estimation and Inference}
\newtheorem*{Applicability of DID-IV designs in practice}{Applicability of DID-IV designs in practice}
\newtheorem*{Repeated cross sections}{Repeated cross sections}
\newtheorem{Proof of Theorem}{Proof of Theorem}
\newtheorem{Proof of Lemma}{Proof of Lemma}
\newtheorem{proof of}{proof of}
\newtheorem{Fact}{Fact}
\newtheorem{Step}{Step}
\theoremstyle{definition}
\newtheorem{Assumption}{Assumption}
\newtheorem*{Discussion in section 5.4}{Discussion in section 5.4}
\newtheorem*{Remark*}{Remark}
\newtheorem{Remark}{Remark}
\newtheorem{Theorem}{Theorem}
\newtheorem{Corollary}{Corollary}
\newtheorem*{Def}{Definition}
\newtheorem{Lemma}{Lemma}
\newcites{online}{References}
\title{Instrumented Difference-in-Differences with Heterogeneous Treatment Effects\thanks{This paper won the Kanematsu Prize from the Research Institution for Economics and Business Administration, Kobe University, in 2023. It also won the Best Young Scholar Award at SETA (Symposium on Econometric Theory and Applications) Conference, Taipei, in 2024. I am grateful to my advisors, Daiji Kawaguchi and Ryo Okui, for their continued guidance and support. I am also thankful to Alberto Abadie, Louis-Daniel Pape, Shakeeb Khan, Pedro H.C. Sant'Anna, Masamune Iwasawa, Shoya Ishimaru, Shosei Sakaguchi, Takahide Yanagi, and the three anonymous referees of the Kanematsu Prize for their insightful comments. I gratefully acknowledge the support provided by the JSPS KAKENHI Grant JP 24KJ0817.
All errors are my own.}}
\author{Sho Miyaji\thanks{Department of Economics, Yale University, 28 Hillhouse Avenue, New Haven, CT 06511, USA; Email: \href{mailto:sho.miyaji@yale.edu}{sho.miyaji@yale.edu}.}}
\date{\today}
\begin{document}
\onehalfspacing
\maketitle   
\begin{abstract}
Many studies exploit variation in the timing of policy adoption across units as an instrument for treatment. This paper formalizes the underlying identification strategy as an instrumented difference-in-differences (DID-IV). In this design, a Wald-DID estimand, which scales the DID estimand of the outcome by the DID estimand of the treatment, captures the local average treatment effect on the treated (LATET). We extend the canonical DID-IV design to multiple period settings with the staggered adoption of the instrument across units. Moreover, we propose a credible estimation method in this design that is robust to treatment effect heterogeneity. We illustrate the empirical relevance of our findings, estimating returns to schooling in the United Kingdom. In this application, the two-way fixed effects instrumental variable regression, the conventional approach to implement DID-IV designs, yields a negative estimate. By contrast, our estimation method indicates a substantial gain from schooling.\par
\end{abstract} 
\bigskip
\noindent\textbf{Keywords:} difference-in-differences, instrumental variable, local average treatment effect, returns to education

\section{Introduction}\label{sec1}
To identify the effect of a treatment on an outcome, many studies exploit variation in the timing of policy adoption across units as an instrument for treatment. For example, to estimate returns to schooling in Indonesia, \cite{Duflo2001-nh} exploits variation in the timing of the introduction of a school construction program across regions as an instrument for educational attainment. Similarly, to estimate the causal link between parents’ and children’s educational attainment, \cite{Black2005-aw} exploit variation in the timing of the implementation of school reforms across municipalities as an instrument for parents’ educational attainment. Notably, the identification strategy underlying these studies is similar to difference-in-differences (DID) designs, in that variation in the timing of a policy shock is used to identify treatment effects. At the same time, it differs from DID designs in that this variation is used to construct an instrument rather than a treatment.\par
In this paper, we formalize this identification strategy as an instrumented difference-in-differences (DID-IV). We define the target parameter and the identifying assumptions in this design, and develop a credible estimation method that is robust to treatment effect heterogeneity. We illustrate the empirical relevance of our findings with the setting of \cite{Oreopoulos2006-bn}, estimating returns to schooling in the United Kingdom. In this application, we show that the choice of estimation method matters in practice.\par
First, we consider a simple setting with two periods and two groups: some units are not exposed to the instrument in either period (the unexposed group), while others become exposed in the second period (the exposed group).\footnote{Note that this two-period/two-group ($2 \times 2$) setting has already been considered in \cite{chasemartin2010-ch} and \cite{Hudson2017-tm}, and our $2 \times 2$ DID-IV design builds on these studies. In this paper, we revisit $2 \times 2$ DID-IV designs for two reasons. First, we aim to complement \cite{chasemartin2010-ch} and \cite{Hudson2017-tm}. Specifically, we introduce the instrument path into $2 \times 2$ DID-IV designs and uncover an additional identifying assumption that is not made explicit in the previous literature. Second, we compare $2 \times 2$ DID-IV designs to the Fuzzy DID designs considered in \cite{De_Chaisemartin2018-xe}. While both \cite{chasemartin2010-ch} and \cite{De_Chaisemartin2018-xe} study the same setting, \cite{De_Chaisemartin2018-xe} formalize $2 \times 2$ DID-IV designs differently, calling them Fuzzy DID designs.}  In this setting, our DID-IV design relies on a monotonicity assumption and parallel trends assumptions for both the treatment and the outcome between the exposed and unexposed groups. Our target parameter is the local average treatment effect on the treated (LATET), which measures the treatment effect for units that belong to the exposed group and are induced to receive the treatment by the instrument in the second period. We show that, in this design, a Wald-DID estimand—defined as the ratio of the DID estimand of the outcome to that of the treatment—identifies the LATET.\par
\cite{De_Chaisemartin2018-xe} (hereafter, “dCDH”) formalize $2 \times 2$ DID-IV designs differently, calling them Fuzzy DID designs. The main difference between this paper and dCDH lies in the definition of the target parameter. While we focus on the LATET, dCDH focus on the switcher local average treatment effect on the treated (SLATET); this parameter measures the treatment effects, for those who belong to an exposed group and start receiving the treatment in the second period. Because the target parameters differ, the identifying assumptions adopted in
dCDH also differ from those used in this paper.\par
Motivated by these differences, we next examine the detailed connections between DID-IV and Fuzzy DID and discuss their implications for treatment adoption behavior and the interpretation of the target parameter. We first show that the identifying assumptions under Fuzzy DID impose stronger restrictions on treatment adoption behavior than those under DID-IV. Under these restrictions, we then demonstrate that dCDH's target parameter, the SLATET, can be decomposed into a weighted average of two distinct causal parameters. One parameter captures the treatment effects among the subpopulation of the compliers in the sense of \cite{Imbens1994-qy}, while the other captures the treatment effects for
time compliers—units whose treatment status is affected by time but not by the
instrument. This decomposition result has an important implication: even when the instrument
is directly linked to the policy change of interest, the SLATET may fail to be a policy-relevant parameter (\cite{Heckman2001-ur}).\par
Next, we extend the canonical DID-IV design to multiple period settings with the staggered adoption of the instrument across units. In most DID-IV applications, researchers exploit variation in the timing of policy adoption across units in more than two periods, instrumenting for the treatment with the natural variation. The instrument is constructed, for example, from the staggered adoption of school reforms across municipalities or countries (e.g. \cite{Oreopoulos2006-bn}, \cite{Lundborg2014-gm}, and \cite{Meghir2018-bk}), the phased-in introduction of Head Start across states (e.g. \cite{Johnson2019-kb}), or the gradual adoption of broadband internet programs across municipalities (e.g. \cite{Akerman2015-hh}, \cite{Bhuller2013-ki}). We refer to this identification strategy as a staggered DID-IV design, and establish the corresponding target parameter and identifying assumptions. Specifically, we first partition units into mutually exclusive and exhaustive cohorts based on the initial exposure date of the instrument. We then define our target parameter as the cohort specific local average treatment effect on the treated (CLATT). This parameter is a natural generalization of the LATET in $2 \times 2$ DID-IV designs and measures the treatment effects for units that belong to cohort $e$ and are the compliers at a given relative period $l$ following the initial exposure to the instrument. Finally, we introduce two Wald--DID estimands that use either never-exposed cohorts
or not-yet-exposed cohorts as control groups.
For each estimand, we extend the identification assumptions from the
$2 \times 2$ DID--IV setting to multiple time periods,
and show that, under these assumptions,
the corresponding Wald--DID estimand identifies the CLATT
at each relative period following the initial exposure to the instrument.\par
We extend our DID-IV framework along two dimensions. First, we show that it can be applied to settings with a non-binary, ordered treatment. Second, we consider extensions to repeated cross sections.\par
Finally, we propose a regression-based method to consistently estimate our target parameter in staggered DID-IV designs under heterogeneous treatment effects. In practice, when researchers implicitly rely on a staggered DID-IV design, they typically implement it using two-way fixed effects instrumental variable (TWFEIV) regressions (e.g. \cite{Johnson2019-kb}, \cite{Lundborg2014-gm}, \cite{Black2005-aw}, \cite{Akerman2015-hh}, and \cite{Bhuller2013-ki}).  In a companion paper (\cite{Miyaji2023-tw}), however, we show that in more than two periods, the TWFEIV estimand generally fails to summarize treatment effects if the effect of the instrument on the treatment or on the outcome is not stable over time. Our proposed method avoids this issue and is robust to treatment effect heterogeneity. The estimation procedure consists of two steps: we subset the data that contain only two cohorts and two periods and then, in each data set, we run the TWFEIV regression. We call this a stacked two stage least squares (STS) regression and ensure its validity. Following \cite{Callaway2021-wl}, we propose a weighting scheme to 
summarize the treatment effects, where the weight reflects the share of compliers in a given relative period $l$ in cohort $e$. We also discuss the procedure of pre-trends tests to assess the validity of the parallel trends assumptions in DID-IV designs.\par
We illustrate our findings with the setting of \cite{Oreopoulos2006-bn}, who estimates returns to schooling in the UK, exploiting variation in the timing of the implementation of school reforms between Britain and Northern Ireland as an instrument for education attainment. In this application, we first assess the plausibility of the DID-IV identification strategy implicitly adopted by \cite{Oreopoulos2006-bn}. We then estimate the TWFEIV regression in the author's setting. We find that the TWFEIV estimate is strictly negative and not significantly different from zero. Finally, we use our estimation method to reassess returns to schooling in the UK. We find that our STS estimates are all positive in each relative period after school reform, and our weighting scheme yields a more plausible estimate than the TWFEIV estimate. Specifically, our weighted estimate indicates roughly a $20\%$ gain from schooling in the UK.\par
The rest of the paper is organized as follows. The following subsection discusses the related literature. Section \ref{sec2} establishes DID-IV designs in two periods and two groups settings. Section \ref{sec3} formalizes the target parameter and identifying assumptions in staggered DID-IV designs. Section \ref{sec4} contains extensions. Section \ref{sec5} presents our estimation method. Section \ref{sec6} presents our empirical application. Section \ref{sec7} concludes. All proofs are given in the Appendix.
\subsection*{Related literature}\label{sec1.1}
Our paper is related to the recent DID-IV literature (\cite{chasemartin2010-ch}; \cite{Hudson2017-tm}; \cite{De_Chaisemartin2018-xe}; \cite{dechaisemartin2024differenceindifferencesestimatorstreatmentscontinuously}; \cite{chen2025efficientdifferenceindifferenceseventstudy}; \cite{helmers2025judge}) and contributes to this literature in three ways.\footnote{While \cite{Ye2023-ju} also consider the “instrumented difference-in-differences”, their target parameter is the average treatment effect (ATE), and they impose strong assumptions for the Wald-DID estimand to identify this parameter. We therefore view \cite{Ye2023-ju} as distinct from the recent DID-IV literature.} \par
First, this paper investigates the detailed connections between DID-IV and the Fuzzy DID considered in \cite{De_Chaisemartin2018-xe}. In econometrics, a pioneering work formalizing $2 \times 2$ DID-IV designs is \cite{chasemartin2010-ch}, who shows that, under parallel trends assumptions for both the treatment and the outcome, together with a monotonicity assumption, the Wald-DID estimand identifies the local average treatment effect on the treated (LATET).\footnote{\cite{Blundell565} also consider a DID--IV setting with a binary treatment
and a binary instrument in Subsection~F.2 of Section~III.
There are two differences between \cite{chasemartin2010-ch}
and \cite{Blundell565}.
First, while \cite{chasemartin2010-ch} focus on the LATET as the target parameter,
\cite{Blundell565} focus on the switcher local average treatment effect (SLATET),
as in dCDH.
Second, while \cite{chasemartin2010-ch} impose a parallel trends assumption
on unexposed outcomes, \cite{Blundell565} impose a parallel trends assumption
on untreated outcomes.}
 Following \cite{chasemartin2010-ch}, \cite{Hudson2017-tm} also study $2 \times 2$ DID-IV designs with a non-binary, ordered treatment.\footnote{Recently, \cite{chen2025efficientdifferenceindifferenceseventstudy} extend our DID-IV framework by allowing for covariates. \cite{helmers2025judge} consider an extension to the continuous treatment case.} Building on \cite{chasemartin2010-ch}, however, \cite{De_Chaisemartin2018-xe} formalize $2 \times 2$ DID-IV designs differently, calling them Fuzzy DID designs.\par
In this paper, we first formalize $2 \times 2$ DID-IV designs, complementing \cite{chasemartin2010-ch} and \cite{Hudson2017-tm}. Specifically, while the framework we consider here is mainly based on \cite{chasemartin2010-ch} and \cite{Hudson2017-tm}, we explicitly introduce the instrument path into $2 \times 2$ DID-IV designs and uncover an additional identifying assumption not given in the previous literature.\par 
Given the identification results and the terminology developed in this paper, we then compare DID-IV to Fuzzy DID. Specifically, we clarify the differences between DID-IV and Fuzzy DID and discuss their implications for treatment adoption behavior and the interpretation of the target parameter.
\par 
Second, this paper extends $2 \times 2$ DID-IV designs to multiple period settings with the staggered adoption of the instrument across units, which we refer to as staggered DID-IV designs. In practice, empirical researchers often leverage variation in policy adoption timing across units as an instrument for treatment in more than two periods (e.g., \cite{Black2005-aw}, \cite{Lundborg2014-gm}, and \cite{Johnson2019-kb}). However, no previous study has extended $2 \times 2$ DID-IV designs to such important settings. In this paper, we formalize the underlying identification strategy as staggered DID-IV designs, and establish the target parameter and the identifying assumptions. Our staggered DID-IV designs allow practitioners to estimate the local average treatment effect even when the treatment adoption is endogenous over time.\par 
Finally, this paper provides a credible estimation method in staggered DID-IV designs under heterogeneous treatment effects. When empirical researchers implicitly rely on the staggered DID-IV designs in practice, they commonly implement this design via TWFEIV regressions (e.g. \cite{Johnson2019-kb}, \cite{Lundborg2014-gm}, \cite{Black2005-aw}, \cite{Akerman2015-hh}, and \cite{Bhuller2013-ki}). In a companion paper (\cite{Miyaji2023-tw}), however, we show that in more than two periods, the TWFEIV estimand generally fails to summarize the treatment effects if the effect of the instrument on the outcome or on the treatment evolves over time.\footnote{See also \cite{De_Chaisemartin2020-dw}, who decompose the numerator and denominator of the TWFEIV estimand separately, and point out the issue of interpreting this estimand causally in Section 3.4 of their Web Appendix.} Our proposed estimation method would serve as an alternative to the TWFEIV estimator and make DID-IV designs more credible in a given application.\par
In work related to this paper, \cite{dechaisemartin2024differenceindifferencesestimatorstreatmentscontinuously} also study DID-IV designs building on \cite{chasemartin2010-ch} and \cite{Hudson2017-tm}. Our paper is distinct from theirs in three ways. First, they consider continuous treatment with continuous but static instrument, while we consider binary or ordered treatment with binary but dynamic instrument. Second, they consider multiple period settings with potentially non-staggered instrument, while we consider multiple periods settings with staggered instrument. Third, they view Fuzzy DID as a special case of DID-IV, while we formally discuss the relationship between the two designs.\par
Our paper is also related to the recent DID literature in two ways. First, this paper provides an alternative identification strategy to DID designs, addressing settings in which the parallel trends assumption is unlikely to hold or no credible control group exists. When empirical researchers rely on DID designs and run two-way fixed effects regressions, they often worry that the parallel trends assumption is implausible in practice. To address this concern, they often turn to TWFEIV regressions, exploiting the timing variation of a policy shock as an instrument for treatment (e.g., \cite{Miller2019-ok}). In this paper, we formalize the underlying identification strategy as instrumented difference-in-differences. Our DID-IV design can be viewed as a natural extension of DID designs, in that the timing variation is used to construct an instrument rather than a treatment.\par
Second, in line with the DID literature, this paper develops a reliable estimation method for DID-IV designs in the presence of heterogeneous treatment effects. Recently, several studies have pointed out the issue of implementing DID designs via two-way fixed effects regressions, or its dynamic specifications under heterogeneous treatment effects (\cite{Athey2022-uo}; \cite{Borusyak2021-jv}; \cite{Callaway2021-wl}; \cite{De_Chaisemartin2020-dw}; \cite{Goodman-Bacon2021-ej}; \cite{Imai2021-dn}; \cite{Sun2021-rp}; \cite{callaway2024differenceindifferencescontinuoustreatment}). Some of these studies propose credible estimation methods that deliver a sensible estimand and are robust to treatment heterogeneity. In the same spirit as the recent DID literature, this paper proposes an alternative to the TWFEIV regression and illustrates its usefulness through an empirical application. 

\section{DID-IV in two time periods}\label{sec2}
In this section, we formalize an instrumented difference-in-differences (DID-IV) in two-period/two-group settings. Specifically, we establish the target parameter and the identifying assumptions in this design. At the end of this section, we also discuss the connections between DID-IV and Fuzzy DID proposed by \cite{De_Chaisemartin2018-xe}.\par
\subsection{Set up}\label{sec2.1}
We introduce the notation we use throughout Section \ref{sec2}. We consider a panel data setting with two periods and $N$ units. For any random variable $R$, we denote $\mathcal{S}(R)$ to be its support. For each $i \in \{1,\dots ,N\}$ and $t \in \{0,1\}$, let $Y_{i,t}$ denote the outcome, and $D_{i,t}\in \{0,1\}$ denote the treatment status: $D_{i,t}=1$ if unit $i$ receives the treatment in period $t$ and $D_{i,t}=0$ if unit $i$ does not receive the treatment. Let $Z_{i,t}\in \{0,1\}$ denote the instrument status: $Z_{i,t}=1$ if unit $i$ is exposed to the instrument in period $t$ and $Z_{i,t}=0$ if unit $i$ is not exposed to the instrument in period $t$. Throughout Section \ref{sec2}, we assume that $\{Y_{i,0},Y_{i,1},D_{i,0},D_{i,1},Z_{i,0},Z_{i,1}\}_{i=1}^{N}$ are independent and identically distributed (i.i.d).\par 
We introduce the path of the treatment and the instrument. Let $D_i=(D_{i,0},D_{i,1})$ be the treatment path and $Z_i=(Z_{i,0},Z_{i,1})$ the instrument path. We assume no one is exposed to the instrument in period $t=0$: $Z_{i,0}=0$ for all $i$; we refer to this as a sharp assignment of the instrument. We denote $E_i \in \{0,1\}$ as the group variable: $E_i=1$ if unit $i$ is exposed to the instrument in period $t=1$ (exposed group) and $E_i=0$ if unit $i$ is not exposed to the instrument in period $t=1$ (unexposed group). In contrast to the sharp assignment of the instrument, we allow the general adoption process of the treatment; that is, we assume the treatment path can take four values with non-zero probability: $\{(0,0),(0,1),(1,0),(1,1)\} \in \mathcal{S}(D)$.\par
In practice, researchers are interested in the effect of a treatment $D_{i,t}$ on an outcome $Y_{i,t}$, and the instrument $Z_{i,1}$ typically represents a policy shock that encourages people to adopt the treatment in period $t=1$. For instance, \cite{Duflo2001-nh} estimates returns to schooling in Indonesia, exploiting variation arising from a new school construction program across regions and cohorts as an instrument for education attainment.\par
 Next, we introduce the potential outcomes framework. Let $Y_{i,t}(d,z)$ denote the potential outcome in period $t$ when unit $i$ receives the treatment path $d \in \mathcal{S}(D)$ and the instrument path $z \in \mathcal{S}(Z)$. Similarly, let $D_{i,t}(z)$ denote the potential treatment status in period $t$ when unit $i$ receives the instrument path $z \in \mathcal{S}(Z)$. We refer to $D_{i,t}((0,0))$ as unexposed treatment and $D_{i,t}((0,1))$ as exposed treatment. Since the treatment and the instrument take only two values, one can write the observed treatments $D_{i,t}$ and outcomes $Y_{i,t}$ as follows.
 \begin{align*}
 &D_{i,t}=\sum_{z \in \mathcal{S}(Z)}\mathbf{1}\{Z_i=z\}D_{i,t}(z),\hspace{3mm}Y_{i,t}=\sum_{z \in \mathcal{S}(Z)}\sum_{d \in \mathcal{S}(D)}\mathbf{1}\{Z_i=z, D_i=d\}Y_{i,t}(d,z).
 \end{align*}\par
We make a no carryover assumption on the potential outcomes $Y_{i,t}(d,z)$.
\begin{Assumption}[No carryover assumption]
\label{sec2as1}
\begin{align*}
\forall z \in \mathcal{S}(Z),\forall d\in \mathcal{S}(D),Y_{i,0}(d,z)=Y_{i,0}(d_0,z),Y_{i,1}(d,z)=Y_{i,1}(d_1,z),
\end{align*}
where $d=(d_0,d_1)$ is the generic element of the treatment path $D_i$.
\end{Assumption}
Assumption \ref{sec2as1} states that potential outcomes $Y_{i,t}(d,z)$ depend only on the current treatment status $d_t$ and the instrument path $z$. In the DID literature, several studies adopt this assumption in non-staggered treatment settings (e.g. \cite{De_Chaisemartin2020-dw}; \cite{Imai2021-dn}).\par
Next, we introduce the group variable $G_{i}^{Z} \equiv (D_{i,1}((0,0)),D_{i,1}((0,1)))$ that describes the type of unit $i$ according to the response of $D_{i,1}$ on the instrument path $z$. Following the terminology in \cite{Imbens1994-qy}, we define $G_{i}^{Z}=(0,0) \equiv NT^{Z}$ as the never-takers, $G_{i}^{Z}=(1,1) \equiv AT^{Z}$ as the always-takers, $G_{i}^{Z}=(0,1) \equiv CM^{Z}$ as the compliers, and $G_{i}^{Z}=(1,0) \equiv DF^{Z}$ as the defiers.\par
Henceforth, we keep Assumption \ref{sec2as1}. In the next subsection, we define the target parameter in $2 \times 2$ DID-IV designs. 
\subsection{Target parameter in $2 \times 2$ DID-IV designs}\label{sec2.2}
In $2 \times 2$ DID-IV designs, our target parameter is the local average treatment effect on the treated (LATET) in period $t=1$ defined below.
\begin{Def}
The local average treatment effect on the treated (LATET) in period $t=1$ is
\begin{align*}
LATET &\equiv E[Y_{i,1}(1)-Y_{i,1}(0)|E_i=1,D_{i,1}((0,1)) > D_{i,1}((0,0))]\\
&=E[Y_{i,1}(1)-Y_{i,1}(0)|E_i=1,CM^{Z}].
\end{align*}\par
\end{Def}
 This parameter measures the treatment effects in period $1$ for units that belong to the exposed group ($E_i=1$) and are induced to receive the treatment by the instrument in that period. In the DID-IV literature, \cite{chasemartin2010-ch} considers the same target parameter, while \cite{Hudson2017-tm} define the local average treatment effect (LATE) in period $t=1$—unconditional on $E_i$—as their target parameter. The LATET has been also proposed in heterogeneous effects IV models with binary instrument (e.g. \cite{Sloczynski2020-uk}; \cite{Sloczynski2022-ld}).\par
We focus on the LATET for two reasons. First, this parameter would be particularly of interest if the instrument reflects a policy change of interest to the researcher (\cite{Heckman2001-ur}, \cite{Heckman2005-fv}). Second, the LATET is a natural extension of the target parameter in DID designs, the so-called average treatment effects on the treated (ATT); both causal parameters measure the treatment effects among the units affected by a policy shock (the treatment in DID designs) and belonging to an exposed group (treatment group).

\subsection{Identification assumptions in $2 \times 2$ DID-IV designs}\label{sec2.3}
This subsection formalizes the identification assumptions in $2 \times 2$ DID-IV designs. In two periods and two groups settings, a popular estimand is the ratio between the DID estimand of the outcome and the DID estimand of the treatment (\cite{Duflo2001-nh}, \cite{Field2007-yc}):
\begin{align*}
w_{DID} &= \frac{E[Y_{i,1}-Y_{i,0}|E_i=1]-E[Y_{i,1}-Y_{i,0}|E_i=0]}{E[D_{i,1}-D_{i,0}|E_i=1]-E[D_{i,1}-D_{i,0}|E_i=0]}.
\end{align*}
Following the terminology in \cite{De_Chaisemartin2018-xe}, we call this the Wald-DID estimand.\par 
We consider the following identifying assumptions for the Wald-DID estimand to capture the LATET.
\begin{Assumption}[Exclusion restriction for potential outcomes]
\label{sec2as2}
\begin{align*}
\forall z \in \mathcal{S}(Z),\forall d\in \mathcal{S}(D),\forall t\in \{0,1\},Y_{i,t}(d,z)=Y_{i,t}(d).
\end{align*}
\end{Assumption}
This assumption requires that the instrument path does not directly affect potential outcomes other than through treatment. This assumption is common in the IV literature; see e.g., \cite{Imbens1994-qy} and \cite{Abadie2003-ry}. In the DID-IV literature, \cite{chasemartin2010-ch} and \cite{Hudson2017-tm} impose a similar assumption without introducing the instrument path.\par 
Given Assumptions \ref{sec2as1}-\ref{sec2as2}, the observed outcome $Y_{i,t}$ can be written as
\begin{align*}
Y_{i,t}=D_{i,t}Y_{i,t}(1)+(1-D_{i,t})Y_{i,t}(0).
\end{align*}\par
Assumptions \ref{sec2as1}-\ref{sec2as2} also allow us to introduce the notions of exposed and unexposed outcomes. For any $z \in \mathcal{S}(Z)$, let $Y_{i,t}(D_{i,t}(z))$ denote the outcome if the instrument path is $z$:
\begin{align*}
Y_{i,t}(D_{i,t}(z)) \equiv D_{i,t}(z)Y_{i,t}(1)+(1-D_{i,t}(z))Y_{i,t}(0).
\end{align*}
Hereafter, we refer to $Y_{i,t}(D_{i,t}((0,0)))$ as the unexposed outcomes and $Y_{i,t}(D_{i,t}((0,1)))$ as the exposed outcomes.\footnote{Exposed and unexposed outcomes are not new concepts in econometrics. For example, in IV designs, the numerator of the Wald estimand compares expected outcomes across instrument values: $E[Y|Z=1]-E[Y|Z=0]$. This difference can be written as $E[Y(D(1))|Z=1]-E[Y(D(0))|Z=0]$, which corresponds to a comparison between exposed and unexposed outcomes.}\par
Next, we make the following monotonicity assumption as in \cite{Imbens1994-qy}.
\begin{Assumption}[Monotonicity assumption at period $t=1$]
\label{sec2as3}
\begin{align*}
Pr(D_{i,1}((0,1)) \geq D_{i,1}((0,0)))=1\hspace{3mm}\text{or}\hspace{2mm}Pr(D_{i,1}((0,1)) \leq D_{i,1}((0,0)))=1.
\end{align*}
\end{Assumption}
This assumption requires that the instrument path affects the treatment choice at period $t=1$ in a monotone (uniform) way. It implies that the group variable $G_{i}^{Z}$ can take three values with non-zero probability. In the DID-IV literature, \cite{chasemartin2010-ch} and \cite{Hudson2017-tm} make the same assumption. Hereafter, we consider the type of monotonicity assumption that rules out the existence of the defiers $DF^{Z}$.\par

\begin{Assumption}[No anticipation in the first stage]
\label{sec2as4}
\begin{align*}
D_{i,0}((0,1))=D_{i,0}((0,0))\hspace{2mm}a.s.\hspace{3mm}\text{for all units $i$ with $E_i=1$}.
\end{align*}
\end{Assumption}
Assumption \ref{sec2as4} requires that the potential treatment choice before the exposure to instrument is equal to the baseline treatment choice $D_{i,0}((0,0))$ in an exposed group.\footnote{In the DID literature, recent studies impose the no anticipation assumption on untreated potential outcomes in several ways. \cite{Callaway2021-wl} and \cite{Sun2021-rp} assume the average version of the no anticipation assumption, whereas \cite{Athey2022-uo} assume it for all units $i$. \cite{Roth2023-ig} take the intermediate approach: they adopt the no anticipation assumption for the treated units. The no anticipation assumption on potential treatment choices in Assumption \ref{sec2as4} is in line with that of \cite{Roth2023-ig}.}\footnote{This assumption is not provided in the previous DID-IV literature: \cite{chasemartin2010-ch} and \cite{Hudson2017-tm} implicitly impose this assumption by writing observed treatment choice in period $t=0$ as $D_{i,0}(0)$.} This assumption restricts the anticipatory behavior and would be plausible if the instrument path is {\it ex ante} not known for all the units in an exposed group.\par 
Next, we impose the parallel trends assumptions in the treatment and the outcome. \cite{chasemartin2010-ch} and \cite{Hudson2017-tm} make the similar assumptions. 
\begin{Assumption}[Parallel Trends Assumption in the treatment]
\label{sec2as5}
\begin{align*}
E[D_{i,1}((0,0))-D_{i,0}((0,0))|E_i=0]=E[D_{i,1}((0,0))-D_{i,0}((0,0))|E_i=1].
\end{align*}
\end{Assumption}
Assumption \ref{sec2as5} is a parallel trends assumption in the treatment. This assumption requires that the expectation of the treatment between exposed and unexposed groups would have followed the same path if the assignment of the instrument had not occurred. For instance, in \cite{Duflo2001-nh}, this assumption requires that the evolution of mean education attainment would have been the same between exposed and unexposed groups if the policy shock had not occurred during the two periods. 

\begin{Assumption}[Parallel Trends Assumption in the outcome]
\label{sec2as6}
\begin{align*}
&E[Y_{i,1}(D_{i,1}((0,0)))-Y_{i,0}(D_{i,0}((0,0)))|E_i=0]\\
=&E[Y_{i,1}(D_{i,1}((0,0)))-Y_{i,0}(D_{i,0}((0,0)))|E_i=1].
\end{align*}
\end{Assumption}
Assumption \ref{sec2as6} is a parallel trends assumption in the outcome, requiring that the evolution of the unexposed outcome is, on average, the same between exposed and unexposed groups.\footnote{\cite{dechaisemartin2024differenceindifferencesestimatorstreatmentscontinuously}
note that this assumption imposes restrictions on treatment effect heterogeneity
when the standard parallel trends assumption holds between exposed and unexposed groups.
} For instance, in \cite{Duflo2001-nh}, this assumption requires that the expectation of log annual earnings would have followed the same path from period $0$ to period $1$ between exposed and unexposed groups in the absence of a policy shock.\par
When empirical researchers exploit variation arising from a policy shock as an instrument for treatment and use the Wald-DID estimand, they often refer to Assumptions \ref{sec2as5} and \ref{sec2as6}. For instance, \cite{Duflo2001-nh} estimates returns to schooling in Indonesia, relying on “the identification assumption that the evolution of wages and education across cohorts would not have varied systematically from one region to another in the absence of the program”.\par
Finally, we assume a relevance condition. This assumption guarantees that the Wald-DID estimand $w_{DID}$ is well defined.
\begin{Assumption}[Relevance condition]
\label{sec2asrelevance}
\begin{align*}
E[D_{i,1}-D_{i,0}|E_i=1]-E[D_{i,1}-D_{i,0}|E_i=0] > 0.
\end{align*}
\end{Assumption}\par
The theorem below shows that if Assumptions \ref{sec2as1}-\ref{sec2asrelevance} hold, the Wald-DID estimand captures the LATET in period $1$.
\begin{Theorem}
\label{sec2thm1}
If Assumptions \ref{sec2as1}-\ref{sec2asrelevance} hold, the Wald-DID estimand $w_{DID}$ is equal to the LATET in period $t=1$; that is,
\begin{align*}
w_{DID}=E[Y_{i,1}(1)-Y_{i,1}(0)|E_i=1,CM^Z].
\end{align*}
holds.
\end{Theorem}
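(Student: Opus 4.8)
The plan is to evaluate the numerator and the denominator of $w_{DID}$ separately, rewrite each in terms of potential treatments $D_{i,t}(z)$ and potential outcomes $Y_{i,t}(d)$, and then show that both share a common factor $\Pr(CM^Z \mid E_i = 1)$ that cancels in the ratio, leaving exactly the LATET. Throughout I would exploit the sharp assignment of the instrument, which identifies observed treatment in the exposed group with the exposed treatment $D_{i,t}((0,1))$ and in the unexposed group with the unexposed treatment $D_{i,t}((0,0))$, and likewise for the observed outcomes via Assumptions \ref{sec2as1} and \ref{sec2as2}.

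First I would handle the denominator, the DID of the treatment. Applying Assumption \ref{sec2as4} (no anticipation in the first stage) to replace $D_{i,0}((0,1))$ by $D_{i,0}((0,0))$ in the exposed group, and then Assumption \ref{sec2as5} (parallel trends in the treatment) to cancel the common baseline trend $E[D_{i,1}((0,0)) - D_{i,0}((0,0))]$, I expect the denominator to collapse to $E[D_{i,1}((0,1)) - D_{i,1}((0,0)) \mid E_i = 1]$. By Assumption \ref{sec2as3} (monotonicity, ruling out defiers) this difference is the indicator of the complier event, so the denominator equals $\Pr(CM^Z \mid E_i = 1)$.

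The numerator is where the real work lies. Writing the observed outcome trend as $E[Y_{i,1}(D_{i,1}((0,1))) - Y_{i,0}(D_{i,0}((0,1))) \mid E_i = 1]$ in the exposed group and the analogous unexposed-outcome trend in the unexposed group, I would insert and subtract the counterfactual unexposed-outcome trend $E[Y_{i,1}(D_{i,1}((0,0))) - Y_{i,0}(D_{i,0}((0,0))) \mid E_i = 1]$ for the exposed group. Assumption \ref{sec2as6} (parallel trends in the outcome) annihilates the resulting cross-group difference, leaving a single within-exposed-group term comparing exposed to unexposed outcomes. Using Assumption \ref{sec2as4} once more to equate the two period-$0$ outcomes, this reduces to $E[Y_{i,1}(D_{i,1}((0,1))) - Y_{i,1}(D_{i,1}((0,0))) \mid E_i = 1]$. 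The final algebraic step is to expand $Y_{i,1}(D_{i,1}(z)) = D_{i,1}(z)Y_{i,1}(1) + (1 - D_{i,1}(z))Y_{i,1}(0)$, so that this difference becomes $(D_{i,1}((0,1)) - D_{i,1}((0,0)))(Y_{i,1}(1) - Y_{i,1}(0))$; by monotonicity this product equals $Y_{i,1}(1) - Y_{i,1}(0)$ on compliers and $0$ otherwise, so the numerator factors as $E[Y_{i,1}(1) - Y_{i,1}(0) \mid E_i = 1, CM^Z]\cdot \Pr(CM^Z \mid E_i = 1)$.

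The main obstacle, and the step I would be most careful about, is the insertion of the counterfactual unexposed-outcome trend for the exposed group: this is precisely the maneuver that activates Assumption \ref{sec2as6}, and one must verify that the two period-$0$ terms genuinely cancel, which relies on Assumption \ref{sec2as4}, rather than merely appearing to. Once the numerator and denominator are in the factored forms above, dividing yields $w_{DID} = E[Y_{i,1}(1) - Y_{i,1}(0) \mid E_i = 1, CM^Z]$, since the common factor $\Pr(CM^Z \mid E_i = 1)$ is nonzero by the maintained assumption that the denominator does not vanish. This is the LATET, completing the argument.
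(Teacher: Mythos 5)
Your proposal is correct and follows essentially the same route as the paper's own proof: the identical add-and-subtract decomposition of the numerator and denominator, with Assumption \ref{sec2as4} removing the period-$0$ discrepancy, Assumptions \ref{sec2as5} and \ref{sec2as6} cancelling the cross-group trend terms, and Assumption \ref{sec2as3} together with the law of iterated expectations reducing the remaining pieces to $\Pr(CM^Z \mid E_i=1)$ and $LATET \cdot \Pr(CM^Z \mid E_i=1)$, respectively. The final cancellation of the common factor is exactly how the paper concludes, so there is no gap.
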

\begin{proof}
See Appendix.
\end{proof}

\subsection{Comparing DID-IV with Fuzzy DID}\label{sec2.7} 
\cite{De_Chaisemartin2018-xe} (henceforth, ``dCDH'') also investigate the identifying assumptions for the Wald-DID estimand to capture the causal effects under the same setting considered here.\footnote{dCDH also consider the DID-IV setting as in this paper. This follows from two observations. First, the group variable $G$ is included in their treatment participation equation $D = \mathbf{1}\{V \geq v_{GT}\}$ (see Assumption 3 in dCDH), implying that $G$ plays the role of the instrument $Z$ in our framework. Second, dCDH also assume the sharp assignment of the instrument: in their treatment participation equation, they impose $v_{10}=v_{00}$ (see page $1019$ of dCDH).} However, dCDH formalize $2 \times 2$ DID-IV designs differently, calling them Fuzzy DID designs. Moreover, dCDH point out that the Wald-DID estimand requires the stable treatment effect assumption to identify their target parameter.\par 
In this section, we clarify the differences between this paper and dCDH, and provide the implications of these differences for treatment adoption behavior, the interpretation of the target parameter, and the use of the Wald-DID estimand. The detailed discussions and proofs are provided in Online Appendix~\ref{ApeE}.\par
Since dCDH formalize Fuzzy DID designs
under repeated cross section data,
we first introduce notation for the repeated cross section setting.
Let $Y$ and $D$ denote the outcome and the treatment, respectively.
Let $Z \in \{0,1\}$ denote the group indicator,
where $Z=1$ indicates the exposed group.
Let $T \in \{0,1\}$ denote the time indicator.
Let $Y(0), Y(1)$ and $D(0), D(1)$ denote the potential outcomes
and the potential treatment choices, respectively.\footnote{
Here, we impose the no carryover assumption and the exclusion restriction
on potential outcomes, and the no anticipation assumption
on potential treatment choices, as in dCDH.
}
Let $D_t(0)$ and $D_t(1)$ denote the potential treatment choices
in period $T=t$ when the group indicator takes value $Z=0$ and $Z=1$, respectively.\footnote{dCDH introduce the treatment participation equation
$D=\mathbf{1}\{V \geq v_{ZT}\}$ and define the potential treatment choice in period $t$
as $D(t)=\mathbf{1}\{V \geq v_{Zt}\}$.
In this paper, we instead adopt the notation $D_t(Z)=\mathbf{1}\{V \geq v_{Zt}\}$
to facilitate interpretation.
}
In dCDH, the group variable $G$ plays the role of the instrument $Z$ in this paper.
Accordingly, we relabel $G$ as $Z$ in the following discussion.

\par
The main difference between dCDH and this paper lies in the definition of the target parameter: we focus on the LATET, while dCDH focus on the switcher local average treatment effect on the treated (SLATET) defined below.
\begin{Def}
The switcher local average treatment effect on the treated (SLATET) is
\begin{align*}
SLATET &\equiv E[Y(1)-Y(0)|Z=1,T=1,D_{0}(0) < D_{1}(1)]\\
&=E[Y(1)-Y(0)|Z=1,T=1,SW],
\end{align*}
where we use the restriction $v_{10}=v_{00}$ in dCDH, which implies $D_{0}(1)=D_{0}(0)$.
\end{Def}
This parameter measures the treatment effects for units who belong to the exposed group ($Z=1$) and switch into treatment at time $T=1$ (switchers, SW).\par
Because the target parameters differ, the identifying assumptions underlying
Fuzzy DID designs also differ from those in DID-IV designs, though there are similarities as well.\footnote{Specifically, dCDH also impose Assumptions \ref{sec2as1}-\ref{sec2as4} and \ref{sec2asrelevance} in this paper.} The main differences are as follows. First, dCDH assume the stable treatment rate assumption in an unexposed group (see Assumption $2$ in dCDH), whereas we assume the parallel trends assumption in the treatment (Assumption \ref{sec2as5}). Second, dCDH assume the treatment participation equation (see Assumption $3$ in dCDH), which imposes the monotonicity with respect to time $T$ in addition to the monotonicity with respect to instrument $Z$ (which corresponds to Assumption \ref{sec2as3} in this paper). Finally, while we impose a parallel trends (PT) assumption on unexposed outcomes, dCDH impose a PT assumption on untreated outcomes (see Assumption $4$ in dCDH).\par
In Online Appendix~\ref{ApeE}, we first examine how these differences imply different restrictions on treatment adoption behavior across the two designs. We begin by describing the heterogeneity in treatment adoption behavior under the setting considered here. Specifically, we introduce the group variable $G^{T}=(D_{0}(0),D_{1}(0))$, which characterizes treatment adoption over time when the instrument is $Z=0$ in the second period. We define $G^{T}=(0,0) \equiv NT^{T}$ as time never-takers, $G^{T}=(1,1) \equiv AT^{T}$ as time always-takers, $G^{T}=(0,1) \equiv CM^{T}$ as time compliers, and $G^{T}=(1,0) \equiv DF^{T}$ as time defiers. Using the group variables $G^{Z}=(D_{1}(0),D_{1}(1))$ (which was introduced in Section \ref{sec2.1} for the panel data case) and $G^{T}$, we then partition units into eight types within each group, as summarized in Tables \ref{sec2.5.table1}-\ref{sec2.5.table2}.\par 
Next, we investigate which types are excluded by the identifying assumptions under DID-IV and Fuzzy DID, respectively. Specifically, Tables \ref{sec2.5.table1}–\ref{sec2.5.table2} and Tables \ref{sec2.5.table3}–\ref{sec2.5.table4} summarize all latent treatment adoption types under DID-IV and Fuzzy DID respectively, where the types painted in gray color are excluded by the identifying assumptions of each design. Here, in Table \ref{sec2.5.table3}, we exclude the $DF^{Z}$ and the $DF^{T}$ by the monotonicity assumptions with respect to instrument $Z$ and time $T$, which are implied by the treatment participation equation (see Lemma \ref{E.2.3.lemma1} in Online Appendix \ref{ApeE.2.3}).
\begin{table}[H]
\centering
\renewcommand{\arraystretch}{1.2}
\caption{Exposed group ($z=1$)}
\label{sec2.5.table1}
\begin{tabular*}{14cm}{p{7cm}c@{\hspace{1cm}}c}
\hline \hline
observed & \multicolumn{2}{c}{counterfactual} \\
$D_0(0)$\hspace{2mm}\text{or}\hspace{2mm}$D_1(1)$ & $D_1(0)=1$ & $D_1(0)=0$ \\
\hline
$D_0(0)=1, D_1(1)=1$ & $AT^Z\land AT^T$ & $CM^Z\land DF^T$ \\
$D_0(0)=1, D_1(1)=0$ & \cellcolor[gray]{0.8}$DF^Z\land AT^T$ & $NT^Z\land DF^T$ \\
$D_0(0)=0, D_1(1)=1$ & $AT^Z\land CM^T$ & $CM^Z\land NT^T$ \\
$D_0(0)=0, D_1(1)=0$ & \cellcolor[gray]{0.8}$DF^Z\land CM^T$ & $NT^Z\land NT^T$ \\
\hline
\end{tabular*}
\end{table}

\begin{table}[H]
\centering
\renewcommand{\arraystretch}{1.2}
\caption{Unexposed group ($z=0$)}
\label{sec2.5.table2}
\begin{tabular*}{14cm}{p{7cm}c@{\hspace{1cm}}c}
\hline \hline
observed & \multicolumn{2}{c}{counterfactual} \\
$D_0(0)$\hspace{2mm}\text{or}\hspace{2mm}$D_1(0)$ & $D_1(1)=1$ & $D_1(1)=0$ \\
\hline
$D_0(0)=1,D_1(0)=1$ & $AT^Z\land AT^T$ & \cellcolor[gray]{0.8}$DF^Z\land AT^T$ \\
$D_0(0)=1,D_1(0)=0$ & $CM^Z\land DF^T$ & $NT^Z\land DF^T$ \\
$D_0(0)=0,D_1(0)=1$ & $AT^Z\land CM^T$ & \cellcolor[gray]{0.8}$DF^Z\land CM^T$ \\
$D_0(0)=0,D_1(0)=0$ & $CM^Z\land NT^T$ & $NT^Z\land NT^T$ \\
\hline
\end{tabular*}
\\[5pt]
\begin{minipage}{0.95\textwidth}
\footnotesize
\textit{Notes}: These tables represent mutually exclusive and exhaustive types under DID-IV designs. The types painted in gray color are excluded by the monotonicity assumption with respect to instrument $Z$.
\end{minipage}
\end{table}\par
\begin{table}[H]
\centering
\renewcommand{\arraystretch}{1.2}
\caption{Exposed group ($z=1$)}
\label{sec2.5.table3}
\begin{tabular*}{14cm}{p{7cm}c@{\hspace{1cm}}c}
\hline \hline
observed & \multicolumn{2}{c}{counterfactual} \\
$D_0(0)$\hspace{2mm}\text{or}\hspace{2mm}$D_1(1)$ & $D_1(0)=1$ & $D_1(0)=0$ \\
\hline
$D_0(0)=1, D_1(1)=1$ & $AT^Z\land AT^T$ & \cellcolor[gray]{0.8}$CM^Z\land DF^T$ \\
$D_0(0)=1, D_1(1)=0$ & \cellcolor[gray]{0.8}$DF^Z\land AT^T$ & \cellcolor[gray]{0.8}$NT^Z\land DF^T$ \\
$D_0(0)=0, D_1(1)=1$ & $AT^Z\land CM^T$ & $CM^Z\land NT^T$ \\
$D_0(0)=0, D_1(1)=0$ & \cellcolor[gray]{0.8}$DF^Z\land CM^T$ & $NT^Z\land NT^T$ \\
\hline
\end{tabular*}
\end{table}

\begin{table}[H]
\centering
\renewcommand{\arraystretch}{1.2}
\caption{Unexposed group ($z=0$)}
\label{sec2.5.table4}
\begin{tabular*}{14cm}{p{7cm}c@{\hspace{1cm}}c}
\hline \hline
observed & \multicolumn{2}{c}{counterfactual} \\
$D_0(0)$\hspace{2mm}\text{or}\hspace{2mm}$D_1(0)$ & $D_1(1)=1$ & $D_1(1)=0$ \\
\hline
$D_0(0)=1,D_1(0)=1$ & $AT^Z\land AT^T$ & \cellcolor[gray]{0.8}$DF^Z\land AT^T$ \\
$D_0(0)=1,D_1(0)=0$ & \cellcolor[gray]{0.8}$CM^Z\land DF^T$ & \cellcolor[gray]{0.8}$NT^Z\land DF^T$ \\
$D_0(0)=0,D_1(0)=1$ & \cellcolor[gray]{0.8}$AT^Z\land CM^T$ & \cellcolor[gray]{0.8}$DF^Z\land CM^T$ \\
$D_0(0)=0,D_1(0)=0$ & $CM^Z\land NT^T$ & $NT^Z\land NT^T$ \\
\hline
\end{tabular*}
\\[5pt]
\begin{minipage}{0.95\textwidth}
\footnotesize
\textit{Notes}: These tables represent mutually exclusive and exhaustive types under Fuzzy DID designs. The types painted in gray color are excluded by the identifying assumptions in dCDH.
\end{minipage}
\end{table}\par
By comparing Tables \ref{sec2.5.table1}–\ref{sec2.5.table2} with
Tables \ref{sec2.5.table3}–\ref{sec2.5.table4}, we obtain two implications.
First, the restrictions imposed by Fuzzy DID designs are stronger
than those imposed by DID-IV designs.
Second, while the restrictions under DID-IV designs are symmetric across
exposed and unexposed groups, those under Fuzzy DID designs are asymmetric.\par
Building on Tables \ref{sec2.5.table3}–\ref{sec2.5.table4}, we next clarify the difference in target parameter between the two designs. Specifically, we show that dCDH's target parameter, the SLATET, can be expressed as a weighted average of two distinct causal parameters. One parameter captures the treatment effects for units of type $CM^Z \land NT^{T}$, a subpopulation of the compliers $CM^{Z}$, while the other captures the treatment effects among the type $AT^{Z} \land CM^T$ (case (i)).\par
Importantly, this decomposition depends on the direction of the two monotonicity assumptions. If $DF^{Z}$ and $CM^{T}$ are excluded by the two monotonicity conditions in dCDH, the SLATET identifies the treatment effects
for units of type $CM^{Z}\land NT^{T}$ (case~(ii)).
If $CM^{Z}$ and $DF^{T}$ are excluded, the SLATET identifies the treatment effects
for units of type $AT^{Z}\land CM^{T}$ (case~(iii)),
as formally established in Theorem \ref{E.3.Theorem1} in Appendix \ref{ApeE.3}.
\par
This decomposition result has several important implications. First, empirical researchers should specify the direction
of the two monotonicity conditions {\it ex ante} if they wish to understand which latent type’s causal
effect is identified by the SLATET. Second, the SLATET may fail to be the policy-relevant parameter (\cite{Heckman2001-ur}), as it is generally contaminated by treatment effect for units whose treatment status is affected by time but not affected by the instrument (see cases (i) and (iii)). Finally, even when the two monotonicity conditions ensure that the SLATET is policy relevant, it identifies the
treatment effects for a narrower population than the LATET (see case (ii)).\par
Finally, we explain why the role of the Wald–DID estimand differs between Fuzzy DID and DID-IV designs. In DID-IV, we view the Wald–DID estimand as a natural estimand for identifying the LATET. By contrast, dCDH point out that, under Fuzzy DID, the Wald-DID estimand requires the stable treatment effect assumption to identify the SLATET.\footnote{
dCDH also show that the Wald--DID estimand additionally requires
homogeneity of the SLATET between the exposed and unexposed groups
in order to identify the SLATET in the exposed group
when the stable treatment rate condition in the unexposed group is not satisfied.
In earlier work, \cite{Blundell565} make a similar argument
in Subsection~F.2 of Section~III.
In Remark~\ref{D.3.Remark1} of Section~D.3 in Online Appendix~D, however,
we show that this homogeneity condition
cannot be straightforwardly interpreted as requiring
homogeneous treatment effects between the two groups.} We demonstrate that this difference arises because Fuzzy DID and DID-IV rely on different types of the PT assumption. Given this, we also discuss which PT assumption is more suitable for DID-IV settings.\par
Specifically, we first show that the Wald-DID estimand requires the stable treatment effect assumption under Fuzzy DID because dCDH impose the PT assumption on untreated outcomes. Recall that in DID-IV settings, units are allowed to adopt the treatment without instrument during the two periods. As a result, the PT assumption on untreated outcomes is insufficient to capture the average time trends of the outcome even in the unexposed group. We show that this fact leads dCDH to impose the stable treatment effect assumption for the Wald-DID estimand to identify the SLATET. By contrast, this issue does not arise in DID-IV designs because we impose the PT
assumption on unexposed outcomes, which is sufficient to capture the average time
trends of the outcome for the unexposed group.\par
Next, we argue that the PT assumption on unexposed outcomes is more suitable for DID-IV settings for three reasons. First, the PT assumption on unexposed outcomes
is more consistent with the source of identifying variation in DID-IV settings
than the PT assumption on untreated potential outcomes. This is because the latter relies on variation in treatment,
while the former exploits variation in instrument. Second, in most applications of DID-IV methods, we cannot impute the untreated potential outcomes in general (e.g., \cite{Duflo2001-nh}, \cite{Black2005-aw}).
For instance, in \cite{Duflo2001-nh}, the PT assumption on untreated outcomes would
require the data to include units with zero educational attainment, which is
unrealistic in practice. Finally, in DID-IV settings, while the PT assumption on unexposed outcomes is indirectly testable (see Section \ref{sec4.3} in this paper), the PT assumption on untreated outcomes is difficult to assess using pre-exposed period data, as some units may already adopt the treatment before period $0$.\par
We conclude this section by providing guidance for empirical researchers choosing between DID-IV and Fuzzy DID designs in practice. First, when the goal is to identify a policy-relevant parameter, DID-IV designs are more attractive. Under Fuzzy DID designs, the target parameter may not be policy relevant in general. Second, when researchers are interested in identifying the SLATET, they should carefully assess the plausibility of the restrictions on treatment adoption behavior imposed by Fuzzy DID designs. If these restrictions appear questionable, researchers may instead target the LATET and adopt DID-IV designs, under which the only restriction on treatment adoption behavior is monotonicity with respect to the instrument. Finally, if researchers wish to assess the plausibility of the PT assumption, DID-IV designs are more appealing because the PT assumption imposed under Fuzzy DID is generally difficult to test using the pre-exposed data.

\section{DID-IV in multiple time periods}\label{sec3}
We now extend the $2 \times 2$ DID-IV design to multiple period settings with the staggered adoption of the instrument across units (\cite{Black2005-aw}; \cite{Bhuller2013-ki} \cite{Lundborg2014-gm}; and \cite{Meghir2018-bk}).
We call it a staggered DID-IV design, and establish the target parameter and identifying assumptions.

\subsection{Set up}\label{sec3.1}
We introduce the notation we use throughout Section \ref{sec3} to Section \ref{sec5}. We consider a panel data setting with $T$ periods and $N$ units. For each $i \in \{1,\dots N\}$ and $t \in \{1,\dots,T\}$, let $Y_{i,t}$ denote the outcome, $D_{i,t} \in \{0,1\}$ denote the treatment status, and $Z_{i,t}\in \{0,1\}$ denote the instrument status. Let $D_i=(D_{i,1},\dots,D_{i,T})$ and $Z_i=(Z_{i,1},\dots,Z_{i,T})$ denote the path of the treatment and the instrument for unit $i$, respectively. Throughout Section \ref{sec3} to Section \ref{sec5}, we assume that $\{Y_{i,t},D_{i,t},Z_{i,t}\}_{t=1}^{T}$ are i.i.d. \par
We make the following assumption about the assignment process of the instrument.\par
\begin{Assumption}[Staggered adoption for $Z_{i,t}$]
\label{sec3as1}
$Z_{i,1}=0$ for all $i$. For $s < t$, $Z_{i,s} \leq Z_{i,t}$ where $s,t \in \{1,\dots T\}$.
\end{Assumption}
Assumption \ref{sec3as1} requires that no one is exposed to the instrument in time $t=1$ and once units start getting exposed to the instrument, units remain exposed to that instrument. In the DID literature, several studies make a similar assumption for the treatment and call it the “staggered treatment adoption” (e.g. \cite{Athey2022-uo}; \cite{Callaway2021-wl}; and \cite{Sun2021-rp}).\par
Given Assumption \ref{sec3as1}, one can uniquely characterize one's instrument path by the initial exposure date of the instrument, denoted as $E_{i}=\min\{t: Z_{i,t}=1\}$. If unit $i$ is not exposed to the instrument for all time periods, we define $E_{i}=\infty$. Based on the initial exposure period $E_i$, one can uniquely partition units into mutually exclusive and exhaustive cohorts $e$ for $e \in \{2,3,\dots, T,\infty\}$. Let $E_{i,e}=\mathbf{1}\{E_i=e\}$ denote the binary indicator that takes one if unit $i$ belongs to cohort $e$. Let $\bar{e}=\max_{i=1,\dots,n}E_i$ denote the largest cohort value in the dataset. Let $\mathcal{E}=\mathcal{S}(E_i)\setminus \{\bar{e}\} \subseteq \{2,3,\dots, T\}$ denote the support of $E_i$ excluding $\bar{e}$.\par
Similar to the two periods setting in Subsection \ref{sec2.1}, we allow the general adoption process for the treatment: the treatment can potentially turn on/off repeatedly over time. \cite{De_Chaisemartin2020-dw} and \cite{Imai2021-dn} also consider the same setting in the recent DID literature.\par
Next, we introduce the potential outcomes framework in multiple time periods. Let $Y_{i,t}(d,z)$ denote the potential outcome in period $t$ when unit $i$ receives the treatment path $d \in \mathcal{S}(D)$ and the instrument path $z \in \mathcal{S}(Z)$. Similarly, let $D_{i,t}(z)$ denote the potential treatment status in period $t$ when unit $i$ receives the instrument path $z \in \mathcal{S}(Z)$.\par
Assumption \ref{sec3as1} allows us to rewrite $D_{i,t}(z)$ by the initial exposure date $E_i=e$. Let $D_{i,t}^{e}$ denote the potential treatment status in period $t$ if unit $i$ is first exposed to the instrument in period $e$. Let $D_{i,t}^{\infty}$ denote the potential treatment status in period $t$ if unit $i$ is never exposed to the instrument. We call $D_{i,t}^{\infty}$ the “never exposed treatment”. Since the adoption date of the instrument uniquely pins down one's instrument path, we can write the observed treatment status $D_{i,t}$ for unit $i$ in period $t$ as
\begin{align*}
D_{i,t}=D_{i,t}^{\infty}+\sum_{2 \leq e \leq T}(D_{i,t}^{e}-D_{i,t}^{\infty})\cdot \mathbf{1}\{E_i=e\}.
\end{align*}\par
We define the effect of an instrument on treatment for unit $i$ in period $t$ as the difference between the observed treatment status to the never exposed treatment status: $D_{i,t}-D_{i,t}^{\infty}$. We refer to $D_{i,t}-D_{i,t}^{\infty}$ as the individual exposed effect in the first stage.\footnote{In the DID literature, \cite{Callaway2021-wl} and \cite{Sun2021-rp} define the effect of a treatment on an outcome in the same fashion.}\par
Next, we introduce the group variable that describes the type of unit $i$ in period $t$, based on the reaction of potential treatment choices in period $t$ to the instrument path $z$. Let $G_{i,e,t} \equiv (D_{i,t}^{\infty},D_{i,t}^{e})$ ($t \geq e$) be the group variable in period $t$ for unit $i$ and the initial exposure date $e$. Following the terminology in section \ref{sec2.1}, we define $G_{i,e,t}=(0,0) \equiv NT_{e,t}$ as the never-takers, $G_{i,e,t}=(1,1) \equiv AT_{e,t}$ as the always-takers, $G_{i,e,t}=(0,1) \equiv CM_{e,t}$ as the compliers and $G_{i,e,t}=(1,0) \equiv DF_{e,t}$ as the defiers in period $t$ and the initial exposure date $e$.\par
Finally, we make a no carryover assumption on potential outcomes $Y_{i,t}(d,z)$. 
\begin{Assumption}[No carryover assumption in multiple time periods]
\label{sec3as2}
\begin{align*}
\forall z \in \mathcal{S}(Z), \forall d\in \mathcal{S}(D), \forall t\in \{1,\dots,T\},Y_{i,t}(d,z)=Y_{i,t}(d_t,z).
\end{align*}
\end{Assumption}
Henceforth, we keep Assumption \ref{sec3as1} and Assumption \ref{sec3as2}. In the next section, we define the target parameter in staggered DID-IV designs.

\subsection{Target parameter in staggered DID-IV designs}\label{sec3.2}
In staggered DID-IV designs, our target parameter is the cohort specific local average treatment effect on the treated (CLATT) defined below.
\begin{Def}
The cohort specific local average treatment effect on the treated (CLATT) at a given relative period $l$ from the initial adoption of the instrument is 
\begin{align*}
CLATT_{e,e+l}&=E[Y_{i,e+l}(1)-Y_{i,e+l}(0)|E_i=e, D_{i,e+l}^{e} > D_{i,e+l}^{\infty}]\\
&=E[Y_{i,e+l}(1)-Y_{i,e+l}(0)|E_i=e,CM_{e,e+l}].
\end{align*}
\end{Def}
Each CLATT is a natural generalization of the LATET in Subsection \ref{sec2.2} and suitable for the setting of the staggered instrument adoption. This parameter measures the treatment effects at a given relative period $l$ from the initial exposure date $E_i=e$, for those who belong to cohort $e$, and are the compliers $CM_{e,e+l}$ who are induced to treatment by instrument in period $e+l$. Each CLATT can potentially vary across cohorts and over time because it depends on cohort $e$, relative period $l$, and the compliers $CM_{e,e+l}$.\par

\subsection{Identification assumptions in staggered DID-IV designs}\label{sec3.3}
In this subsection, we establish the identification assumptions in staggered DID-IV designs.\par 
In staggered DID-IV designs, we first consider the following estimand to identify each $CLATT_{e,e+l}$:
\begin{align*}
    w^{DID}_{e,l}=\frac{E[Y_{i,e+l}-Y_{i,e-1}|E_i=e]-(E[Y_{i,e+l}-Y_{i,e-1}|E_i=\infty])}{E[D_{i,e+l}-D_{i,e-1}|E_i=e]-(E[D_{i,e+l}-D_{i,e-1}|E_i=\infty])},
\end{align*}
for $e \in \mathcal{E}$ and $l \in \{0,\dots,T-e\}$. Note that this estimand is the Wald-DID estimand, where the pre-exposed period is $e-1$ and the control group is the never exposed cohort. Here, we assume that the largest cohort value in the dataset is $\bar{e}=\infty$.\par
We consider the following identification assumptions for each $w^{DID}_{e,l}$ to capture the $CLATT_{e,e+l}$.\par
\begin{Assumption}[Exclusion restriction in multiple time periods]
\label{sec3as3}
\begin{align*}
\forall z \in \mathcal{S}(Z),\forall d \in \mathcal{S}(D),\forall t \in \{1,\dots,T\}, Y_{i,t}(d,z)=Y_{i,t}(d).
\end{align*}
\end{Assumption}
Assumption \ref{sec3as3} extends the exclusion restriction in two time periods (Assumption \ref{sec2as2}) to multiple period settings. This assumption requires that the instrument path does not directly affect the potential outcome for all time periods and its effects are only through treatment.\par
Given Assumption \ref{sec3as2} and Assumption \ref{sec3as3}, we can write the potential outcome $Y_{i,t}(d,z)$ as $Y_{i,t}(d_t)=D_{i,t}Y_{i,t}(1)+(1-D_{i,t})Y_{i,t}(0)$. Following Subsection \ref{sec2.3}, we introduce the potential outcomes in period $t$ if unit $i$ is assigned to the instrument path $z \in \mathcal{S}(Z)$.
\begin{align*}
Y_{i,t}(D_{i,t}(z)) \equiv D_{i,t}(z)Y_{i,t}(1)+(1-D_{i,t}(z))Y_{i,t}(0).
\end{align*}
Since the initial exposure date $E_i$ completely characterizes the instrument path, we can write the potential outcomes for cohort $e$ and cohort $\infty$ as $Y_{i,t}(D_{i,t}^{e})$ and $Y_{i,t}(D_{i,t}^{\infty})$, respectively. The potential outcome $Y_{i,t}(D_{i,t}^{e})$ represents the outcome status in period $t$ if unit $i$ is first exposed to the instrument in period $e$ and the potential outcome $Y_{i,t}(D_{i,t}^{\infty})$ represents the outcome status in period $t$ if unit $i$ is never exposed. We refer to $Y_{i,t}(D_{i,t}^{\infty})$ as the “never exposed outcome”.\par
\begin{Assumption}[Monotonicity assumption in multiple time periods]
\label{sec3as4}
\begin{align*}
Pr(D_{i,e+l}^{e} \geq D_{i,e+l}^{\infty})=1\hspace{2mm}\text{or}\hspace{2mm}Pr(D_{i,e+l}^{e} \leq D_{i,e+l}^{\infty})=1\hspace{2mm}\text{for all}\hspace{2mm}e\in \mathcal{E}\hspace{2mm}\text{and all}\hspace{2mm}l \geq 0.
\end{align*}
\end{Assumption}
This assumption requires that the instrument path affects the treatment adoption behavior in a monotone way for all relative periods after the initial exposure date $E_i=e$. Recall that we define $D_{i,t}-D_{i,t}^{\infty}$ to be the effect of an instrument on treatment for unit $i$ in period $t$. Assumption \ref{sec3as4} requires that the individual exposed effect in the first stage should be non-negative or non-positive during the periods after the initial exposure to the instrument for all $i$. This assumption implies that the group variable $G_{i,e,t} \equiv (D_{i,t}^{\infty},D_{i,t}^{e})$ can take three values with non-zero probability for all $e \in \mathcal{E}$ and all $t \geq e$. Hereafter, we consider the type of the monotonicity assumption that rules out the existence of the defiers $DF_{e,t}$ for all $t \geq e$ in any cohort $e \in \mathcal{E}$.\par
\begin{Assumption}[No anticipation in the first stage]
\label{sec3as5}
\begin{align*}
D_{i,e+l}^{e}=D_{i,e+l}^{\infty}\hspace{3mm}\text{for all}\hspace{2mm}e\in \mathcal{E}\hspace{2mm}\text{and all}\hspace{2mm}l<0.
\end{align*}
\end{Assumption}
Assumption \ref{sec3as5} requires that potential treatment choices  in any $l$ period before the initial exposure to the instrument is equal to the never exposed treatment. This assumption is a natural generalization of Assumption \ref{sec2as4} to multiple period settings and restricts the anticipatory behavior before the initial exposure to the instrument.\par
\begin{Assumption}[Parallel trends assumption in the treatment based on a never exposed cohort]
\label{sec3as6}
\begin{align*}
&\text{For each}\hspace{2mm}e\in \mathcal{E}\hspace{2mm}\text{and}\hspace{2mm}t\in \{2,\dots,T\}\hspace{2mm}\text{such that}\hspace{2mm}t \geq e,\\
&E[D_{i,t}^{\infty}-D_{i,t-1}^{\infty}|E_i=e]=E[D_{i,t}^{\infty}-D_{i,t-1}^{\infty}|E_i=\infty].
\end{align*}
\end{Assumption}
Assumption~\ref{sec3as6} is a parallel trends assumption in the treatment based on the never exposed cohort.
It requires that, in the absence of exposure to the instrument,
the average evolution of the treatment would have followed the same path
between cohort $e$ and the never exposed cohort $\infty$. Assumption \ref{sec3as6} is analogous to that of \cite{Callaway2021-wl} and \cite{Sun2021-rp} in DID designs; these studies impose the same type of parallel trends assumption on untreated potential outcomes. 
\begin{Assumption}[Parallel trends assumption in the outcome based on a never exposed cohort]
\label{sec3as7}
\begin{align*}
&\text{For each}\hspace{2mm}e \in \mathcal{E}\hspace{2mm}\text{and}\hspace{2mm}t\in \{2,\dots,T\}\hspace{2mm}\text{such that}\hspace{2mm}t \geq e,\\
&E[Y_{i,t}(D_{i,t}^{\infty})-Y_{i,t-1}(D_{i,t-1}^{\infty})|E_i=e]=E[Y_{i,t}(D_{i,t}^{\infty})-Y_{i,t-1}(D_{i,t-1}^{\infty})|E_i=\infty].
\end{align*}
\end{Assumption}
Assumption \ref{sec3as7} is a parallel trends assumption in the outcome based on the never exposed cohort. It requires that, in the absence of exposure to the instrument,
the expected evolution of the outcome under no exposure to the instrument
would have been the same, on average,
between cohort $e$ and the never exposed cohort $\infty$.\par 
\begin{Assumption}[Relevance condition based on a never exposed cohort]
\label{sec3asrelevance}
\begin{align*}
&\text{For each}\hspace{2mm}e \in \mathcal{E}\hspace{2mm}\text{and}\hspace{2mm}l\in \{0,\dots,T-e\},\\
&E[D_{i,e+l}-D_{i,e-1}|E_i=e]-(E[D_{i,e+l}-D_{i,e-1}|E_i=\infty]) > 0.
\end{align*}
\end{Assumption}
Assumption \ref{sec3asrelevance} is a relevance condition in multiple period settings, ensuring that the Wald-DID estimand $w^{DID}_{e,l}$ is well defined.\par
The following theorem shows that under Assumptions~\ref{sec3as1}--\ref{sec3asrelevance}, each Wald-DID estimand $w^{DID}_{e,l}$ identifies the corresponding $CLATT_{e,e+l}$.
\begin{Theorem}
    \label{sec3.3.theorem1}
    If Assumptions~\ref{sec3as1}--\ref{sec3asrelevance} hold, the Wald-DID estimand $w^{DID}_{e,l}$ identifies the corresponding $CLATT_{e,e+l}$:
    \begin{align*}
        w^{DID}_{e,l}=E[Y_{i,e+l}(1)-Y_{i,e+l}(0)|E_i=e,CM_{e,e+l}],
    \end{align*}
    for each $e \in \mathcal{E}$ and $l \in \{0,\dots,T-e\}$.
    \begin{proof}
        See Appendix.
    \end{proof}
\end{Theorem}
In some applications, however, there may be no never exposed cohort,
or its sample size may be too small to serve as a reliable control group.
Moreover, the parallel trends assumptions based on the never exposed cohort
may be viewed as less credible in certain empirical settings.
In such cases, it is natural to instead use the not-yet-exposed cohorts
as the control group.
Accordingly, we consider the following alternative Wald-DID estimand:
\begin{align*}
    w^{DID,ny}_{e,l}=\frac{E[Y_{i,e+l}-Y_{i,e-1}|E_i=e]-(E[Y_{i,e+l}-Y_{i,e-1}|Z_{i,e+l}=0,E_{i,e}=0])}{E[D_{i,e+l}-D_{i,e-1}|E_i=e]-(E[D_{i,e+l}-D_{i,e-1}|Z_{i,e+l}=0,E_{i,e}=0])},
\end{align*}
for $e \in \mathcal{E}$ and $l \in \{0,\dots,T-e\}$ such that $l < \bar{e}-e$. In this Wald-DID estimand, the control group is the not-yet-exposed cohorts, that is, the cohorts not exposed to the instrument by time $t=e+l$.\par
If we adopt the above estimand $w^{DID,ny}_{e,l}$, we can replace Assumptions \ref{sec3as6}--\ref{sec3asrelevance} with Assumptions \ref{sec3as9}--\ref{sec3as11} below.\par
\begin{Assumption}[Parallel trends assumption in the treatment based on not-yet-exposed cohorts]
\label{sec3as9}
\begin{gather*}
\text{For each}\hspace{2mm}e\in \mathcal{E}\hspace{2mm}\text{and}\hspace{2mm}\text{each}\hspace{2mm}t \in \{2,\dots,T\}\hspace{2mm}\text{such that}\hspace{2mm}e \leq t < \bar{e},\\
E[D_{i,t}^{\infty}-D_{i,t-1}^{\infty}|E_i=e]=E[D_{i,t}^{\infty}-D_{i,t-1}^{\infty}|Z_{i,t}=0,E_{i,e}=0].
\end{gather*}
\end{Assumption}
\begin{Assumption}[Parallel trends assumption in the outcome based on not-yet-exposed cohorts]
\label{sec3as10}
\begin{gather*}
\text{For each}\hspace{2mm}e\in \mathcal{E}\hspace{2mm}\text{and}\hspace{2mm}\text{each}\hspace{2mm}t \in \{2,\dots,T\}\hspace{2mm}\text{such that}\hspace{2mm}e \leq t < \bar{e},\\
E[Y_{i,t}(D_{i,t}^{\infty})-Y_{i,t-1}(D_{i,t-1}^{\infty})|E_i=e]=E[Y_{i,t}(D_{i,t}^{\infty})-Y_{i,t-1}(D_{i,t-1}^{\infty})|Z_{i,t}=0,E_{i,e}=0].
\end{gather*}
\end{Assumption}
\begin{Assumption}[Relevance condition based on not-yet-exposed cohorts]
\label{sec3as11}
\begin{align*}
&\text{For each}\hspace{2mm}e\in \mathcal{E}\hspace{2mm}\text{and}\hspace{2mm}l\in \{0,\dots,T-e\}\hspace{2mm}\text{such that}\hspace{2mm}l < \bar{e}-e,\\
&E[D_{i,e+l}-D_{i,e-1}|E_i=e]-(E[D_{i,e+l}-D_{i,e-1}|Z_{i,e+l}=0,E_{i,e}=0]) > 0.
\end{align*}\par
\end{Assumption}
The following theorem shows that under Assumptions~\ref{sec3as1}--\ref{sec3as5} and \ref{sec3as9}--\ref{sec3as11},
each Wald-DID estimand $w^{DID,ny}_{e,l}$ identifies the corresponding $CLATT_{e,e+l}$.
\begin{Theorem}
    \label{sec3.3.theorem2}
    If Assumptions~\ref{sec3as1}--\ref{sec3as5} and \ref{sec3as9}--\ref{sec3as11} hold, the Wald-DID estimand $w^{DID,ny}_{e,l}$ identifies the corresponding $CLATT_{e,e+l}$:
    \begin{align*}
        w^{DID,ny}_{e,l}=E[Y_{i,e+l}(1)-Y_{i,e+l}(0)|E_i=e,CM_{e,e+l}],
    \end{align*}
    for each $e \in \mathcal{E}$ and $l \in \{0,\dots,T-e\}$ such that $l < \bar{e}-e$.
    \begin{proof}
        See Appendix.
    \end{proof}
\end{Theorem}
\section{Extensions}\label{sec4}
This section contains extensions to non-binary, ordered treatments and repeated cross sections. For more details and proofs, see online Appendix Section \ref{ApeB}.
\subsubsection*{Non-binary, ordered treatment}\label{sec4.1}
Up to now, we have considered only the case of a binary treatment. However the same idea can be applied when treatment takes a finite number of ordered values: $D_{i,t} \in \{0,1,\dots,J\}$. When only two periods exist and treatment is non-binary, our target parameter is the average causal response on the treated (ACRT) defined below.
\begin{Def}
The average causal response on the treated (ACRT) is
\begin{align*}
ACRT \equiv \sum_{j=1}^{J}w_j \cdot E[Y_{i,1}(j)-Y_{i,1}(j-1)|D_{i,1}((0,1)) \geq j > D_{i,1}((0,0)), E_i=1].
\end{align*}
where the weight $w_j$ is:
\begin{align*}
w_j=\frac{Pr(D_{i,1}((0,1)) \geq j > D_{i,1}((0,0))|E_i=1)}{\sum_{j=1}^{J} Pr(D_{i,1}((0,1)) \geq j > D_{i,1}((0,0))|E_i=1)}.
\end{align*}
\end{Def}
The ACRT is a weighted average of the effect of one unit increase  in the treatment on the outcome, for those who belong to an exposed group and are induced to increase the treatment in period $t=1$ by instrument. The ACRT is similar to the average causal response (ACR) considered in \cite{Angrist1995-ij}, with the difference that each weight $w_j$ and the associated causal parameter in the ACRT are conditional on $E_i=1$.\par
Online Appendix Section \ref{ApeB1} shows that if we have a non-binary, ordered treatment, the Wald-DID estimand captures the ACRT under the same assumptions in Theorem \ref{sec2thm1}.\par
With a non-binary, ordered treatment in staggered DID-IV designs, our target parameter is the cohort specific average causal response on the treated (CACRT), a natural generalization of the ACRT. The estimand and the associated identifying assumptions are the same as in Subsection \ref{sec3.3}. 
\begin{Def} The cohort specific average causal response on the treated (CACRT) at a given relative period $l$ from the initial adoption of the instrument is
\begin{align*}
CACRT_{e,e+l} \equiv \sum_{j=1}^{J}w^{e}_{e+l,j} \cdot E[Y_{i,e+l}(j)-Y_{i,e+l}(j-1)|E_i=e, D_{i,e+l}^{e} \geq j > D_{i,e+l}^{\infty}]
\end{align*}
where the weight $w^{e}_{e+l,j}$ is:
\begin{align*}
w^{e}_{e+l,j}=\frac{Pr(D_{i,e+l}^{e} \geq j > D_{i,e+l}^{\infty}|E_i=e)}{\sum_{j=1}^{J} Pr(D_{i,e+l}^{e} \geq j > D_{i,e+l}^{\infty}|E_i=e)}.
\end{align*}
\end{Def}
\subsubsection*{Repeated cross sections}\label{sec4.3}
In some applications, researchers have only access to repeated cross section data.\footnote{It also includes the case that researchers use the cross section data, and exploit a policy shock across cohorts as an instrument for treatment as in \cite{Duflo2001-nh}.} Online Appendix Section \ref{ApeB2} presents the identification assumptions in DID-IV designs under repeated cross section settings.


\section{Estimation and inference}\label{sec5}
In this section, we propose a credible estimation method in staggered DID-IV designs that is robust to treatment effect heterogeneity. First, we propose a simple regression-based method for estimating each CLATT. Following \cite{Callaway2021-wl}, we then propose a weighting scheme to construct the summary causal parameters from each CLATT. We also discuss the pre-trends tests for checking the plausibility of parallel trends assumptions in DID-IV designs.\par
\begin{Remark}
In practice, when researchers implicitly rely on a staggered DID-IV design, they commonly implement this design via two-way fixed instrumental variable (TWFEIV) regressions (e.g., \cite{Johnson2019-kb}, \cite{Lundborg2014-gm}, \cite{Black2005-aw}, \cite{Akerman2015-hh}, and \cite{Bhuller2013-ki}):
\begin{align*}
    &Y_{i,t}=\phi_{i.}+\lambda_{t.}+\beta_{IV} D_{i,t}+v_{i,t},\\
    &D_{i,t}=\gamma_{i.}+\zeta_{t.}+\pi Z_{i,t}+\eta_{i,t}.
\end{align*}\par
In companion paper (\cite{Miyaji2023-tw}), however, we show that in more than two periods, the TWFEIV estimand potentially fails to summarize the treatment effects in the presence of heterogeneous treatment effects. Specifically, we show that the TWFEIV estimand is a weighted average of all possible CLATTs under staggered DID-IV designs, but some weights can be negative if the effect of the instrument on the treatment or the outcome is not stable over time.\par 
\end{Remark}
\subsection{Stacked two stage least squares regression}\label{sec4.1}
We propose a regression-based method to consistently estimate our target parameter in staggered DID-IV designs. Our estimation method consists of two steps.
\begin{Step}
\end{Step}
We create the data sets for each $CLATT_{e,e+l}$. Each data set includes the units of time $t=e-1$ and $t=e+l$, who are either in cohort $e \in \mathcal{E}$ or in the set of some unexposed cohorts, $U$ ($e \notin U$). If we adopt the parallel trends assumption based on a never-exposed cohort (Assumptions \ref{sec3as6}--\ref{sec3as7}), we can set $U=\{\infty\}$. In this case, we can estimate $CLATT_{e,e+l}$ for all $e \in \mathcal{E}$ and $l \in \{0,\dots,T-e\}$. If we adopt the parallel trends assumption based on not-yet-exposed cohorts (Assumptions \ref{sec3as9}--\ref{sec3as10}), we can set $U=\{e' \in \mathcal{S}(E_i) : e' > e+l\}$. In this case, we can estimate $CLATT_{e,e+l}$ for all $e \in \mathcal{E}$ and $l \in \{0,\dots,T-e\}$ such that $l < \bar{e}-e$.\par 
\begin{Step}
\end{Step}
In each data set, we run the following IV regression and obtain the IV estimator $\hat{\beta}^{e,l}_{IV}$: 
\begin{align*}
Y_{i,t}=\beta^{e,l}_{0}+\beta^{e,l}_{1}\mathbf{1}\{E_i=e\}+\beta^{e,l}_{2}\mathbf{1}\{T_i=e+l\}+\beta^{e,l}_{IV}D_{i,t}+\epsilon^{e,l}_{i,t}.
\end{align*}
The first stage regression is:
\begin{align*}
D_{i,t}
=\delta^{e,l}_{0}
+\delta^{e,l}_{1}\mathbf{1}\{E_i=e\}
+\delta^{e,l}_{2}\mathbf{1}\{T_i=e+l\}
+\delta^{e,l}_{3}\mathbf{1}\{E_i=e\}\mathbf{1}\{T_i=e+l\}
+\eta^{e,l}_{i,t}.
\end{align*}
where the group indicator $\mathbf{1}\{E_i=e\}$ and the post-period indicator $\mathbf{1}\{T_i=e+l\}$ are the included instruments and the interaction of the two is the excluded instrument.\par 
Formally, our proposed estimator $\widehat{CLATT}_{e,e+l} \equiv \hat{\beta}^{e,l}_{IV}$ takes the following form:
\begin{align*}
\widehat{CLATT}_{e,e+l} \equiv \hat{\beta}^{e,l}_{IV}= \frac{\hat{\alpha}^{e,l}}{\hat{\pi}^{e,l}},
\end{align*}
where $\hat{\alpha}^{e,l}$ and $\hat{\pi}^{e,l}$ are:
\begin{align*}
\hat{\alpha}^{e,l}&=\frac{E_N[(Y_{i,e+l}-Y_{i,e-1})\cdot\mathbf{1}\{E_i=e\}]}{E_N[\mathbf{1}\{E_i=e\}]}-\frac{E_N[(Y_{i,e+l}-Y_{i,e-1})\cdot\mathbf{1}\{E_i \in U\}]}{E_N[\mathbf{1}\{E_i \in U\}]}\\
&\equiv \hat{\alpha}_{e,l}^1-\hat{\alpha}_{e,l}^2.\\
\hat{\pi}^{e,l}&=\frac{E_N[(D_{i,e+l}-D_{i,e-1})\cdot\mathbf{1}\{E_i=e\}]}{E_N[\mathbf{1}\{E_i=e\}]}-\frac{E_N[(D_{i,e+l}-D_{i,e-1})\cdot\mathbf{1}\{E_i \in U\}]}{E_N[\mathbf{1}\{E_i \in U\}]}\\
&\equiv \hat{\pi}_{e,l}^1-\hat{\pi}_{e,l}^2.
\end{align*}
Here, $E_N[\cdot]$ is the sample analog of the conditional expectation. From the estimation procedure, we call this a stacked two-stage least squares (STS) estimator.\footnote{Our STS estimator is related to the DID estimators in staggered DID designs proposed by \cite{Callaway2021-wl} and \cite{Sun2021-rp}. The DID estimator of the treatment and the outcome in our STS estimator corresponds to that of \cite{Sun2021-rp}, and coincides with that of \cite{Callaway2021-wl} for the case when no covariates exist and never treated units are used as a control group. Their proposed methods avoid the issue of TWFE estimators in staggered DID designs, while our STS estimator avoids the issue of TWFEIV estimators in staggered DID-IV designs.}\par
Theorem \ref{sec4.1thm1} below guarantees the validity of our STS estimator.
\begin{Theorem}
\label{sec4.1thm1}
\begin{itemize}
     \item [(i)] Suppose Assumptions~\ref{sec3as1}--\ref{sec3asrelevance} hold. Then, the STS estimator $\widehat{CLATT}_{e,e+l}$ is consistent and asymptotically normal.
     \begin{align*}
\sqrt{n}(\widehat{CLATT}_{e,e+l}-CLATT_{e,e+l}) \xrightarrow{d} \mathcal{N}(0,V(\psi_{i,e,l})).
\end{align*}
    \item [(ii)] Suppose that Assumptions~\ref{sec3as1}--\ref{sec3as5} and \ref{sec3as9}--\ref{sec3as11} hold. Then, the STS estimator $\widehat{CLATT}_{e,e+l}$ is consistent and asymptotically normal.
\begin{align*}
\sqrt{n}(\widehat{CLATT}_{e,e+l}-CLATT_{e,e+l}) \xrightarrow{d} \mathcal{N}(0,V(\psi_{i,e,l})).
\end{align*}
\end{itemize}
\end{Theorem} 
From Theorem \ref{sec4.1thm1}, we can also construct the standard error of the STS estimator, using the sample analogue of the asymptotic variance $V(\psi_{i,e,l})$. 
\begin{Remark}
Our estimation procedure is the same if we have a non-binary, ordered treatment or use repeated cross section data. Online Appendix Section \ref{ApeC} presents the influence function of the STS estimator in repeated cross section settings.\par
\end{Remark}

\subsection{Weighting scheme}\label{sec4.2}
In this subsection, we explain how one can construct the summary causal parameters from each CLATT in staggered DID-IV designs, based on the weighting scheme proposed by \cite{Callaway2021-wl}.\par 
We consider the following weighting scheme as in \cite{Callaway2021-wl}:
\begin{align*}
\theta^{IV}=\sum_{e}\sum_{t=1}^T w(e,t)\cdot CLATT_{e,t},
\end{align*}
where $w(e,t)$ are some reasonable weighting functions assigned to each $CLATT_{e,t}$.\par 
To propose the weighting functions for a variety of summary causal parameters in staggered DID-IV designs, we define the average effect of the instrument on the treatment at a given relative period $l$ from the initial exposure to the instrument in cohort $e$, called the cohort specific average exposed effect on the treated in the first stage ($CAET_{e,e+l}^{1}$).
\begin{Def}
The cohort specific average exposed effect on the treated in the first stage ($CAET_{e,e+l}^{1}$) at a given relative period $l$ from the initial adoption of the instrument is 
\begin{align*}
CAET_{e,e+l}^{1}=E[D_{i,e+l}-D_{i,e+l}^{\infty}|E_i=e].
\end{align*}
\end{Def}
If treatment is binary, each $CAET_{e,l}^{1}$ is equal to the share of the compliers in cohort $e$ in period $e+l$:
\begin{align*}
CAET^{1}_{e,e+l}&=E[D_{i,e+l}^{e}-D_{i,e+l}^{\infty}|E_i=e]\\
&=Pr(CM_{e,e+l}|E_i=e).
\end{align*}\par
In staggered DID designs, \cite{Callaway2021-wl} propose various aggregated measures along with different dimensions of treatment effect heterogeneity. We can employ their framework directly, but in staggered DID-IV designs, we should more carefully specify the weighting functions assigned to each summary measure.\par
For instance, to aggregate dynamic treatment effects in cohort $e$ over time in staggered DID designs, \cite{Callaway2021-wl} consider the following summary measure:
\begin{align*}
\theta_{sel}(\Tilde{e})=\frac{1}{T-\Tilde{e}+1}\sum_{t=\Tilde{e}}^{T}CATT(\Tilde{e},t),
\end{align*}
where $CATT(\Tilde{e},t)$ is the cohort specific average treatment effect on the treated at time $t$ in cohort $\Tilde{e}$ (see, e.g., \cite{Callaway2021-wl}, \cite{Sun2021-rp}).\par
\begin{table}[t]
\centering
\caption{Weights for a variety of summary causal parameters} 
\label{table5}
\footnotesize
\renewcommand{\arraystretch}{2.0}
\begin{tabular}{ccc}
\hline 
Target Parameter & $w(e,t)$\\
\hline
$\theta_{es(l)}^{IV}$ & $\mathbf{1}\{e+l \leq T\}\mathbf{1}\{t=e+l\}P(E=e|E+l \leq T)\displaystyle\frac{CAET_{e,e+l}^{1}}{\sum_{e \in \mathcal{S}(E)}CAET_{e,e+l}^{1}}$\\
$\theta_{es(l,l')}^{bal,IV}$ & $\mathbf{1}\{e+l' \leq T\}\mathbf{1}\{t=e+l\}P(E=e|E+l' \leq T)\displaystyle\frac{CAET_{e,e+l}^{1}}{\sum_{e \in \mathcal{S}(E)}CAET_{e,e+l}^{1}}$\\
$\theta_{sel(\Tilde{e})}^{IV}$ & $\mathbf{1}\{t \geq e\}\mathbf{1}\{e=\Tilde{e}\}\displaystyle\frac{CAET_{\Tilde{e},t}^{1}}{\sum_{t=\Tilde{e}}^T CAET_{\Tilde{e},t}^{1}}$\\
$\theta_{c(\Tilde{t})}^{IV}$ & $\mathbf{1}\{t \geq e\}\mathbf{1}\{t=\Tilde{t}\}P(E=e|E \leq t)\displaystyle\frac{CAET_{e,t}^{1}}{\sum_{e \in \mathcal{S}(E)} CAET_{e,t}^{1}}$\\
$\theta_{c(\Tilde{t})}^{cumm,IV}$ & $\mathbf{1}\{t \geq e\}\mathbf{1}\{t \leq \Tilde{t}\}P(E=e|E \leq t)\displaystyle\frac{CAET_{e,t}^{1}}{\sum_{e \in \mathcal{S}(E)} CAET_{e,t}^{1}}$\\
$\theta_{W}^{o,IV}$ & $\mathbf{1}\{t \geq e\}P(E=e|E \leq T)/\sum_{e \in \mathcal{S}(E)}\sum_{t=1}^T \mathbf{1}\{t \geq e\}P(E=e|E \leq T)$\\
$\theta_{sel}^{o,IV}$ & $\mathbf{1}\{t \geq e\}P(E=e|E \leq T)\displaystyle\frac{CAET_{e,t}^{1}}{\sum_{t=e}^T CAET_{e,t}^{1}}$\\
\hline 
\end{tabular}
\\[5pt]
\begin{minipage}{0.95\textwidth}
\footnotesize
\textit{Notes}: This table represents the specific expressions for the weights on each $CLATT(e,t)$ or each $CACRT(e,t)$ in a variety of summary causal parameters. Each target parameter without the superscript IV is defined in \cite{Callaway2021-wl}. The superscript IV is added to associate each parameter with staggered DID-IV designs.
\end{minipage}
\end{table}
In staggered DID-IV designs, we propose the following summary measure $\theta_{sel}(\Tilde{e})^{IV}$ that corresponds with $\theta_{sel}(\Tilde{e})$:
\begin{align*}
\theta_{sel}(\Tilde{e})^{IV}=\sum_{t=\Tilde{e}}^{T}\frac{CAET^{1}_{\Tilde{e},t}}{\sum_{t=\Tilde{e}}^T CAET^{1}_{\Tilde{e},t}}CLATT_{\Tilde{e},t}.
\end{align*}
This parameter summarizes each $CLATT_{\Tilde{e},t}$ in cohort $\Tilde{e}$ across all post-exposed periods and the weight assigned to each $CLATT_{\Tilde{e},t}$ reflects the relative share of the compliers $CM_{\Tilde{e},t}$ in period $t$ during the periods after the initial exposure to the instrument in cohort $\Tilde{e}$. This weighting scheme would be reasonable in that it is designed to be larger in the period when the proportion of the compliers is relatively higher in cohort $\Tilde{e}$.\par
By similar arguments, we can specify the weighting functions for various summary measures, which correspond with those in \cite{Callaway2021-wl}. Table \ref{table5} summarizes the specific expressions for the weights assigned to each $CLATT_{\Tilde{e},t}$ in each summary causal parameter. Note that our proposed weighting functions are the same under non-binary, ordered treatment settings, in which our target parameter is the $CACRT_{\Tilde{e},t}$.
\subsubsection*{Estimation and inference}
We can construct the consistent estimator for the summary causal parameter $\theta^{IV}$ as follows.
\begin{align*}
\hat{\theta}^{IV}=\sum_{e}\sum_{t=1}^T \hat{w}(e,t)\cdot \widehat{CLATT}_{e,t},
\end{align*}
where $\hat{w}(e,t)$ is the sample analog of each ${w}(e,t)$ and $\widehat{CLATT}_{e,t}$ is the consistent estimator for each $CLATT_{e,t}$, obtained from the STS regression in Subsection \ref{sec4.1}. Each $\hat{w}(e,t)$ is the regular asymptotically linear estimator:
\begin{align*}
\frac{1}{\sqrt{n}}\sum_{i=1}^n \zeta_{i,e,t}^w+o_p(1),
\end{align*}
where $\zeta_{i,e,t}^w$ represents the influence function, and satisfies $E[\zeta_{i,e,t}^w]=0$ and $E[\zeta_{i,e,t}^w {\zeta_{i,e,t}^{w}}^\top] < \infty$.\par
Corollary \ref{sec6lemma} below represents the asymptotic distribution of the plug-in estimator $\hat{\theta}_{IV}$ and ensures its validity.
\begin{Corollary}
\label{sec6lemma}
If the assumptions of Theorem \ref{sec4.1thm1} hold, 
\begin{align*}
\sqrt{n}(\hat{\theta}_{IV}-\theta_{IV}) \xrightarrow{d} \mathcal{N}(0,V(l_{i}^{\theta_{IV}})),
\end{align*}
where the influence function $l_{i}^{\theta_{IV}}$ takes the following form:
\begin{align*}
l_{i}^{\theta_{IV}}=\sum_{e}\sum_{t=1}^T\left( w(e,t)\cdot \psi_{i,e,t}+\zeta_{i,e,t}^w\cdot{CLATT}_{e,t} \right).
\end{align*}
\end{Corollary}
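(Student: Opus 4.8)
The plan is to view $\hat{\theta}_{IV}$ as a smooth (bilinear) functional of two jointly asymptotically linear estimators and then apply the delta method together with the Lindeberg--L\'evy central limit theorem. First I would write the scaled error as a finite sum over cohort--period pairs,
\[
\sqrt{n}(\hat{\theta}_{IV}-\theta_{IV})=\sum_{e}\sum_{t=1}^{T}\sqrt{n}\bigl(\hat{w}(e,t)\,\widehat{CLATT}_{e,t}-w(e,t)\,CLATT_{e,t}\bigr),
\]
and, because the index set is finite, it suffices to establish an asymptotically linear representation for each summand and then add the resulting influence functions.

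For each $(e,t)$ I would use the standard product decomposition (suppressing the $(e,t)$ subscripts)
\[
\hat{w}\,\widehat{CLATT}-w\,CLATT=(\hat{w}-w)\,CLATT+w\,(\widehat{CLATT}-CLATT)+(\hat{w}-w)(\widehat{CLATT}-CLATT).
\]
Since $\hat{w}(e,t)-w(e,t)=O_{p}(n^{-1/2})$ by the assumed asymptotic linearity of the weights and $\widehat{CLATT}_{e,t}-CLATT_{e,t}=O_{p}(n^{-1/2})$ by Theorem \ref{sec4.1thm1}, the final cross term is $O_{p}(n^{-1})$, so $\sqrt{n}$ times it is $o_{p}(1)$. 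Substituting the two influence-function expansions, $\sqrt{n}(\hat{w}-w)=n^{-1/2}\sum_{i}\zeta_{i,e,t}^{w}+o_{p}(1)$ and $\sqrt{n}(\widehat{CLATT}_{e,t}-CLATT_{e,t})=n^{-1/2}\sum_{i}\psi_{i,e,t}+o_{p}(1)$, yields
\[
\sqrt{n}\bigl(\hat{w}\,\widehat{CLATT}-w\,CLATT\bigr)=\frac{1}{\sqrt{n}}\sum_{i=1}^{n}\bigl(w(e,t)\,\psi_{i,e,t}+\zeta_{i,e,t}^{w}\,CLATT_{e,t}\bigr)+o_{p}(1).
\]

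Summing over the finitely many $(e,t)$ then gives the asymptotically linear representation $\sqrt{n}(\hat{\theta}_{IV}-\theta_{IV})=n^{-1/2}\sum_{i}l_{i}^{\theta_{IV}}+o_{p}(1)$ with $l_{i}^{\theta_{IV}}$ exactly as stated. Each $l_{i}^{\theta_{IV}}$ is mean zero, since both $\zeta_{i,e,t}^{w}$ and $\psi_{i,e,t}$ are centered, and has finite second moment; the i.i.d.\ central limit theorem then delivers $\sqrt{n}(\hat{\theta}_{IV}-\theta_{IV})\xrightarrow{d}\mathcal{N}(0,V(l_{i}^{\theta_{IV}}))$, and $V(l_{i}^{\theta_{IV}})$ can be estimated by its sample analogue.

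The hard part will be justifying that $\hat{w}(e,t)$ and $\widehat{CLATT}_{e,t}$ are \emph{jointly} asymptotically linear rather than merely marginally so: because both are smooth transformations of sample moments computed from the same i.i.d.\ draws, the pair admits a common influence-function representation on a single probability space, and this is precisely what licenses adding $\zeta_{i,e,t}^{w}$ and $\psi_{i,e,t}$ and carrying the cross-estimator covariance correctly into $V(l_{i}^{\theta_{IV}})$. The remaining technical points---uniform negligibility of the cross terms across the finite index set and finiteness of the second moment of $l_{i}^{\theta_{IV}}$---follow from the moment conditions already assumed for $\zeta_{i,e,t}^{w}$ and guaranteed for $\psi_{i,e,t}$ by Theorem \ref{sec4.1thm1}.
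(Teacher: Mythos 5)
Your proof is correct: the paper states this corollary without any explicit proof, treating it as an immediate consequence of Theorem \ref{sec4.1thm1} and the assumed asymptotic linearity of $\hat{w}(e,t)$, and your argument---the product decomposition $(\hat{w}-w)\,CLATT + w(\widehat{CLATT}-CLATT)$ plus a negligible $O_p(n^{-1})$ cross term, summed over the finite index set, followed by the i.i.d.\ CLT---is exactly the standard argument being invoked, i.e.\ the product analogue of the ratio rule the paper records as Fact 1 in Appendix \ref{ApeC}. Your closing observation about joint asymptotic linearity is also the right justification for adding the two influence-function expansions, since all estimators are smooth functionals of sample moments from the same i.i.d.\ sample.
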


\subsection{Pre-trends test}\label{sec4.3}\par
In most DID applications, researchers assess the plausibility of the parallel trends assumption by testing for pre-treatment differences in the outcome between treatment and control groups (pre-trends test). In this section, we describe the procedure of the pre-trends test in DID-IV designs to check the validity of the parallel trends assumption in the treatment and the outcome.\par
Suppose that data are available for period $t=-1$. Then, one can assess the plausibility of Assumption \ref{sec2as5} between period $t=-1$ and $t=0$ by testing the following null hypothesis
\begin{align}
&E[D_{i,0}-D_{i,-1}|E_i=0]=E[D_{i,0}-D_{i,-1}|E_i=1]\notag\\
\label{pretrend1treatment}
\iff &E[D_{i,0}((0,0))-D_{i,-1}((0,0))|E_i=0]=E[D_{i,0}((0,0))-D_{i,-1}((0,0))|E_i=1].
\end{align}
Similarly, one can assess the plausibility of Assumption \ref{sec2as6} between period $t=-1$ and $t=0$ by testing the following null hypothesis
\begin{align}
&E[Y_{i,0}-Y_{i,-1}|E_i=0]=E[Y_{i,0}-Y_{i,-1}|E_i=1]\notag\\
\iff &E[Y_{i,0}(D_{i,0}((0,0)))-Y_{i,-1}(D_{i,-1}((0,0)))|E_i=0]\notag\\
\label{pretrend1outcome}
=&E[Y_{i,0}(D_{i,0}((0,0)))-Y_{i,-1}(D_{i,-1}((0,0)))|E_i=1].
\end{align}\par
These tests can be generalized to multiple pre-exposed periods settings by using the pre-trends tests recently developed in the context of DID designs (\cite{Callaway2021-wl}, \cite{Sun2021-rp}, \cite{Borusyak2021-jv}); that is, one can apply these tools to the first stage and the reduced form ($D_{i,t}$ or $Y_{i,t}$ is outcome, $Z_{i,t}$ is treatment) respectively to confirm the plausibility of Assumptions \ref{sec3as6}--\ref{sec3as7} or Assumptions \ref{sec3as9}--\ref{sec3as10} in pre-exposed periods.\par

\section{Application}\label{sec6}
We illustrate the empirical relevance of our findings in the setting of \cite{Oreopoulos2006-bn}, estimating returns to schooling in the United Kingdom. We first assess the identifying assumptions in staggered DID-IV designs implicitly imposed by \cite{Oreopoulos2006-bn}. We then estimate the TWFEIV regression in the author's setting. Finally, we estimate the target parameter and its summary measure by employing our proposed method and weighting scheme.\par
\subsection{Setting}\label{sec6setting}
\cite{Oreopoulos2006-bn} estimates returns to schooling using a major education reform in the UK that increased the years of compulsory schooling from 14 to 15. Specifically, \cite{Oreopoulos2006-bn} exploits variation resulting from the different timing of implementation of school reforms between Britain (England, Scotland, and Wales) and Northern Ireland as an instrument for education attainment: the school-leaving age increased in Britain in 1947, but was not implemented until 1957 in Northern Ireland.\footnote{In the first part of his analysis, \cite{Oreopoulos2006-bn} adopts regression discontinuity designs (RDD) and analyzes the data sets in Britain and Northern Ireland separately. Due to the imprecision of their standard errors, \cite{Oreopoulos2006-bn} then moves to “a difference-in-differences and instrumental-variables analysis by combining the two sets of U.K. data”.} The data are a sample of individuals in Britain and Northern Ireland, who were aged $14$ between $1936$ and $1965$, constructed from combining the series of U.K. General Household Surveys between $1984$ and $2006$; see \cite{Oreopoulos2006-bn}, \cite{Oreopoulos2008} for details.\par
\cite{Oreopoulos2006-bn} (more precisely, \cite{Oreopoulos2008}) runs the following two-way fixed effects instrumental variable regression with the education reform as an excluded instrument for education attainment.
 \begin{align}
\label{sec7eq1}
&Y_{i,t}=\mu_{n.}+\delta_{.t}+\beta_{IV} D_{i,t}+\epsilon_{i,t},\\
\label{sec7eq2}
&D_{i,t}=\gamma_{n.}+\zeta_{.t}+\pi Z_{i,t}+\eta_{i,t}.
\end{align}
 Here, a cohort (a year when aged $14$) plays a role of time as it determines the exposure to the policy change. The dependent variables $Y_{i,t}$ and $D_{i,t}$ are log annual earnings and education attainment for unit $i$ and cohort $t$, respectively. Both the first stage and the reduced form regressions include a birth cohort fixed effect and a North Ireland fixed effect. The binary instrument $Z_{i,t} \in \{0,1\}$ takes one if unit $i$ in cohort $t$ is exposed to the policy change.\par
 The staggered introduction of the school reform can be viewed as a natural experiment, but is not randomized across regions in reality; \cite{Oreopoulos2006-bn} notes that the reform was implemented with political support, taking into account costs and benefit. This indicates that \cite{Oreopoulos2006-bn} implicitly relies on a staggered DID-IV identification strategy instead of exploiting the random variation of the policy change. Indeed, \cite{Oreopoulos2006-bn} presents the corresponding plots of British and Northern Irish average education attainment and average log earnings by cohort to illustrate the evolution of these variables before and after the policy shock.\par 

\subsection{Assessing the identifying assumptions in staggered DID-IV designs}
We first discuss the validity of the staggered DID-IV identification strategy in \cite{Oreopoulos2006-bn}. In the author's setting, our target parameter is the cohort specific average causal response on the treated (CACRT), as education attainment is a non-binary, ordered treatment. 
 \subparagraph{Exclusion restriction.}
 It would be plausible, given that the policy reform did not affect log annual earnings other than by increasing education attainment. This assumption may be violated for instance if the reform affected both the quality and quantity of education.\par
 \subparagraph{Monotonicity assumption.}
It would be automatically satisfied in the author's setting: the policy change (instrument) increased the minimum schooling-leaving age from $14$ to $15$, ensuring that there are no defiers during the periods after the policy shock.
 \subparagraph{No anticipation in the first stage.}
It would be plausible that there is no anticipation, if the treatment adoption behavior is the same as the one in the absence of the policy change before its implementation in England. This assumption may be violated if units have private knowledge about the probability of extended education and manipulate their education attainment before the policy shock.
\par
\vskip\baselineskip
Next, we assess the validity of the parallel trends assumptions in the treatment and the outcome, using the interacted two-way fixed effects regressions proposed by \cite{Sun2021-rp} in the first stage and the reduced form, respectively. The results are shown in Figure \ref{sec5figure3}.\par 

 \begin{figure}[t]
    \centering
    \includegraphics[width=\columnwidth]{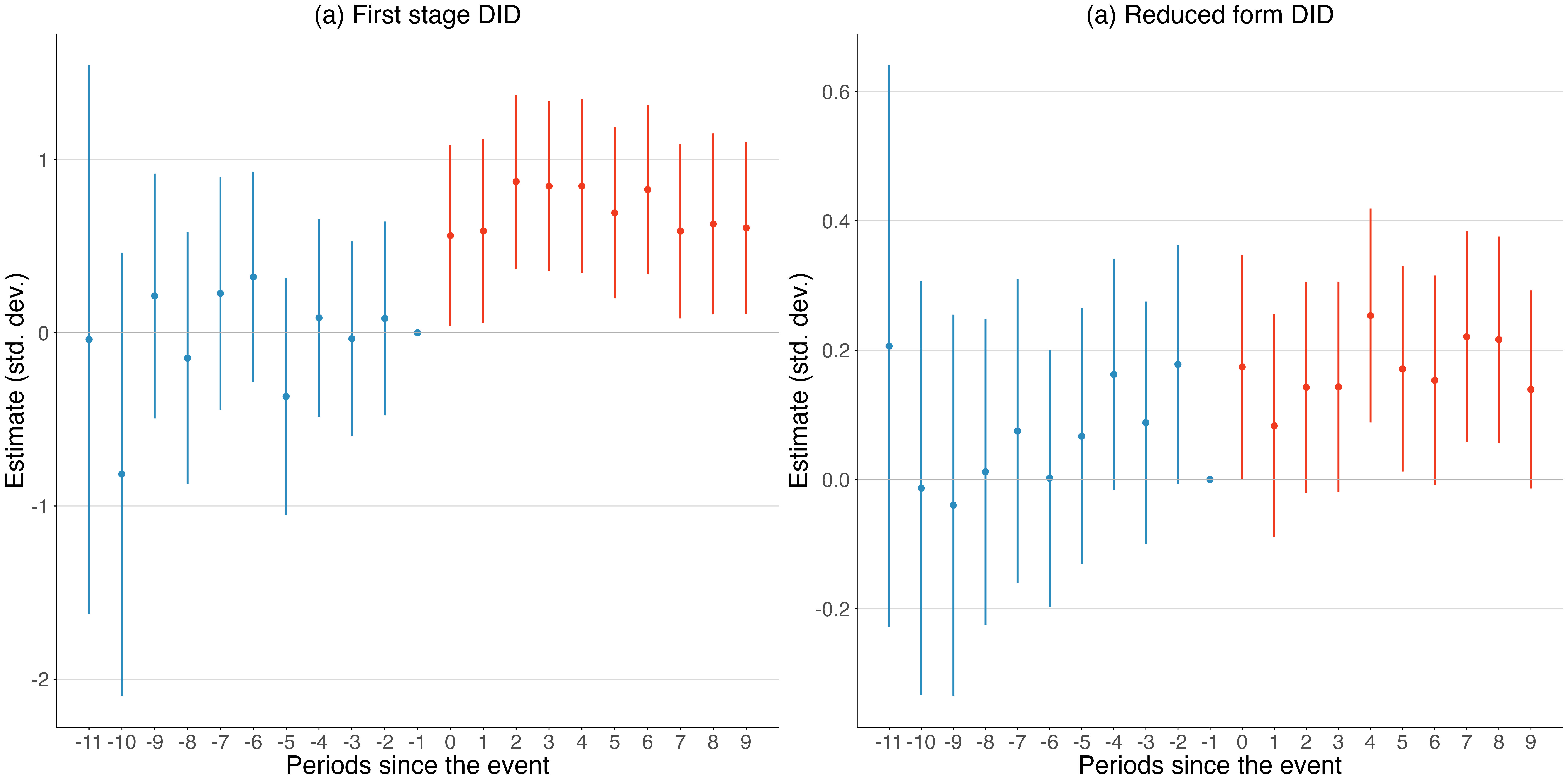}
  \caption{The effect of the instrument in the first stage and reduced form in the setting of \cite{Oreopoulos2006-bn}. \textit{Notes}: This figure presents the results of the effect of the school reform on education attainment (Panel (a)) and on log annual earnings (Panel (b)) under the staggered DID-IV identification strategy. The unexposed group $U$ is a last-exposed cohort, North Ireland and the reference period is $t=1946$. The blue lines represent the estimates with pointwise $95\%$ confidence intervals for pre-exposed periods in both panels. These should be equal to zero under the null hypothesis that the parallel trends assumptions in the treatment and the outcome hold. The red lines represent the estimates with pointwise $95\%$ confidence intervals for post-exposed periods in both panels.}
  \label{sec5figure3}
\end{figure}
\subparagraph{Parallel trends assumption in the treatment.}
It requires that the expectation of education attainment would have followed the same path between England and North Ireland across cohorts in the absence of the school reform. Panel (a) in Figure \ref{sec5figure3} plots the results of the interacted two-way fixed effects regression in the first stage along with a $95\%$ pointwise confidence interval. The pre-exposed estimates are not significantly different from zero and indicate the validity of Assumption \ref{sec3as6}.

\subparagraph{Parallel trends assumption in the outcome.}
It requires that the expectation of log annual earnings would have followed the same evolution between England and North Ireland across cohorts in the absence of the school reform. Panel (b) in Figure \ref{sec5figure3} plots the results of the interacted two-way fixed effects regression in the reduced form along with a $95\%$ pointwise confidence interval. The pre-exposed estimates seems consistent with Assumption \ref{sec3as7}: though an upward pre-trends exists, all the estimates before the initial exposure to the policy change are not significantly different from zero.
\vskip\baselineskip
Figure \ref{sec5figure3} also sheds light on the dynamic effects of the school reform on education attainment and log annual earnings in post-exposed periods. In Panel (a), the estimated increase in years of schooling after the reform ranges from $0.56$ to $0.87$, and all estimates are statistically significant. In Panel (b), the estimated increase in log annual earnings after the reform varies from $8\%$ to $25\%$, and $5$ out of $10$ estimates are statistically significant.\footnote{Note that the post-exposed estimates in Panel (b) do not capture each CACRT in post-exposed periods: each estimate in the reduced form is not scaled by the corresponding estimate in the first stage.}\par 

\subsection{Illustrating our estimation method}
We start by estimating two-way fixed effects instrumental variable regression in the author's setting. To clearly illustrate the pitfalls of TWFEIV regression, in our estimation, we slightly modify the author's specification. \cite{Oreopoulos2006-bn} (more precisely \cite{Oreopoulos2008}) includes some covariates (survey year, sex, and a quartic in age) and runs the weighted regression in the main specification (see \cite{Oreopoulos2006-bn}, \cite{Oreopoulos2008} for details), whereas we exclude such covariates and do not apply their weights to our regression.\par
The result is shown in Table \ref{sec5table6}. The TWFEIV estimate is $-0.009$ and not significantly different from $0$. This indicates that, on the whole, the returns to schooling in the UK is nearly zero.\footnote{\cite{young2024nearly} revisits \cite{Oreopoulos2006-bn} and shows that the 2SLS estimates in this setting can be numerically unstable due to near collinearity among cohort indicators.
Following \cite{young2024nearly}, we conduct stability checks based on variable and observation ordering, and find that our TWFEIV estimate does not suffer from such numerical instability.} However, this may be a misleading conclusion. We cannot interpret that the TWFEIV estimand captures the properly weighted average of all possible CACRTs if the effect of the school reform on education attainment or log annual earnings is not stable across cohorts (\cite{Miyaji2023-tw}).\par 
In Online Appendix Section \ref{ApeD}, we quantify the bias terms of the TWFEIV estimand by using Lemma 7 in \cite{Miyaji2023-tw}, and show that the TWFEIV estimand is negatively biased in \cite{Oreopoulos2006-bn}. Specifically, we show that all the bias terms are positive, while all the assigned weights are negative, which yields the downward bias for the TWFEIV estimand.\footnote{When we run the TWFEIV regression in the author's setting, we treat the set of already exposed units in North Ireland as controls between $1957$ and $1965$, as both regions are already exposed to the policy shock during these periods. This procedure performed by the TWFEIV regression is called the “bad comparisons” in the recent DID literature (c.f. \cite{Goodman-Bacon2021-ej}), and yields the bias in the TWFEIV estimand from the properly weighted average of all possible CACRTs.}\par
\begin{figure}[t]
    \centering
    \begin{adjustbox}{width=\columnwidth, height=0.5\columnwidth, keepaspectratio}
        \includegraphics{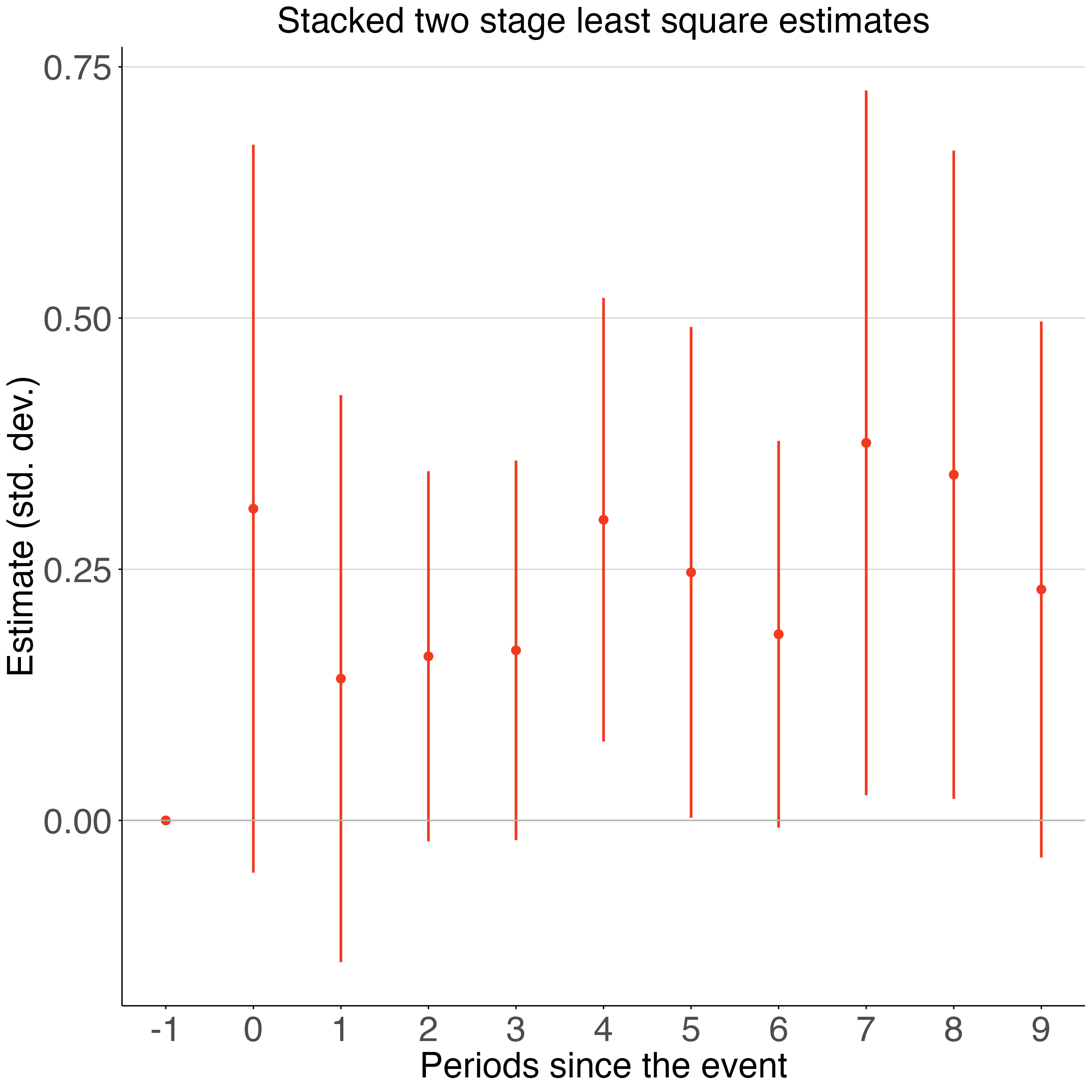}
    \end{adjustbox}
    \caption{The cohort specific average causal response on the treated in each relative period in the setting of \cite{Oreopoulos2006-bn}. \textit{Notes}: This figure shows the results for returns to schooling under the staggered DID-IV identification strategy. The unexposed group $U$ is North Ireland and the reference period is period $t=1946$. The red lines represent the stacked two stage least squares estimates with pointwise $95\%$ confidence intervals for post-exposed periods.}
    \label{sec5figure4}
\end{figure}
We now employ the proposed method to estimate each CACRT. First, we create data sets by cohort $t$ ($t=1947,\dots,1956$). Each data set only contains units of cohort $t$ and cohort $t=1946$ in England and North Ireland. We define North Ireland as an unexposed group $U$. We then run the STS regression in each data set. The standard error is calculated by using the influence function shown in equation \eqref{ApeCinf_repeated} in Online Appendix Section \ref{ApeC}.\par
\begin{table}[t]
\centering
\footnotesize
\caption{Returns to schooling in \cite{Oreopoulos2006-bn}}
\renewcommand{\arraystretch}{1.2}
\begin{tabular*}{14cm}{c@{\hspace{2cm}}c@{\hspace{1cm}}c@{\hspace{1cm}}c}
\hline
& Estimate & Standard Error & 95\% CI \\
\hline
TSLS with fixed effects& -0.009 & 0.039 & [-0.085, 0.068]\\
$\theta_{sel}^{IV}(e)$ & 0.240 & 0.098 & [0.047, 0.433] \\
\hline
\multicolumn{3}{l}{\textit{Notes}: Sample size $82790$ observations.}
\end{tabular*}
\label{sec5table6}
\end{table}
Figure \ref{sec5figure4} plots the point estimates and the corresponding $95\%$ confidence intervals in each relative period after the school reform. The estimates for each $CACRT_{e,t}$ range from $14\%$ to $38\%$ with wide confidence intervals, and $4$ out of $10$ estimates are statistically significant.\par
Finally, we estimate the summary causal measure by aggregating each $CACRT_{e,t}$. Specifically, we estimate the weighted average of each $CACRT_{e,t}$ during post-exposed periods in England ($e=1947$).
\begin{align*}
\theta_{sel}^{IV}(e)=\sum_{t=1947}^{1956}\frac{CAET^{1}_{e,t}}{\sum_{t=1947}^{1956}CAET^{1}_{e,t}}CACRT_{e,t}.
\end{align*}
Here, each weight assigned to $CACRT_{e,t}$ represents the relative amount of the effect of the policy reform on education attainment during post-exposed periods in England.\par
The result is shown in Table \ref{sec5table6}. The estimate is $0.24$ and it is significantly different from zero. The estimated returns to schooling are substantial, perhaps because each $CACRT_{e,t}$ captures the returns to schooling among the compliers: such units may belong to relatively low-skilled labor or low-income family that potentially have much to gain from school reform.\par 
The estimates obtained from our proposed method and weighting scheme are significantly different from the TWFEIV estimate: our STS estimates and its summary measure are all positive, whereas the TWFEIV estimate is strictly negative. Overall, our results indicate that the economic returns of education are substantial in the UK, and the estimation method matters in staggered DID-IV designs in practice.
\section{Conclusion}\label{sec7}
In this paper, we formalize an instrumented difference-in-differences (DID-IV). First, we consider a simple setting with two periods and two groups. In this setting, our DID-IV design mainly comprises a monotonicity assumption, and parallel trends assumptions in the treatment and the outcome between the two groups. We show that in $2 \times 2$ DID-IV designs, the Wald-DID estimand captures the local average treatment effect on the treated (LATET). After establishing $2 \times 2$ DID-IV designs, we clarify the differences between DID-IV and Fuzzy DID designs considered in \cite{De_Chaisemartin2018-xe}, and provide the implications of these differences for treatment adoption behavior, the interpretation of the target parameter, and the use of the Wald-DID estimand.\par
Next, we consider the DID-IV design in more than two periods with units being exposed to the instrument at different times. We call this a staggered DID-IV design, and formalize the target parameter and identifying assumptions. Specifically, in this design, our target parameter is the cohort specific local average treatment effects on the treated (CLATT). The identifying assumptions are the natural generalization of those in $2 \times 2$ DID-IV designs.\par
We also provide the estimation method in staggered DID-IV designs that does not require strong restrictions on treatment effect heterogeneity. Our estimation method carefully chooses the comparison groups and does not suffer from the bias arising from the time-varying exposed effects. We also propose the weighting scheme in staggered DID-IV designs, and explain how one can conduct pre-trends tests to assess the validity of the parallel trends assumptions in DID-IV designs.\par
Finally, we illustrate the empirical relevance of our findings with the setting of \cite{Oreopoulos2006-bn}, who estimates returns to schooling in the UK, exploiting the timing variation of the introduction of school reforms between British and North Ireland. In this application, the TWFEV regression, the conventional approach to implement a staggered DID-IV design, yields a negative estimate. By contrast, our STS regression and weighting scheme indicate the substantial gain from schooling. This empirical application illustrates that the estimation method matters in staggered DID-IV designs in practice.\par
Overall, this paper provides a new econometric framework for estimating the causal effects when the treatment adoption is potentially endogenous over time, but researchers can exploit variation in policy adoption timing as an instrument for treatment. To avoid the issue of using the TWFEIV estimator, we also provide a reliable estimation method that is free from strong restrictions on treatment effect heterogeneity. Further developing alternative estimation methods and diagnostic tools will be a promising area for future research, facilitating the credibility of DID-IV design in practice.\par 
\appendix
\section*{Appendix}
\newcounter{appendixsection}
\setcounter{appendixsection}{1}
\paragraph{Proof of Theorem \ref{sec2thm1}}
\begin{proof} 
First we consider the numerator of the Wald-DID estimand. 
For the numerator, we have
\begin{align}
&E[Y_{i,1}-Y_{i,0}|E_{i}=1]-E[Y_{i,1}-Y_{i,0}|E_{i}=0]\notag\\
=&E[Y_{i,1}(D_{i,1}((0,1)))-Y_{i,0}(D_{i,0}((0,1)))|E_i=1]\notag\\
-&E[Y_{i,1}(D_{i,1}((0,0)))-Y_{i,0}(D_{i,0}((0,0)))|E_i=0]\notag\\
=&E[Y_{i,1}(D_{i,1}((0,1)))-Y_{i,1}(D_{i,1}((0,0)))|E_i=1]\notag\\
+&E[Y_{i,1}(D_{i,1}((0,0)))-Y_{i,0}(D_{i,0}((0,1)))|E_i=1]\notag\\
-&E[Y_{i,1}(D_{i,1}((0,0)))-Y_{i,0}(D_{i,0}((0,0)))|E_i=0]\notag\\
=&E[Y_{i,1}(D_{i,1}((0,1)))-Y_{i,1}(D_{i,1}((0,0)))|E_i=1]\notag\\
+&E[Y_{i,1}(D_{i,1}((0,0)))-Y_{i,0}(D_{i,0}((0,0)))|E_i=1]\notag\\
-&E[Y_{i,1}(D_{i,1}((0,0)))-Y_{i,0}(D_{i,0}((0,0)))|E_i=0]\notag\\
=&E[(D_{i,1}((0,1))-D_{i,1}((0,0)))\cdot(Y_{i,1}(1)-Y_{i,1}(0))|E_i=1]\notag\\
\label{ApeAeq2}
=& LATET \cdot Pr(CM^{Z}|E_i=1).
\end{align}
The first equality follows from Assumption \ref{sec2as1} and Assumption \ref{sec2as2}. The second equality follows from the simple manipulation. The third equality follows from Assumption \ref{sec2as4}. The fourth equality follows from Assumption \ref{sec2as6} and simple calculation. The fifth equality follows from the Law of Iterated Expectations and Assumption \ref{sec2as3}.\par
For the denominator, we have
\begin{align}
&E[D_{i,1}-D_{i,0}\mid E_i=1]-E[D_{i,1}-D_{i,0}\mid E_i=0]\notag\\
=&E[D_{i,1}((0,1))-D_{i,0}((0,1))\mid E_i=1]
 -E[D_{i,1}((0,0))-D_{i,0}((0,0))\mid E_i=0]\notag\\
=&E[D_{i,1}((0,1))-D_{i,1}((0,0))\mid E_i=1]\notag\\
&+\Big\{
E[D_{i,1}((0,0))-D_{i,0}((0,1))\mid E_i=1]
-
E[D_{i,1}((0,0))-D_{i,0}((0,0))\mid E_i=0]
\Big\}\notag\\
=&E[D_{i,1}((0,1))-D_{i,1}((0,0))\mid E_i=1]\notag\\
&+\Big\{
E[D_{i,1}((0,0))-D_{i,0}((0,0))\mid E_i=1]
-
E[D_{i,1}((0,0))-D_{i,0}((0,0))\mid E_i=0]
\Big\}\notag\\
\label{ApeAeq1}
=&\Pr(CM^{Z}\mid E_i=1).
\end{align}
Here, the second equality follows from the simple manipulation. The third equality follows from Assumption \ref{sec2as4} and the forth equality follows from Assumption \ref{sec2as3} and Assumption \ref{sec2as5}.\par
Combining the result \eqref{ApeAeq2} with \eqref{ApeAeq1}, we obtain the desirable result.
\end{proof}

\paragraph{Proof of Theorem \ref{sec3.3.theorem1}}
\begin{proof}
    Fix $e \in \mathcal{E}$ and $l \in \{0,\dots,T-e\}$. Note that Assumption \ref{sec3asrelevance} guarantees that the Wald-DID estimand $w_{e,l}^{DID}$ is well defined.\par
    First, we consider the numerator of $w_{e,l}^{DID}$. From Assumptions~\ref{sec3as1}--\ref{sec3asrelevance}, we have:
    \begin{align}
        &E[Y_{i,e+l}-Y_{i,e-1}\mid E_i=e]
        -E[Y_{i,e+l}-Y_{i,e-1}\mid E_i=\infty]\notag\\
        =&E[Y_{i,e+l}(D^{e}_{i,e+l})-Y_{i,e-1}(D^{e}_{i,e-1})\mid E_i=e]\notag\\
        -&E[Y_{i,e+l}(D^{\infty}_{i,e+l})-Y_{i,e-1}(D^{\infty}_{i,e-1})\mid E_i=\infty]\notag\\
        =&E[Y_{i,e+l}(D^{e}_{i,e+l})-Y_{i,e-1}(D^{\infty}_{i,e-1})\mid E_i=e]\notag\\
        -&E[Y_{i,e+l}(D^{\infty}_{i,e+l})-Y_{i,e-1}(D^{\infty}_{i,e-1})\mid E_i=\infty]\notag\\
        =&E[Y_{i,e+l}(D^{e}_{i,e+l})-Y_{i,e+l}(D^{\infty}_{i,e+l})\mid E_i=e]\notag\\
        +&E[Y_{i,e+l}(D^{\infty}_{i,e+l})-Y_{i,e-1}(D^{\infty}_{i,e-1})\mid E_i=e]\notag\\
        -&E[Y_{i,e+l}(D^{\infty}_{i,e+l})-Y_{i,e-1}(D^{\infty}_{i,e-1})\mid E_i=\infty]\notag\\
        =&E[Y_{i,e+l}(D^{e}_{i,e+l})-Y_{i,e+l}(D^{\infty}_{i,e+l})\mid E_i=e]\notag\\
        =&E[(D^{e}_{i,e+l}-D^{\infty}_{i,e+l})\cdot (Y_{i,e+l}(1)-Y_{i,e+l}(0))\mid E_i=e]\notag\\
        \label{theorem2_eq1}
        =&CLATT_{e,e+l}\Pr(CM_{e,e+l}\mid E_i=e).
\end{align}

    The first equality follows from Assumptions~\ref{sec3as1}--\ref{sec3as3}. The second equality follows from Assumption \ref{sec3as5}. The third equality follows from Assumption \ref{sec3as7}. The final equality follows from Assumption \ref{sec3as4}.\par
    Next, we consider the denominator of $w_{e,l}^{DID}$. From Assumptions~\ref{sec3as1}--\ref{sec3asrelevance}, we have:
    \begin{align}
        &E[D_{i,e+l}-D_{i,e-1}\mid E_i=e]
        -E[D_{i,e+l}-D_{i,e-1}\mid E_i=\infty]\notag\\
        =&E[D^{e}_{i,e+l}-D^{e}_{i,e-1}\mid E_i=e]-E[D^{\infty}_{i,e+l}-D^{\infty}_{i,e-1}\mid E_i=\infty]\notag\\
        =&E[D^{e}_{i,e+l}-D^{\infty}_{i,e-1}\mid E_i=e]-E[D^{\infty}_{i,e+l}-D^{\infty}_{i,e-1}\mid E_i=\infty]\notag\\
        =&E[D^{e}_{i,e+l}-D^{\infty}_{i,e+l}\mid E_i=e]+E[D^{\infty}_{i,e+l}-D^{\infty}_{i,e-1}\mid E_i=e]\notag\\
        -&E[D^{\infty}_{i,e+l}-D^{\infty}_{i,e-1}\mid E_i=\infty]\notag\\
        =&E[D^{e}_{i,e+l}-D^{\infty}_{i,e+l}\mid E_i=e]\notag\\
        \label{theorem2_eq2}
        =&\Pr(CM_{e,e+l}\mid E_i=e).
\end{align}

    The first equality follows from Assumption \ref{sec3as1}. The second equality follows from Assumption \ref{sec3as5}. The third equality follows from Assumption \ref{sec3as6}. The final equality follows from Assumption \ref{sec3as4}.\par
    Combining \eqref{theorem2_eq1} with \eqref{theorem2_eq2}, we obtain the desirable result.
\end{proof}
\paragraph{Proof of Theorem \ref{sec3.3.theorem2}}
\begin{proof}
    Fix $e \in \mathcal{E}$ and $l \in \{0,\dots,T-e\}$ such that $l < \bar{e}-e$. Note that Assumption \ref{sec3as11} guarantees that the Wald-DID estimand $w_{e,l}^{DID,ny}$ is well defined.\par
    First, we consider the numerator of $w_{e,l}^{DID,ny}$. From Assumptions~\ref{sec3as1}--\ref{sec3as5} and Assumption~\ref{sec3as10}, we have:
    \begin{align}
        &E[Y_{i,e+l}-Y_{i,e-1}\mid E_i=e]
        -E[Y_{i,e+l}-Y_{i,e-1}\mid Z_{i,e+l}=0,E_{i,e}=0]\notag\\
        =&E[Y_{i,e+l}(D^{e}_{i,e+l})-Y_{i,e-1}(D^{\infty}_{i,e-1})\mid E_i=e]\notag\\
        -&E[Y_{i,e+l}(D^{\infty}_{i,e+l})-Y_{i,e-1}(D^{\infty}_{i,e-1})\mid Z_{i,e+l}=0,E_{i,e}=0]\notag\\
        =&E[Y_{i,e+l}(D^{e}_{i,e+l})-Y_{i,e+l}(D^{\infty}_{i,e+l})\mid E_i=e]\notag\\
        +&E[Y_{i,e+l}(D^{\infty}_{i,e+l})-Y_{i,e-1}(D^{\infty}_{i,e-1})\mid E_i=e]\notag\\
        -&E[Y_{i,e+l}(D^{\infty}_{i,e+l})-Y_{i,e-1}(D^{\infty}_{i,e-1})\mid Z_{i,e+l}=0,E_{i,e}=0]\notag\\
        =&E[Y_{i,e+l}(D^{e}_{i,e+l})-Y_{i,e+l}(D^{\infty}_{i,e+l})\mid E_i=e]\notag\\
        =&E[(D^{e}_{i,e+l}-D^{\infty}_{i,e+l})\cdot (Y_{i,e+l}(1)-Y_{i,e+l}(0))\mid E_i=e]\notag\\
        \label{theorem3_eq1}
        =&CLATT_{e,e+l}\Pr(CM_{e,e+l}\mid E_i=e).
\end{align}

    The first equality follows from Assumptions~\ref{sec3as1}--\ref{sec3as3} and Assumption \ref{sec3as5}. The third equality follows from Assumption \ref{sec3as10}. The final equality follows from Assumption \ref{sec3as4}.\par
    Next, we consider the denominator of $w_{e,l}^{DID}$. From Assumptions~\ref{sec3as1}--\ref{sec3as5} and Assumption \ref{sec3as9}, we have:
    \begin{align}
        &E[D_{i,e+l}-D_{i,e-1}\mid E_i=e]
        -E[D_{i,e+l}-D_{i,e-1}\mid Z_{i,e+l}=0,E_{i,e}=0]\notag\\
        =&E[D^{e}_{i,e+l}-D^{\infty}_{i,e-1}\mid E_i=e]-E[D^{\infty}_{i,e+l}-D^{\infty}_{i,e-1}\mid Z_{i,e+l}=0,E_{i,e}=0]\notag\\
        =&E[D^{e}_{i,e+l}-D^{\infty}_{i,e+l}\mid E_i=e]+E[D^{\infty}_{i,e+l}-D^{\infty}_{i,e-1}\mid E_i=e]\notag\\
        -&E[D^{\infty}_{i,e+l}-D^{\infty}_{i,e-1}\mid Z_{i,e+l}=0,E_{i,e}=0]\notag\\
        =&E[D^{e}_{i,e+l}-D^{\infty}_{i,e+l}\mid E_i=e]\notag\\
        \label{theorem3_eq2}
        =&\Pr(CM_{e,e+l}\mid E_i=e).
\end{align}

    The first equality follows from Assumption \ref{sec3as1} and Assumption \ref{sec3as5}. The third equality follows from Assumption \ref{sec3as6}. The final equality follows from Assumption \ref{sec3as4}.\par
    Combining \eqref{theorem3_eq1} with \eqref{theorem3_eq2}, we obtain the desirable result.
\end{proof}

\paragraph{Proof of Theorem \ref{sec4.1thm1}}
\begin{proof}
First, we show that $\widehat{CLATT}_{e,e+l}$ is consistent for the corresponding $CLATT_{e,e+l}$.  
Consider case (i), where a never-exposed cohort is used as the control group.  
By the Law of Large Numbers (LLN) and the Continuous Mapping Theorem (CMT), $\widehat{CLATT}_{e,e+l} \xrightarrow[p]{} w^{DID}_{e,l}$ as defined in Section~\ref{sec3.3}. Then, by Theorem~\ref{sec3.3.theorem1}, we have $w^{DID}_{e,l}=CLATT_{e,e+l}$. The same argument, together with Theorem~\ref{sec3.3.theorem2}, establishes consistency under case (ii), where not-yet-exposed cohorts are used as the control group.\par
Next, we prove that our STS estimator is asymptotically normal, deriving its influence function. This derivation does not depend on whether we are in case (i) or case (ii).
We define $\theta_{e,l}$ to be
\begin{align*}
\theta_{e,l}&\equiv \frac{E[Y_{i,e+l}-Y_{i,e-1}|E_i=e]-E[Y_{i,e+l}-Y_{i,e-1}|E_i \in U]}{E[D_{i,e+l}-D_{i,e-1}|E_i=e]-E[D_{i,e+l}-D_{i,e-1}|E_i \in U]} (=CLATT_{e,e+l})\\
&\equiv \frac{\alpha_{e,l}^1-\alpha_{e,l}^2}{\pi_{e,l}^1-\pi_{e,l}^2}.
\end{align*}\par
The following fact, also found in \cite{De_Chaisemartin2018-xe}, is repeatedly used in the derivation of the influence function.
\begin{Fact}
If
\begin{align*}
\sqrt{n}(\hat{A}-A)=\frac{1}{\sqrt{n}}\sum_{i=1}^n a_i+o_p(1),\hspace{3mm}\sqrt{n}(\hat{B}-B)=\frac{1}{\sqrt{n}}\sum_{i=1}^n b_i+o_p(1),
\end{align*}
we have
\begin{align*}
\sqrt{n}\left(\frac{\hat{A}}{\hat{B}}-\frac{A}{B}\right)=\frac{1}{\sqrt{n}}\sum_{i=1}^n\frac{a_i-(A/B)b_i}{B}+o_p(1).
\end{align*}
\end{Fact}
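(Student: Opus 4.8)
The plan is to prove the Fact by a direct first-order expansion of the smooth ratio map $(A,B)\mapsto A/B$, which is in essence the delta method specialized to a quotient. As a preliminary observation, each hypothesis forces $\sqrt n(\hat A-A)=O_p(1)$ and $\sqrt n(\hat B-B)=O_p(1)$, since a normalized sum $\tfrac{1}{\sqrt n}\sum_i a_i$ of mean-zero, finite-variance terms is $O_p(1)$ by the central limit theorem; in particular $\hat A\xrightarrow{p}A$ and $\hat B\xrightarrow{p}B$. Because $B\neq 0$, the ratio $\hat A/\hat B$ is well defined with probability tending to one.

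First I would put the difference of ratios over a common denominator and factor out $B$,
\begin{align*}
\frac{\hat A}{\hat B}-\frac{A}{B}=\frac{\hat A B-A\hat B}{\hat B B}=\frac{(\hat A-A)-\tfrac{A}{B}(\hat B-B)}{\hat B},
\end{align*}
where the second equality adds and subtracts $AB$ in the numerator. Multiplying through by $\sqrt n$ and inserting the two asymptotically linear representations into the numerator, the two remainder terms collapse into a single $o_p(1)$, giving
\begin{align*}
\sqrt n\!\left(\frac{\hat A}{\hat B}-\frac{A}{B}\right)=\frac{1}{\hat B}\left(\frac{1}{\sqrt n}\sum_{i=1}^n\Bigl(a_i-\tfrac{A}{B}b_i\Bigr)+o_p(1)\right).
\end{align*}

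The final step replaces the random denominator $\hat B$ by its probability limit $B$ and reads off the influence function. Writing $N$ for the bracketed numerator, one has $N/\hat B-N/B=N\,(B-\hat B)/(\hat B B)$, and I would conclude this is $o_p(1)$ because $N=O_p(1)$ while $(B-\hat B)/(\hat B B)=o_p(1)$ follows from $\hat B\xrightarrow{p}B\neq 0$; the product of an $O_p(1)$ term and an $o_p(1)$ term is $o_p(1)$. Hence
\begin{align*}
\sqrt n\!\left(\frac{\hat A}{\hat B}-\frac{A}{B}\right)=\frac{1}{\sqrt n}\sum_{i=1}^n\frac{a_i-(A/B)b_i}{B}+o_p(1),
\end{align*}
which is exactly the asserted representation.

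The only delicate point — and the one I would flag as the main obstacle — is this last substitution of $B$ for $\hat B$ in the denominator: one must argue that it perturbs the expansion by only $o_p(1)$, which requires both that the numerator is genuinely $O_p(1)$ (so that multiplying by an $o_p(1)$ denominator error is harmless) and that $\hat B$ is consistent for a nonzero limit. Everything else is elementary algebra and bookkeeping of the $o_p(1)$ remainders. With the Fact in hand, the proof of Theorem~\ref{sec4.1thm1} is completed by taking $\hat A=\hat\alpha^{e,l}$ and $\hat B=\hat\pi^{e,l}$, so that the ratio's influence function $\psi_{i,e,l}$ follows immediately and asymptotic normality is delivered by the central limit theorem.
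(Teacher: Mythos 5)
Your proof is correct, and there is nothing in the paper to compare it against line by line: the paper does not prove this Fact at all, it simply states it, cites \cite{De_Chaisemartin2018-xe}, and applies it as a black box in deriving the influence function $\psi_{i,e,l}$. Your argument therefore supplies the derivation the paper omits, and it is the standard one: the exact algebraic identity
\begin{align*}
\frac{\hat A}{\hat B}-\frac{A}{B}=\frac{(\hat A-A)-(A/B)(\hat B-B)}{\hat B},
\end{align*}
substitution of the two asymptotically linear expansions (the two remainders collapse into one $o_p(1)$ since $A/B$ is a fixed constant), and a final Slutsky step replacing $\hat B$ by $B$ in the denominator. You are also right to flag that last step as the only delicate one, and your flagged hypotheses are substantive rather than pedantic: the Fact as literally stated is false without them. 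One needs $a_i$ and $b_i$ to be i.i.d., mean zero, with finite variance, so that $\tfrac{1}{\sqrt n}\sum_i\bigl(a_i-(A/B)b_i\bigr)=O_p(1)$, and one needs $B\neq 0$ so that $1/\hat B\xrightarrow{p}1/B$; if instead the numerator sum were permitted to grow like $\sqrt n$, the error incurred by replacing $\hat B$ with $B$ would be $O_p(1)$ rather than $o_p(1)$ and the asserted representation would fail. Both hypotheses hold in the paper's application, where $a_i$ and $b_i$ are mean-zero influence functions and the probability limit of the denominator, $\pi^1_{e,l}-\pi^2_{e,l}=Pr(CM_{e,e+l}|E_i=e)$, is nonzero by assumption, so your proof validates the paper's use of the Fact.
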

Then, by some calculations, we obtain
\begin{align*}
    \sqrt{n}(\widehat{CLATT}_{e,e+l}-CLATT_{e,e+l})&=
    \frac{1}{\sqrt{n}}\sum_{i}^{n}\psi_{i,e,l}+o_p(1).
\end{align*}
The influence function $\psi_{i,e,l}$ is:
\begin{align}
&\psi_{i,e,l}\notag\\
\label{ApeEeq6}
&=\frac{1}{\pi_{e,l}^1-\pi_{e,l}^2}\Bigg(\frac{\mathbf{1}\{E_i=e\}\left[\delta_{i,e,l}^{p}-E[\delta_{i,e,l}^{p}|E_i=e]\right]}{E[\mathbf{1}\{E_i=e\}]}-\frac{\mathbf{1}\{E_i \in U\}\left[\delta_{i,e,l}^{p}-(E[\delta_{i,e,l}^{p}|E_i \in U])\right]}{E[\mathbf{1}\{E_i \in U\}]}\Bigg),
\end{align}
where we define $\delta_{i,e,l}^{p}=(Y_{i,e+l}-Y_{i,e-1})-\theta_{e,l}\cdot(D_{i,e+l}-D_{i,e-1})$.\par
Therefore, the asymptotic variance $V(\psi_{i,e,l})$ is
\begin{align*}
V(\psi_{i,e,l})&=\frac{1}{(\pi_{e,l}^1-\pi_{e,l}^2)^2}\Bigg(\frac{E\big[\mathbf{1}\{E_i=e\}\left[\delta_{i,e,l}^{p}-E[\delta_{i,e,l}^{p}|E_i=e]\right]^2\big]}{E[\mathbf{1}\{E_i=e\}]^{2}}\\
&+\frac{E\big[\mathbf{1}\{E_i \in U\}\left[\delta_{i,e,l}^{p}-(E[\delta_{i,e,l}^{p}|E_i \in U]\right]^2\big]}{E[\mathbf{1}\{E_i \in U\}]^{2}}\Bigg).
\end{align*}
\end{proof}
\bibliographystyle{econ-econometrica.bst}
\bibliography{reference} 
\newpage
\appendix
\makeatletter
\gdef\@thanks{}  
\makeatother
\title{Online Appendix for “Instrumented Difference-in-Differences with Heterogeneous Treatment Effects”}
\author{Sho Miyaji}

\date{\today}

\onehalfspacing
\maketitle   
\begin{abstract}
In this online appendix, we first present additional identification results. Second, we derive the influence function of our STS estimator in repeated cross section settings. Third, we decompose the TWFEIV estimand in the setting of \citeonline{Oreopoulos2006-bn}. Finally, we investigate the connections between DID-IV and Fuzzy DID designs considered in \citeonline{De_Chaisemartin2018-xe}.\par
\end{abstract} 
\newpage
\renewcommand{\thesection}{\Alph{section}}
\section{Extensions}\label{ApeB}
This section considers the non-binary, ordered treatment and repeated cross section settings in DID-IV designs. It also includes the case where we introduce the treatment path in $2 \times 2$ DID-IV designs.
\label{ApeB}
\subsection{Non-binary, ordered treatment}\label{ApeB1}
In this subsection, we present the identification results in DID-IV designs with non-binary, ordered treatment. Theorem \ref{OnlineA.1.thm} below shows that if we have a non-binary, ordered treatment, the Wald-DID estimand is equal to the ACRT under Assumptions \ref{sec2as1}-\ref{sec2asrelevance}, which replace the binary treatment with a non-binary one.
\begin{Theorem}
\label{OnlineA.1.thm}
If Assumptions \ref{sec2as1}-\ref{sec2asrelevance} hold, the Wald-DID estimand $W_{DID}$ is equal to the ACRT. 
\end{Theorem}
\begin{proof}
Let $\lambda_{Z,j}^{t}=\mathbf{1}(D_t((0,z)) \geq j)$ be the indicator function for $t \in \{0,1\}, z \in \{0,1\}$ and $j \in \{0,\dots,J+1\}$. Note that $\lambda_{Z,0}^{t}=1$ and $\lambda_{Z,J+1}^{t}=0$ hold for all $t$ and $z$ by construction. Here, we can rewrite the observed outcome $Y_t$ as follows:
\begin{align*}
Y_t&=Z_1\cdot Y_t(D_t((0,1)))+(1-Z_1)\cdot Y_t(D_t((0,0)))\\
&=\left\{Z_1\cdot\sum_{j=0}^{J}Y_t(j)\cdot(\lambda_{1,j}^{t}-\lambda_{1,j+1}^{t})\right\}+\left\{(1-Z_1)\cdot\sum_{j=0}^{J}Y_t(j)\cdot(\lambda_{0,j}^{t}-\lambda_{0,j+1}^{t})\right\}.
\end{align*}
where we use Assumption \ref{sec2as1} and \ref{sec2as2}.\par
For the numerator of the Wald DID estimand, we have
\begin{align*}
&E[Y_{1}-Y_{0}|E=1]-E[Y_{1}-Y_{0}|E=0]\\
=&E[Y_1(D_1((0,1)))-Y_1(D_1((0,0)))|E=1].
\end{align*}
This follows from Assumption \ref{sec2as4} and Assumption \ref{sec2as6}.\par
In terms of the $\lambda_{Z,j}^{t}$, we write the previous expression as:
\begin{align*}
&E[Y_1(D_1((0,1)))-Y_1(D_1((0,0)))|E=1]\\
&=E\left[\sum_{j=0}^J Y_1(j)\cdot[\lambda_{1,j}^{1}-\lambda_{1,j+1}^{1}-\lambda_{0,j}^{1}+\lambda_{0,j+1}^{1}]|E=1\right]\\
&=E\left[\sum_{j=1}^J(Y_1(j)-Y_1(j-1))\cdot[\lambda_{1,j}^{1}-\lambda_{0,j}^{1}]+Y_1(0)\cdot(\lambda_{1,0}^{1}-\lambda_{0,0}^{1})|E=1\right]\\
&=E\left[\sum_{j=1}^J(Y_1(j)-Y_1(j-1))\cdot[\lambda_{1,j}^{1}-\lambda_{0,j}^{1}]|E=1\right].
\end{align*}
where the second equality follows from $\lambda_{1,0}^{1}-\lambda_{0,0}^{1}=1$.\par
Note that $\lambda_{1,j}^{1} \geq \lambda_{0,j}^{1}$ from Assumption \ref{sec2as3} and that $\lambda_{1,j}^{1}$ and $\lambda_{0,j}^{1}$ can take only two values, zero or one.
Therefore, we obtain:
\begin{align}
&E\left[\sum_{j=1}^J(Y_1(j)-Y_1(j-1))\cdot[\lambda_{1,j}^{1}-\lambda_{0,j}^{1}]|E=1\right]\notag\\
&=\sum_{j=1}^J E[(Y_1(j)-Y_1(j-1)|\lambda_{1,j}^{1}-\lambda_{0,j}^{1}=1,E=1]Pr(\lambda_{1,j}^{1}-\lambda_{0,j}^{1}=1|E=1)\notag\\
&=\sum_{j=1}^J E[(Y_1(j)-Y_1(j-1)|D_1((0,1)) \geq j > D_1((0,0)),E=1]\notag\\
\label{ApeAmulti1}
&\times  Pr(D_1((0,1)) \geq j > D_1((0,0))|E=1).
\end{align}\par
For the denominator of the Wald-DID estimand, by similar argument, we obtain:
\begin{align*}
&E[D_{1}-D_{0}|E=1]-E[D_{1}-D_{0}|E=0]\\
=&E[D_1((0,1))-D_1((0,0))|E=1].
\end{align*}
This follows from Assumptions \ref{sec2as4}--\ref{sec2as5}.\par
Then, we have:
\begin{align}
&E[D_1((0,1))-D_1((0,0))|E=1]\notag\\
&=E\left[\sum_{j=0}^J j\cdot[\lambda_{1,j}^{1}-\lambda_{1,j+1}^{1}-\lambda_{0,j}^{1}+\lambda_{0,j+1}^{1}]|E=1\right]\notag\\
&=E\left[\sum_{j=1}^J[\lambda_{1,j}^{1}-\lambda_{0,j}^{1}]|E=1\right]\notag\\
\label{ApeAmulti2}
&=\sum_{j=1}^J Pr(D_1((0,1)) \geq j > D_1((0,0))|E=1).
\end{align}
Combining the result \eqref{ApeAmulti2} with \eqref{ApeAmulti1}, we obtain the desired result.
\end{proof}

\subsection{Repeated cross sections}\label{ApeB2}
Here, we present our identification results under repeated cross section settings. First, we consider two groups and two periods settings. Next, we consider multiple period settings with staggered instrument.
\subsubsection{Two time periods}
Let $Y_{i}$ and $D_{i}$ denote the outcome and the treatment for unit $i$. Let $Z_i$ denote the instrument path for unit $i$: $Z_i=(0,0)$ if unit $i$ is not exposed to the instrument and $Z_i=(0,0)$ if unit $i$ is exposed to the instrument. Let $E_{i} \in \{0,1\}$ denote the group indicator for unit $i$: $E_i=1$ if $Z_i=(0,1)$ and $E_i=0$ if $Z_i=(0,0)$. Let $T_{i} \in \{0,1\}$ denote the binary indicator for time. For all $z \in \mathcal{S}(Z)$, let $Y_{i}(0,z),Y_{i}(1,z)$, and $D_{i}(z)$ denote potential outcomes and potential treatment choices for unit $i$. Let $D_{i,t}(z)$ denote potential treatment choices under $T_i=t$. We assume that $\{Y_{i},D_{i},Z_i,E_i,T_i\}_{i=1}^{n}$ are i.i.d.\par
In two periods and two groups settings, our target parameter is the local average treatment effect on the treated in period $1$:
\begin{align*}
LATET &\equiv E[Y_{i}(1)-Y_{i}(0)|T_i=1,E_i=1, D_{i,1}((0,1)) > D_{i,1}((0,0))]\\
&=E[Y_i(1)-Y_i(0)|T_i=1,E_i=1,CM^{Z}].
\end{align*}\par
We make the following identification assumptions for the Wald-DID estimand to capture the LATET. These assumptions are suitable for repeated cross section settings.

\begin{Assumption}[Exclusion restriction]
\label{ApeBas1}
\begin{align*}
\forall z \in \mathcal{S}(Z),Y_{i}(0,z)=Y_{i}(0),Y_{i}(1,z)=Y_{i}(1).
\end{align*}
\end{Assumption}

\begin{Assumption}[Monotonicity]
\label{ApeBas2}
\begin{align*}
Pr(D_{i,1}((0,1)) \geq D_{i,1}((0,0)))=1\hspace{2mm}\text{or}\hspace{2mm}Pr(D_{i,1}((0,1)) \leq D_{i,1}((0,0)))=1.
\end{align*}
\end{Assumption}
Here, we consider the monotonicity assumption that rules out the defiers, as our target parameter measures the treatment effects among the compliers.
\begin{Assumption}[No anticipation in the first stage]
\label{ApeBas3}
\begin{align*}
D_{i,0}((0,1))=D_{i,0}((0,0))\hspace{2mm}a.s.\hspace{3mm}\text{for all units $i$ with $E_i=1$}.
\end{align*}
\end{Assumption}

\begin{Assumption}[Parallel trends assumption in the treatment]
\label{ApeBas5}
\begin{align*}
&E[D_{i}((0,0))|E_i=1,T_i=1]-E[D_{i}((0,0))|E_i=1,T_i=0]\\
=&E[D_{i}((0,0))|E_i=0,T_i=1]-E[D_{i}((0,0))|E_i=0,T_i=0].
\end{align*}
\end{Assumption}

\begin{Assumption}[Parallel trends assumption in the outcome]
\label{ApeBas6}
\begin{align*}
&E[Y_{i}(D_{i}(0,0))|E_i=1,T_i=1]-E[Y_{i}(D_{i}(0,0))|E_i=1,T_i=0]\\
=&E[Y_{i}(D_{i}(0,0))|E_i=0,T_i=1]-E[Y_{i}(D_{i}(0,0))|E_i=0,T_i=0].
\end{align*}
\end{Assumption}

\begin{Assumption}[Relevance condition]
\label{ApeBas7}
\begin{align*}
&E[D_i|E_i=1,T_i=1]-E[D_i|E_i=1,T_i=0]\\
-&(E[D_{i}|E_i=0,T_i=1]-E[D_{i}|E_i=0,T_i=0]) > 0.
\end{align*}
\end{Assumption}

By similar arguments in the proof of Theorem \ref{sec2thm1} (Theorem \ref{OnlineA.1.thm}), one can show that the Wald-DID estimand is equal to the LATET (ACRT) in period $1$ under Assumptions \ref{ApeBas1}--\ref{ApeBas7} (thus, we omit the proof for brevity).

\subsubsection{Multiple time periods}\label{secA.2.2}
Let $T_i \in \{1,\dots,T\}$ denote the time period when unit $i$ is observed and let $Z_{i,t}$ denote the instrument status under $T_i=t$. We adopt Assumption \ref{sec3as1} in Section \ref{sec3}; that is, we assume the staggered assignment of the instrument across units. In addition, we adopt Assumption \ref{ApeBas1} (Exclusion restriction), which allows us to write $Y_{i}(D_i,z)=Y_{i}(D_i)$.
Let $E_i=\min\{t: Z_{i,t}=1\} \in \{2,\dots,T,\infty\}$ denote the cohort which unit $i$ belongs to. Let $E_{i,e}=\mathbf{1}\{E_i=e\}$ denote the binary indicator that takes one if unit $i$ belongs to cohort $e$. Let $\bar{e}=\max_{i=1,\dots,n}E_i$ denote the largest cohort value in the dataset. Let $\mathcal{E}=\mathcal{S}(E_i)\setminus \{\bar{e}\} \subseteq \{2,3,\dots, T\}$ denote the support of $E_i$ excluding $\bar{e}$.\par
Let $Y_{i}(D_{i}^{e})$ and $D_{i}^{e}$ denote outcomes and potential treatment choices when unit $i$ belongs to cohort $e$. Let $D_{i,t}^{e}$ denote potential treatment choices under $T_i=t$ when unit $i$ belongs to cohort $e$. We assume that $\{Y_{i},D_{i},Z_i,E_i,T_i\}_{i=1}^{n}$ are i.i.d.\par
In multiple periods and cohorts settings with a binary treatment, our target parameter is the $CLATT_{e,e+l}$ in period $T=e+l$:
\begin{align*}
CLATT_{e,e+l} &\equiv E[Y_{i}(1)-Y_{i}(0)|T_i=e+l,E_i=e, D_{i,e+l}^{e} > D_{i,e+l}^{\infty}]\\
&=E[Y_{i}(1)-Y_{i}(0)|T_i=e+l,E_i=e,CM_{e,e+l}].
\end{align*}
If we have an ordered, non-binary treatment, our target parameter is the $CACRT_{e,e+l}$ in period $T=e+l$:
\begin{align*}
CACRT_{e,e+l} \equiv \sum_{j=1}^{J}w^{e}_{e+l,j} \cdot E[Y_{i,e+l}(j)-Y_{i,e+l}(j-1)|T_i=e+l,E_i=e, D_{i,e+l}^{e} \geq j > D_{i,e+l}^{\infty}]
\end{align*}
where the weight $w^{e}_{e+l,j}$ is:
\begin{align*}
w^{e}_{e+l,j}=\frac{Pr(D_{i,e+l}^{e} \geq j > D_{i,e+l}^{\infty}|T_i=e+l,E_i=e)}{\sum_{j=1}^{J} Pr(D_{i,e+l}^{e} \geq j > D_{i,e+l}^{\infty}|T_i=e+l,E_i=e)}.
\end{align*}
\par
Similar to Section \ref{sec3.3}, we first consider the following estimand to identify each $CLATT_{e,e+l}$:
\begin{gather*}
    w^{DID,r}_{e,l}
    =
    \frac{\Delta^{l}_{Y,E=e}-\Delta^{l}_{Y,E=\infty}}
    {\Delta^{l}_{D,E=e}-\Delta^{l}_{D,E=\infty}},
    \qquad
    \text{for } e \in \mathcal{E} \text{ and } l \in \{0,\dots,T-e\}.
\end{gather*}
where, for any $e \in \mathcal{S}(E_i)$,
\begin{align*}
    \Delta^{l}_{Y,E=e}
    &=
    E[Y_{i}\mid E_i=e,T_i=e+l]-E[Y_{i}\mid E_i=e,T_i=e-1],\\
    \Delta^{l}_{D,E=e}
    &=
    E[D_{i}\mid E_i=e,T_i=e+l]-E[D_{i}\mid E_i=e,T_i=e-1].
\end{align*}
Note that in this Wald-DID estimand, the pre-exposed period is $e-1$ and the control group is the never exposed cohort. Here, we assume that the largest cohort value in the dataset is $\bar{e}=\infty$.\par
When repeated cross section data are available, the staggered DID-IV designs consist of Assumptions \ref{sec3as1}, \ref{sec3as4}, \ref{sec3as5} (in Section \ref{sec3}), Assumption \ref{ApeBas1}, and the following parallel trends assumptions and the relevance condition.
\begin{Assumption}[Parallel trends assumption in the treatment based on a never exposed cohort]
\label{secA.2.2as1}
\begin{gather*}
\text{For each}\hspace{2mm}e\in \mathcal{E}\hspace{2mm}\text{and}\hspace{2mm}t\in \{2,\dots,T\}\hspace{2mm}\text{such that}\hspace{2mm}t \geq e,\\
E[D_{i}^{\infty}|E_i=e,T_i=t]-E[D_{i}^{\infty}|E_i=e,T_i=t-1]\\
=\\
E[D_{i}^{\infty}|E_i=\infty,T_i=t]-E[D_{i}^{\infty}|E_i=\infty,T_i=t-1].
\end{gather*}
\end{Assumption}
\begin{Assumption}[Parallel trends assumption in the outcome based on a never exposed cohort]
\label{secA.2.2as2}
\begin{gather*}
\text{For each}\hspace{2mm}e\in \mathcal{E}\hspace{2mm}\text{and}\hspace{2mm}t\in \{2,\dots,T\}\hspace{2mm}\text{such that}\hspace{2mm}t \geq e,\\
E[Y_i(D_i^{\infty})|E_i=e,T_i=t]-E[Y_i(D_i^{\infty})|E_i=e,T_i=t-1]\\
=\\
E[Y_i(D_i^{\infty})|E_i=\infty,T_i=t]-E[Y_i(D_i^{\infty})|E_i=\infty,T_i=t-1].
\end{gather*}
\end{Assumption}

\begin{Assumption}[Relevance condition based on a never exposed cohort]
\label{secA.2.2as3}
\begin{gather*}
\text{For each}\hspace{2mm}e \in \mathcal{E}\hspace{2mm}\text{and}\hspace{2mm}l\in \{0,\dots,T-e\},\\
\Delta^{l}_{D,E=e}-\Delta^{l}_{D,E=\infty} > 0.
\end{gather*}
\end{Assumption}
By similar arguments in the proof of Theorem \ref{sec3.3.theorem1}, one can show that each Wald-DID estimand $w^{DID,r}_{e,l}$ is equal to the $CLATT_{e,e+l}$ ($CACRT_{e,e+l}$ under an ordered, non-binary treatment) for each $e \in \mathcal{E}$ and $l \in \{0,\dots,T-e\}$ under Assumptions \ref{sec3as1}, \ref{sec3as4}, \ref{sec3as5} in Section \ref{sec3}, Assumption \ref{ApeBas1}, and Assumptions \ref{secA.2.2as1}--\ref{secA.2.2as3} (thus, we omit the proof for brevity).\par
When there is no never-exposed cohort, 
or its sample size is too small, one can instead consider the following estimand:
\begin{gather*}
    w^{DID,ny,r}_{e,l}=\frac{\Delta^{l}_{Y,E=e}-\Delta^{l}_{Y,ny}}{\Delta^{l}_{D,E=e}-\Delta^{l}_{D,ny}},
\end{gather*}
where
\begin{align*}
    \Delta^{l}_{Y,ny}
    &=
    E[Y_{i}\mid Z_{i}=0,E_{i,e}=0,T_i=e+l]-E[Y_{i}\mid Z_{i}=0,E_{i,e}=0,T_i=e-1],\\
    \Delta^{l}_{D,ny}
    &=
    E[D_{i}\mid Z_{i}=0,E_{i,e}=0,T_i=e+l]-E[D_{i}\mid Z_{i}=0,E_{i,e}=0,T_i=e-1].
\end{align*}
for $e \in \mathcal{E}$ and $l \in \{0,\dots,T-e\}$ such that $l < \bar{e}-e$. In this Wald-DID estimand, the control group is the not-yet-exposed cohorts.\par
If we adopt the above estimand $w^{DID,ny,r}_{e,l}$, we can replace Assumptions \ref{secA.2.2as1}--\ref{secA.2.2as3} with Assumptions \ref{secA.2.2as4}--\ref{secA.2.2as6} below.\par
\begin{Assumption}[Parallel trends assumption in the treatment based on not-yet-exposed cohorts]
\label{secA.2.2as4}
\begin{gather*}
\text{For each}\hspace{2mm}e\in \mathcal{E}\hspace{2mm}\text{and}\hspace{2mm}\text{each}\hspace{2mm}t \in \{2,\dots,T\}\hspace{2mm}\text{such that}\hspace{2mm}e \leq t < \bar{e},\\
E[D_{i}^{\infty}|E_i=e,T_i=t]-E[D_{i}^{\infty}|E_i=e,T_i=t-1]\\
=\\
E[D_{i}^{\infty}|Z_{i}=0,E_{i,e}=0,T_i=t]-E[D_{i}^{\infty}|Z_{i}=0,E_{i,e}=0,T_i=t-1].
\end{gather*}
\end{Assumption}
\begin{Assumption}[Parallel trends assumption in the outcome based on not-yet-exposed cohorts]
\label{secA.2.2as5}
\begin{gather*}
\text{For each}\hspace{2mm}e\in \mathcal{E}\hspace{2mm}\text{and}\hspace{2mm}\text{each}\hspace{2mm}t \in \{2,\dots,T\}\hspace{2mm}\text{such that}\hspace{2mm}e \leq t < \bar{e},\\
E[Y_i(D_i^{\infty})|E_i=e,T_i=t]-E[Y_i(D_i^{\infty})|E_i=e,T_i=t-1]\\
=\\
E[Y_i(D_i^{\infty})|Z_{i}=0,E_{i,e}=0,T_i=t]-E[Y_i(D_i^{\infty})|Z_{i}=0,E_{i,e}=0,T_i=t-1].
\end{gather*}
\end{Assumption}
\begin{Assumption}[Relevance condition based on not-yet-exposed cohorts]
\label{secA.2.2as6}
\begin{gather*}
\text{For each}\hspace{2mm}e\in \mathcal{E}\hspace{2mm}\text{and}\hspace{2mm}l\in \{0,\dots,T-e\}\hspace{2mm}\text{such that}\hspace{2mm}l < \bar{e}-e,\\
\Delta^{l}_{D,E=e}-\Delta^{l}_{D,ny} > 0.
\end{gather*}\par
Again, by similar arguments in the proof of Theorem \ref{sec3.3.theorem2}, one can show that each Wald-DID estimand $w^{DID,ny,r}_{e,l}$ is equal to the $CLATT_{e,e+l}$ ($CACRT_{e,e+l}$ under an ordered, non-binary treatment) for each $e \in \mathcal{E}$ and $l \in \{0,\dots,T-e\}$ such that $l < \bar{e}-e$ under Assumptions \ref{sec3as1}, \ref{sec3as4}, \ref{sec3as5} in Section \ref{sec3}), \ref{ApeBas1}, and Assumptions \ref{secA.2.2as4}--\ref{secA.2.2as6} (thus, we omit the proof for brevity).\par
\end{Assumption}

\section{Deriving the influence function in repeated cross section settings}\label{ApeC}
In this appendix, we present the influence function of our STS estimator when repeated cross section data are available.\footnote{The derivation here is essentially the same as in \citeonline{De_Chaisemartin2018-xe}, who present the influence function of the Wald-DID estimator in repeated cross section settings with two periods and two groups.} Let $\theta_{e,l}^{r}$ define
\begin{align*}
\theta_{e,l}^{r} \equiv \frac{\alpha_{e,e+l}-\alpha_{e,e-1}-[\beta_{U,e+l}-\beta_{U,e-1}]}{\pi_{e,e+l}-\pi_{e,e-1}-[\gamma_{U,e+l}-\gamma_{U,e-1}]},
\end{align*}
where
\begin{align*}
&\alpha_{e,t}=E[Y_{i}|E_i=e,T_i=t],\hspace{2mm}\beta_{U,t}=E[Y_{i}|E_i \in U,T_i=t],\\
&\pi_{e,t}=E[D_{i}|E_i=e,T_i=t],\hspace{2mm}\gamma_{U,t}=E[D_{i}|E_i \in U,T_i=t]\hspace{2mm}(t=e+l,e-1).
\end{align*}\par
In repeated cross section settings, our STS estimator is the sample analog of $\theta_{e,l}^{r}$, which we denote $\hat{\theta}_{e,l}^{r}$. Then, if we set $U=\{\infty\}$ ($U=\{e' \in \mathcal{S}(E_i) : e' > e+l\}$), one can show that $\hat{\theta}_{e,l}^{r}$ is consistent for $CLATT_{e,e+l}$ under Assumptions \ref{sec3as1}, \ref{sec3as4}, \ref{sec3as5}, \ref{ApeBas1} and Assumptions \ref{secA.2.2as1}--\ref{secA.2.2as3} (Assumptions \ref{secA.2.2as4}--\ref{secA.2.2as6}), and have the following influence function:
\footnotesize
\begin{align}
\label{ApeCinf_repeated}
&\psi_{i,e,l}^{r}=\frac{1}{\pi_{e,e+l}-\pi_{e,e-1}-[\gamma_{U,e+l}-\gamma_{U,e-1}]}\notag\\
&\times\Bigg(\frac{\mathbf{1}_{e,e+l}\cdot\left[\delta^{r}_{i,e,l}-E[\delta^{r}_{i,e,l}|E_i=e,T_i=e+l]\right]}{E[\mathbf{1}_{e,e+l}]}-\frac{\mathbf{1}_{e,e-1}\cdot\left[\delta^{r}_{i,e,l}-E[\delta^{r}_{i,e,l}|E_i=e,T_i=e-1]\right]}{E[\mathbf{1}_{e,e-1}]}\notag\\&-\frac{\mathbf{1}_{U,e+l}\cdot\left[\delta^{r}_{i,e,l}-E[\delta^{r}_{i,e,l}|E_i \in U,T_i=e+l]\right]}{E[\mathbf{1}_{U,e+l}]}+\frac{\mathbf{1}_{U,e-1}\cdot\left[\delta^{r}_{i,e,l}-E[\delta^{r}_{i,e,l}|E_i \in U,T_i=e-1]\right]}{E[\mathbf{1}_{U,e-1}]}\Bigg),
\end{align}
\normalsize
where $\mathbf{1}_{A}$ is the indicator function and takes one if $A$ is true, and $\delta_{i,e,l}^{r}=Y_i-\theta_{e,l}^{r}\cdot D_i$.


\section{Decomposing the TWFEIV estimand in \cite{Oreopoulos2006-bn}}\label{ApeD}
In this appendix, we quantify the bias terms in the TWFEIV estimand, arising from the bad comparisons in \citeonline{Oreopoulos2006-bn}. Let $e=1947$ and $U=1957$ denote England and Northern Ireland respectively, and let $N_{e,t}$ denotes the sample size for England in cohort $t$. Let $R_{e,t} \equiv \frac{1}{N_{e,t}}\sum_{i}^{N_{e,t}}R_{e(i),t}$ denote the sample mean of random variable $R_{i,t}$ in region $e$ ($e \in \{1947,1957\}$) and cohort $t$. Here, $R_{e(i),t}$ indicates that the unit $i$ belongs to cohort $e$.\par 
First, we decompose the TWFEIV estimator $\hat{\beta}_{IV}$ as follows\footnote{For the detailed calculation steps, see the proof of Lemma 7 in \citeonline{Miyaji2023-tw}. In our decomposition, the reference period ($t=1$) in the proof is $t=1946$.}:
\begin{align*}
\hat{\beta}_{IV}&=\frac{\sum_{t}N_{e,t}\hat{Z}_{e,t}\left[Y_{e,t}-Y_{e,1946}-(Y_{U,t}-Y_{U,1946})\right]}{\sum_{t}N_{e,t}\hat{Z}_{e,t}\left[D_{e,t}-D_{e,1946}-(D_{U,t}-D_{U,1946})\right]}\\
&=\frac{\sum_{t}N_{e,t}\hat{Z}_{e,t}\left[D_{e,t}-D_{e,1946}-(D_{U,t}-D_{U,1946})\right]\cdot \widehat{WDID}_{e,t}}{\sum_{t}N_{e,t}\hat{Z}_{e,t}\left[D_{e,t}-D_{e,1946}-(D_{U,t}-D_{U,1946})\right]},
\end{align*}
where we define: 
\begin{align*}
\widehat{WDID}_{e,t} \equiv \frac{\left[Y_{e,t}-Y_{e,1946}-(Y_{U,t}-Y_{U,1946})\right]}{\left[D_{e,t}-D_{e,1946}-(D_{U,t}-D_{U,1946})\right]},
\end{align*}
and $\hat{Z}_{e,t}$ is the residuals from regressing $Z_{i,t}$ on the cohort and Northern Ireland fixed effects.\footnote{We can write $\hat{Z}_{i,t}=\hat{Z}_{e,t}$ because $Z_{i,t}$ only varies across cohorts and regions.}\par 
Here, to clarify the interpretation of the TWFEIV estimand, we assume the parallel trends assumption in the treatment and the outcome over the entire sample period from $t=1936$ to $t=1965$.
Then, by using the similar arguments in the proof of Theorem \ref{OnlineA.1.thm} for multiple time periods repeatedly, we obtain the following decomposition result for the TWFEIV estimand in \citeonline{Oreopoulos2006-bn}:
\begin{align}
\label{decomposition1}
\beta_{IV}=&\sum\limits_{U-1 \geq t \geq e}w_{e,t}^{1}\cdot CACRT_{e,t}+\sum\limits_{t \geq U}w_{e,t}^{2} \cdot \Delta_{e,t},
\end{align}
where we define:
\begin{align*}
CAET^{1}_{e,t}=E[D_{i}-D_{i}^{\infty}|T_i=t,E_i=e],
\end{align*}
\begin{align*}
\Delta_{e,t}=\displaystyle\frac{CAET^{1}_{e,t}\cdot CACRT_{e,t}-CAET^{1}_{U,t}\cdot CACRT_{U,t}}{CAET^{1}_{e,t}-CAET^{1}_{U,t}},
\end{align*}
and the weights $w^{1}_{e,t}$ and $w^{2}_{e,t}$ are:
\footnotesize
\begin{align}
\label{ApeDeq11}
&w_{e,t}^1\\
&=\frac{E[\hat{Z}_{i,t}|E_i=e]\cdot n_{e,t} \cdot CAET^1_{e,t}}{\sum\limits_{U-1 \geq t \geq e}E[\hat{Z}_{i,t}|E_i=e]\cdot n_{e,t}\cdot CAET^1_{e,t}+\sum\limits_{t \geq U}E[\hat{Z}_{i,t}|E_i=e]\cdot n_{e,t}\cdot (CAET^{1}_{e,t}-CAET^{1}_{U,t})},\\
\label{ApeDeq12}
&w_{e,t}^{2}\\
&=\frac{E[\hat{Z}_{i,t}|E_i=e]\cdot n_{e,t}\cdot(CAET^{1}_{e,t}-CAET^{1}_{U,t})}{\sum\limits_{U-1 \geq t \geq e}E[\hat{Z}_{i,t}|E_i=e]\cdot n_{e,t}\cdot CAET^1_{e,t}+\sum\limits_{t \geq U}E[\hat{Z}_{i,t}|E_i=e]\cdot n_{e,t}\cdot (CAET^{1}_{e,t}-CAET^{1}_{U,t})},
\end{align}
\normalsize
where $n_{e,t}$ and $E[\hat{Z}_{i,t}|E_i=e]$ represent the population share and residuals for England in cohort $t$, respectively.\par 
In \citeonline{Oreopoulos2006-bn}, we can only identify $CACRT_{e,t}$ between $1947$ and $1956$ because the policy change occurred in England in $1947$ and in Northern Ireland in $1957$. This implies that each $\Delta_{e,t}$ in equation \eqref{decomposition1} is the bias term arising from the bad comparisons between $1957$ and $1965$ performed by the TWFEIV regression.\par 
Figure \ref{ApeDfigure1} plots the weight and the corresponding estimate for each $CACRT_{e,t}$ and $\Delta_{e,t}$ (bias term). The consistent estimators for each $w^{1}_{e,t}$, $w^{2}_{e,t}$, $CACRT_{e,t}$ and $\Delta_{e,t}$ are constructed from the sample analogue, exploiting $\widehat{WDID}_{e,t}$ and $(D_{e,t}-D_{e,1946})-(D_{U,t}-D_{U,1946})$.\par
Figure \ref{ApeDfigure1} shows that all the CACRTs are positive and positively weighted, whereas all the bias terms are positive and negatively weighted. This indicates that the TWFEIV estimand is negatively biased in the setting of \citeonline{Oreopoulos2006-bn}.

\begin{figure}[t]
    \centering
    \begin{adjustbox}{width=\columnwidth, height=0.6\columnwidth, keepaspectratio}
        \includegraphics{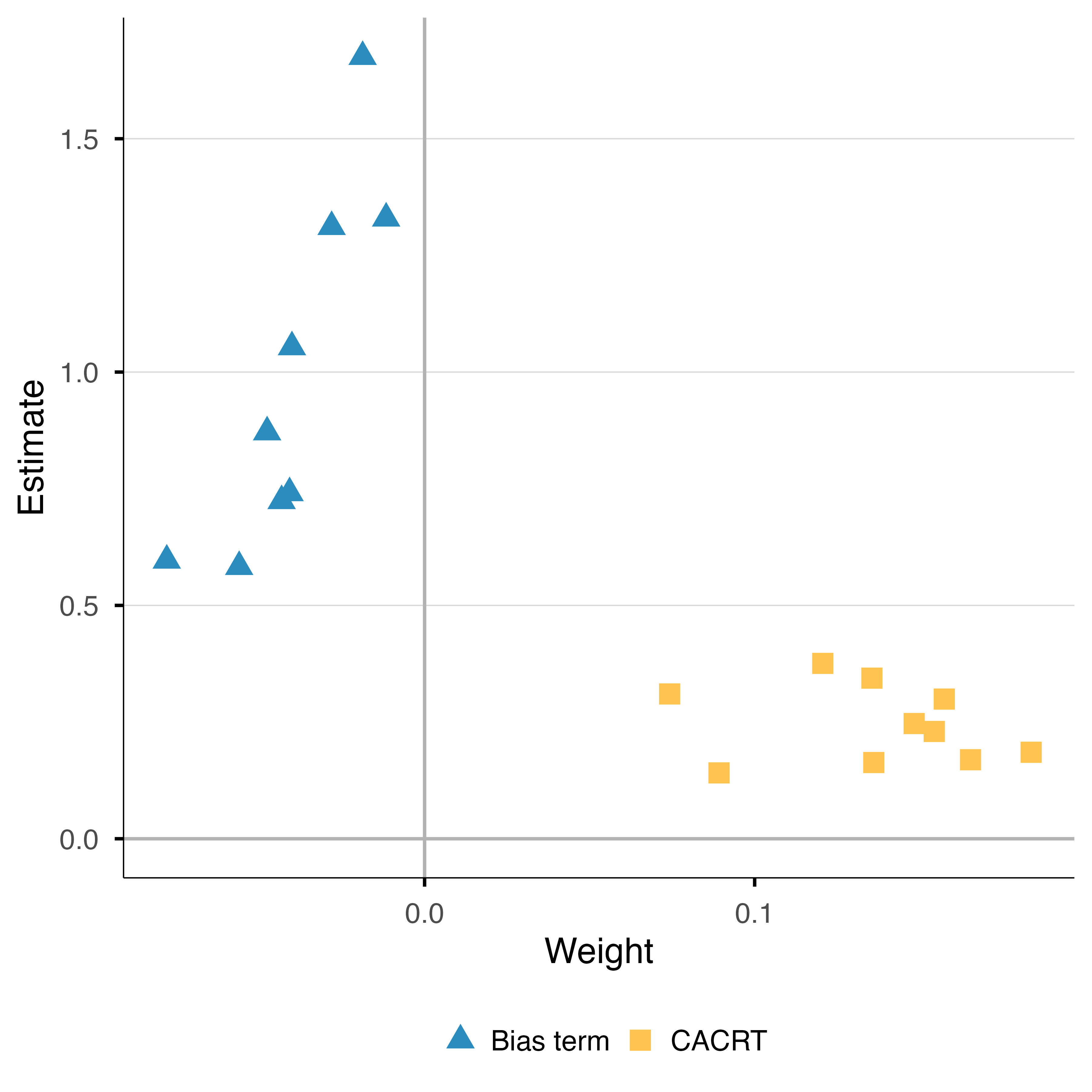}
    \end{adjustbox}
    \caption{Decomposition result for the TWFEIV estimand in \cite{Oreopoulos2006-bn}. \textit{Notes}: This figure plots the estimated weights and the corresponding estimates for each $CACRT_{e,t}$ and $\Delta_{e,t}$ (bias term). The closed squares are the weights and the corresponding estimates for each $CACRT_{e,t}$. The closed triangles are the weights and the corresponding estimates for each $\Delta_{e,t}$ (bias term).}
    \label{ApeDfigure1}
\end{figure}

\section{Comparing DID-IV with Fuzzy DID}\label{ApeE}
In this section, we compare the DID-IV with the Fuzzy DID proposed in \cite{De_Chaisemartin2018-xe} (henceforth, ``dCDH''). Although both papers consider the same empirical setting and the same Wald-DID estimand, they differ in their target parameters and associated identifying assumptions. This section clarifies the implications of these differences for treatment adoption behavior, the interpretation of the target parameter, and the use of the Wald-DID estimand.


\subsection{Similarities and differences between this paper and dCDH}\label{ApeE.1}
In this subsection, we first clarify the similarities and differences between this paper and dCDH with respect to the empirical setting, the target parameter, the estimand, and the identifying assumptions. This subsection provides the foundation for the subsequent comparison.
\subsubsection*{Empirical setting}\label{ApeE.1.1}
In this paper, we mainly consider a two-period, two-group setting in which one group becomes exposed to the instrument in the second period (the exposed group), while the other group is never exposed to the instrument in either period (the unexposed group). The key features of this setting are that (i) the instrument is assigned only in the second period (sharp assignment of the instrument), and (ii) the group indicator captures exposure to the instrument. We refer to this setup as the ``DID-IV'' setting.\par
Under this definition, dCDH also consider the DID-IV setting. In particular, the group variable $G$ in dCDH plays the role of the instrument $Z$ in this paper. This follows from the fact that $G$ is included in their treatment participation equation:
\[
D = \mathbf{1}\{V \geq v_{GT}\}
\]
(see Assumption~3 in dCDH), which corresponds to condition (ii) of the DID-IV setting. Condition (i) is also assumed in dCDH: they impose $v_{10} = v_{00}$ throughout their proofs (see page~1019 of dCDH), implying that the treatment participation thresholds at time $T=0$ are identical across groups $G=0$ and $G=1$.\par
Here, it is useful to clarify the distinction between the empirical setting and the design in dCDH. In dCDH, the Fuzzy DID design is described as a situation in which the treatment rate in one group increases more than that in another group (see Assumption 1 in dCDH). However, this characterization corresponds to an identifying assumption, rather than a primitive feature of the empirical setting. Independently of Assumption $1$, dCDH also define the Fuzzy DID setting as $D \neq G \times T$. This definition corresponds to the DID-IV setting once combined with the restriction $v_{10}=v_{00}$.\par
Based on the above discussion, in what follows, we replace $G$ with $Z$ and refer to the groups $Z=1$ and $Z=0$ as the exposed and unexposed groups, respectively.
\subsubsection*{The target parameter}\label{ApeE.1.2}
In this paper, we consider the LATET as our target parameter. By contrast, dCDH focus on the switcher local average treatment effect on the treated (SLATET) defined below.
\begin{Def}
The switcher local average treatment effect on the treated (SLATET) is
\begin{align*}
SLATET &\equiv E[Y(1)-Y(0)|Z=1,T=1,D_{0}(0) < D_{1}(1)]\\
&=E[Y(1)-Y(0)|Z=1,T=1,SW],
\end{align*}
where we use the restriction $v_{10}=v_{00}$ in dCDH, which implies $D_{0}(1)=D_{0}(0)$.
\end{Def}
Here, to ease interpretation, we slightly modify the notation in dCDH. Specifically, we define $D_t(Z) = \mathbf{1}\{V \geq v_{Zt}\}$ as the treatment status at time $T=t$ for group $Z$, rather than using the notation $D(t)$ in dCDH.\par
This parameter measures the treatment effects for units who belong to the exposed group ($Z=1$) and switch into treatment at time $T=1$ (switchers). In subsection~\ref{ApeE.3}, we clarify the differences between the SLATET and the LATET considered in this paper.
\subsubsection*{The estimand}\label{ApeE.1.3}
Both papers consider the same Wald-DID estimand. However, the stance regarding the use of the Wald-DID estimand differs between this paper and dCDH. Specifically, this paper views the Wald-DID estimand as the natural estimand for identifying the LATET. By contrast, dCDH point out that the Wald-DID estimand identifies the SLATET only if the treatment effect is stable over time, and propose alternative estimands that avoid this restriction. We return to this point in subsection~\ref{ApeE.4}.
\subsubsection*{The identifying assumptions}\label{ApeE.1.4}
dCDH impose the following identifying assumptions for the Wald-DID estimand to identify the SLATET. For detailed discussions of each assumption, see dCDH.
\begin{Assumption}[Fuzzy design]
\label{chas1}
\begin{align}
\label{cheq1}
&E[D|Z=1,T=1]>E[D|Z=1,T=0],\\
\label{cheq2}
&E[D|Z=1,T=1]-E[D|Z=1,T=0] >E[D|Z=0,T=1]-E[D|Z=0,T=0].
\end{align}
\end{Assumption}

\begin{Assumption}[Stable percentage of treated units in an unexposed group]
\label{chas2}
\begin{align*}
0<E[D|Z=0,T=1]=E[D|Z=0,T=0]<1.
\end{align*}
\end{Assumption}

\begin{Assumption}[Treatment participation equation]
\label{chas3}
\begin{align*}
D=\mathbf{1}\{V \geq v_{ZT}\},\hspace{1mm}\text{with}\hspace{2mm}V \indep T|Z.
\end{align*}
\end{Assumption}

\begin{Assumption}[Parallel trends assumption in the untreated outcome]
\label{chas4}
\begin{align*}
&E[Y(0)|Z=1,T=1]-E[Y(0)|Z=1,T=0]\\
=&E[Y(0)|Z=0,T=1]-E[Y(0)|Z=0,T=0].
\end{align*}
\end{Assumption}

\begin{Assumption}[Stable treatment effect over time]
\label{chas5}
\begin{align*}
E[Y(1)-Y(0)|Z,T=1,D_0(Z)=1]=E[Y(1)-Y(0)|Z,T=0,D_0(Z)=1].
\end{align*}
\end{Assumption}
Again, in Assumption~\ref{chas5}, we adopt the notation $D_t(Z) = \mathbf{1}\{V \geq v_{Zt}\}$, rather than $D(t)$ used in dCDH, to facilitate interpretation.\par
We now state the detailed connections between the identifying assumptions in dCDH and those in this paper. Because dCDH consider a repeated cross-section setting, when comparing the identifying assumptions across the two papers, we refer to Assumptions \ref{ApeBas1}-\ref{ApeBas7}.\par
First, both dCDH and this paper impose condition~\eqref{cheq2} in Assumption~\ref{chas1}, which corresponds to the standard relevance condition. Next, both dCDH and this paper impose Assumption~\ref{ApeBas2}. Specifically, Assumption~\ref{chas3} (treatment participation equation) in dCDH implies Assumption~\ref{ApeBas2} in this paper (\citeonline{Vytlacil2002-fj}).\par
Moreover, dCDH implicitly impose Assumptions~\ref{ApeBas1} and~\ref{ApeBas3} when formalizing potential outcomes and treatment participation. Specifically, dCDH do not introduce instrument paths in their framework and define potential outcomes as $Y(D)$, which implies Assumption~\ref{ApeBas1}. In addition, as shown in Assumption~\ref{chas3}, dCDH assume that the treatment participation equation depends only on the current instrument status. Combined with sharp assignment of the instrument ($v_{10}=v_{00}$), this implies Assumption~\ref{ApeBas3}.\par
The differences between dCDH's assumptions and ours are as follows. First, dCDH impose the parallel trends assumption in untreated outcomes (Assumption~\ref{chas4}), whereas we impose the parallel trends assumption in unexposed outcomes. Second, dCDH assume stable treatment rates in the unexposed group (Assumption~\ref{chas2}), whereas we assume the parallel trends in treatment between exposed and unexposed groups. Third, dCDH impose an independence assumption between unobserved heterogeneity $V$ and time $T$ conditional on $Z$ in Assumption~\ref{chas3}, whereas we do not impose such a restriction. Fourth, the treatment participation equation in Assumption~\ref{chas3} is stronger than Assumption~\ref{ApeBas2} in this paper. In particular, Assumption~\ref{chas3} imposes the monotonicity with respect to time $T$, in addition to the monotonicity with respect to instrument $Z$ (see Lemma~\ref{E.2.3.lemma1} in Subsection~\ref{ApeE.2}).
Finally, dCDH impose condition~\eqref{cheq1} in Assumption~\ref{chas1}, which requires that the treatment rate in the exposed group increases between the two periods, whereas we do not impose this condition.\par
In the next subsection, we examine how these differences imply different restrictions on treatment adoption behavior across the two designs.
\begin{Remark}
When Assumption \ref{chas2} is violated, dCDH consider the following assumption for the Wald-DID estimand to capture the SLATET.
\begin{Assumption}[Homogeneous treatment effect between exposed and unexposed groups]
\label{Homogeneous1}
\begin{align*}
SLATET=SLATET',
\end{align*}
where we define\footnote{dCDH denote SLATET and SLATET' as $\Delta$ and $\Delta'$ respectively. When we define the SLATET, we use $D_{0}(0)=D_{0}(1)$, which follows from the sharp assignment of the instrument: $v_{10}=v_{00}$.}:
\begin{align*}
SLATET&=E[Y(1)-Y(0)|Z=1,T=1,D_{0}(0) < D_{1}(1)],\\
SLATET'&=E[Y(1)-Y(0)|Z=0,T=1,D_{0}(0) \neq D_{1}(0)].
\end{align*}
\end{Assumption}
This assumption requires that the treatment effects among the switchers should be the same between exposed and unexposed groups, which dCDH call the homogeneous treatment effect assumption. In Appendix \ref{ApeE.3}, we present the decomposition result for the SLATET, and show that we cannot straightforwardly interpret this assumption as requiring the homogeneous treatment effect between the two groups. In the following discussion, we therefore treat Assumptions \ref{chas1}-\ref{chas5} as dCDH's main identifying assumptions.\par
\end{Remark}
\subsection{Restrictions on treatment adoption behavior}\label{ApeE.2}
In this subsection, we compare the restrictions on treatment adoption behavior between the two designs. Since both this paper and dCDH impose Assumption \ref{ApeBas3} (see the previous section), we adopt the notation $(D_{0}(0),D_{1}(0), D_{1}(1))$ for potential treatment choices in the following. Note that while we do not use $D_{0}(1)$ in DID-IV designs due to the sharp assignment of the instrument, we can define $D_{0}(1)=\mathbf{1}\{V \geq v_{10}\}$ in dCDH's notation. However, since dCDH assume $v_{10}=v_{00}$, we have $D_{0}(1)=D_{0}(0)$ almost surely. We use this fact repeatedly in the following discussion.\par
\subsubsection{Preparation}\label{ApeE.2.1}
We first describe the heterogeneity in treatment adoption behavior under the DID-IV setting. This allows us to clarify which latent adoption types are ruled out by the identifying assumptions in Fuzzy DID, and how these restrictions differ from those imposed in DID-IV. To do so, we introduce the following notation. Let $G^{T}=(D_{0}(0),D_{1}(0))$ denote the group variable that represents the treatment adoption process across the two periods when the instrument is $Z=0$ in the second period. That is, $G^{T}$ captures whether an individual would start, stop, or never adopt treatment over time in the absence of the instrument. Analogous to $G^{Z}=(D_{1}(0),D_{1}(1))$ (which was introduced in Section \ref{sec2.1} for the panel data case), we define $G^{T}=(0,0) \equiv NT^{T}$ as time never-takers, $G^{T}=(1,1) \equiv AT^{T}$ as time always-takers, $G^{T}=(0,1) \equiv CM^{T}$ as time compliers, and $G^{T}=(1,0) \equiv DF^{T}$ as time defiers.\par
Using the group variables $G^{Z}$ and $G^{T}$, we can partition units into eight types within each group. Specifically, we obtain the following tables for the exposed and unexposed groups, respectively.
\begin{table}[H] \centering \renewcommand{\arraystretch}{1.2} \caption{Exposed group ($z=1$)} \label{E.2.1.table1} \begin{tabular*}{14cm}{p{7cm}c@{\hspace{1cm}}c} \hline \hline observed & \multicolumn{2}{c}{counterfactual} \\ $D_0(0)$\hspace{2mm}\text{or}\hspace{2mm}$D_1(1)$ & $D_1(0)=1$ & $D_1(0)=0$ \\ \hline $D_0(0)=1, D_1(1)=1$ & $AT^Z\land AT^T$ & $CM^Z\land DF^T$ \\ $D_0(0)=1, D_1(1)=0$ & $DF^Z\land AT^T$ & $NT^Z\land DF^T$ \\ $D_0(0)=0, D_1(1)=1$ & $AT^Z\land CM^T$ & $CM^Z\land NT^T$ \\ $D_0(0)=0, D_1(1)=0$ & $DF^Z\land CM^T$ & $NT^Z\land NT^T$ \\ \hline \end{tabular*} \end{table} 

\begin{table}[H] \centering \renewcommand{\arraystretch}{1.2} \caption{Unexposed group ($z=0$)} \label{E.2.1.table2} \begin{tabular*}{14cm}{p{7cm}c@{\hspace{1cm}}c} \hline \hline observed & \multicolumn{2}{c}{counterfactual} \\ $D_0(0)$\hspace{2mm}\text{or}\hspace{2mm}$D_1(0)$ & $D_1(1)=1$ & $D_1(1)=0$ \\ \hline $D_0(0)=1,D_1(0)=1$ & $AT^Z\land AT^T$ & $DF^Z\land AT^T$ \\ $D_0(0)=1,D_1(0)=0$ & $CM^Z\land DF^T$ & $NT^Z\land DF^T$ \\ $D_0(0)=0,D_1(0)=1$ & $AT^Z\land CM^T$ & $DF^Z\land CM^T$ \\ $D_0(0)=0,D_1(0)=0$ & $CM^Z\land NT^T$ & $NT^Z\land NT^T$ \\ \hline \end{tabular*} \\[5pt] \begin{minipage}{0.95\textwidth} \footnotesize \textit{Notes}: These tables represent mutually exclusive and exhaustive types under the DID-IV setting. \end{minipage} \end{table}\par
In the following, we examine which types are excluded by the identifying assumptions under DID-IV and Fuzzy DID, respectively.
\subsubsection{DID-IV}\label{ApeE.2.2}
In DID-IV designs, the only restriction on treatment adoption behavior is the monotonicity
assumption (Assumption \ref{ApeBas2}),
which rules out either $DF^{Z}$ or $CM^{Z}$.
Therefore, if we rule out $DF^{Z}$, we obtain the following table.
\begin{table}[H]
\centering
\renewcommand{\arraystretch}{1.2}
\caption{Exposed group ($z=1$)}
\label{Apendix.E.2.2.table1}
\begin{tabular*}{14cm}{p{7cm}c@{\hspace{1cm}}c}
\hline \hline
observed & \multicolumn{2}{c}{counterfactual} \\
$D_0(0)$\hspace{2mm}\text{or}\hspace{2mm}$D_1(1)$ & $D_1(0)=1$ & $D_1(0)=0$ \\
\hline
$D_0(0)=1, D_1(1)=1$ & $AT^Z\land AT^T$ & $CM^Z\land DF^T$ \\
$D_0(0)=1, D_1(1)=0$ & \cellcolor[gray]{0.8}$DF^Z\land AT^T$ & $NT^Z\land DF^T$ \\
$D_0(0)=0, D_1(1)=1$ & $AT^Z\land CM^T$ & $CM^Z\land NT^T$ \\
$D_0(0)=0, D_1(1)=0$ & \cellcolor[gray]{0.8}$DF^Z\land CM^T$ & $NT^Z\land NT^T$ \\
\hline
\end{tabular*}
\end{table}

\begin{table}[H]
\centering
\renewcommand{\arraystretch}{1.2}
\caption{Unexposed group ($z=0$)}
\label{Apendix.E.2.2.table2}
\begin{tabular*}{14cm}{p{7cm}c@{\hspace{1cm}}c}
\hline \hline
observed & \multicolumn{2}{c}{counterfactual} \\
$D_0(0)$\hspace{2mm}\text{or}\hspace{2mm}$D_1(0)$ & $D_1(1)=1$ & $D_1(1)=0$ \\
\hline
$D_0(0)=1,D_1(0)=1$ & $AT^Z\land AT^T$ & \cellcolor[gray]{0.8}$DF^Z\land AT^T$ \\
$D_0(0)=1,D_1(0)=0$ & $CM^Z\land DF^T$ & $NT^Z\land DF^T$ \\
$D_0(0)=0,D_1(0)=1$ & $AT^Z\land CM^T$ & \cellcolor[gray]{0.8}$DF^Z\land CM^T$ \\
$D_0(0)=0,D_1(0)=0$ & $CM^Z\land NT^T$ & $NT^Z\land NT^T$ \\
\hline
\end{tabular*}
\\[5pt]
\begin{minipage}{0.95\textwidth}
\footnotesize
\textit{Notes}: These tables represent mutually exclusive and exhaustive types under DID-IV designs. The types painted in gray color are excluded by Assumption \ref{ApeBas2}.
\end{minipage}
\end{table}\par

\subsubsection{Fuzzy DID}\label{ApeE.2.3}
Next, we consider the treatment adoption behavior under Fuzzy DID. We first prove some lemmas implied by dCDH's identifying assumptions. Based on these lemmas, we then construct tables that summarize all latent types in the exposed and unexposed groups under Fuzzy DID.\par
We begin with Lemma \ref{E.2.3.lemma1} below.
\begin{Lemma}
\label{E.2.3.lemma1}
Assumption \ref{chas3} implies the following two monotonicity conditions:
\begin{Assumption}[IV monotonicity at time $T=1$]
\label{as32}
\begin{align*}
Pr(D_1(1) \geq D_1(0))=1\hspace{2mm}\text{or}\hspace{2mm}Pr(D_1(1) \leq D_1(0))=1.
\end{align*}
\end{Assumption}
\begin{Assumption}[Time monotonicity for each $z \in \{0,1\}$]
\label{as33}
\begin{align*}
Pr(D_0(z) \geq D_1(z))=1\hspace{2mm}\text{or}\hspace{2mm}Pr(D_0(z) \leq D_1(z))=1
\end{align*}
for each $z \in \{0,1\}$. 
\end{Assumption}
\end{Lemma}
\begin{proof}
We show that Assumption \ref{chas3} implies Assumption \ref{as32}. Note that we should have $v_{11} \geq v_{01}$ or $v_{11} \leq v_{01}$. Without loss of generality, suppose $v_{11} \geq v_{01}$. Then, we obtain $D_{1}(1)=\mathbf{1}\{V \geq v_{11}\} \leq D_{1}(0)=\mathbf{1}\{V \geq v_{01}\}$ almost surely.\par
By comparing $v_{z0}$ and $v_{z1}$ for each $z \in \{0,1\}$, we can also show that Assumption \ref{chas3} implies Assumption \ref{as33}.
\end{proof}
Next, we prove Lemma \ref{E.2.3.lemma2} below.
\begin{Lemma}
\label{E.2.3.lemma2}
Suppose Assumptions \ref{chas2} and \ref{chas3} hold. Then, there are no time compliers $CM^T$ and time defiers $DF^{T}$ in an unexposed group $Z=0$.
\end{Lemma}
\begin{proof}
Note that Assumption \ref{chas3} also implies the time monotonicity given $Z=0$:
\begin{align*}
Pr(D_0(0) \geq D_1(0)|Z=0)=1\hspace{2mm}\text{or}\hspace{2mm}Pr(D_0(0) \leq D_1(0)|Z=0)=1.
\end{align*}
This implies that at most one of the two switching types ($CM^T$ or $DF^T$) can exist in an unexposed group $Z=0$. Without loss of generality, suppose that $DF^T$ is ruled out in the unexposed group.\par 
Then, Assumption \ref{chas2} implies:
\begin{align}
&E[D|T=1,Z=0]=E[D|T=0,Z=0]\notag\\
\iff & Pr(D_1(0) > D_0(0)|Z=0)=0\notag\\
\label{commenteq1}
\iff & Pr(CM^T|Z=0)=0
\end{align}
The first equivalence follows from $V \indep T|Z$ in Assumption \ref{chas3}. Equation \eqref{commenteq1} shows that there are no time compliers $CM^T$ in an unexposed group $Z=0$.
\end{proof}
Next, we prove Lemma \ref{E.2.3.lemma3} below. This lemma shows that in an exposed group $Z=1$, there are no units switching from treatment to non-treatment between period $0$ and period $1$. This lemma also implies that the condition \eqref{cheq1} in Assumption \ref{chas1} and Assumption \ref{chas3} ensures the existence of the switchers in an exposed group (that is, the SLATET is well defined).
\begin{Lemma}
\label{E.2.3.lemma3}
The condition \eqref{cheq1} in Assumption \ref{chas1} and Assumption \ref{chas3} imply $Pr(D_0(0) \leq D_1(1)|Z=1)$=1 and $Pr(D_0(0) < D_1(1)|Z=1) > 0$.
\end{Lemma}
\begin{proof}
Note that Assumption \ref{chas3} also implies the time monotonicity given $Z=1$:
\begin{align*}
Pr(D_0(1) \geq D_1(1)|Z=1)=1\hspace{2mm}\text{or}\hspace{2mm}Pr(D_0(1) \leq D_1(1)|Z=1)=1.
\end{align*}
In what follows, we replace $D_0(1)$ with $D_0(0)$.\par
From the condition \eqref{cheq1} in Assumption \ref{chas1}, we have
\begin{align}
&E[D|Z=1,T=1] > E[D|Z=1,T=0]\notag\\
\iff & E[D_1(1)-D_0(0)|Z=1] > 0\notag\\
\label{commentlemma3}
\iff & Pr(D_1(1) > D_0(0)|Z=1)- Pr(D_0(0) > D_1(1)|Z=1) > 0,
\end{align}
where the first equivalence follows from $V \indep T|Z$ in Assumption \ref{chas3} and $D_0(1)=D_0(0)$. Suppose that $Pr(D_0(0) \geq D_1(1)|Z=1)=1$ holds. Then, we have $Pr(D_1(1) > D_0(0)|Z=1)=0$. This contradicts with the condition \eqref{commentlemma3}. Therefore, we should have $Pr(D_0(0) \leq D_1(1)|Z=1)=1$. From the condition \eqref{commentlemma3}, we also have $Pr(D_0(0) < D_1(1)|Z=1) > 0$.
\end{proof}
Note that this lemma also implies that we cannot impose Assumptions \ref{as32} and \ref{as33} that rule out both $CM^{Z}$ and $CM^{T}$, which contradicts with $Pr(D_0(0) < D_1(1)|Z=1) > 0$.\par
Finally, we prove Lemma \ref{E.2.3.lemma4} below.
\begin{Lemma}
\label{E.2.3.lemma4}
The condition \eqref{cheq1} in Assumption \ref{chas1} and Assumption \ref{chas3} imply $Pr(D_0(0) \leq D_1(1)|Z=0)$=1. Equivalently, the type $(DF^{Z}\land AT^{T})$ is ruled out in an unexposed group.
\end{Lemma}
\begin{proof}
    From Lemma \ref{E.2.3.lemma3}, we have $Pr(D_0(0) \leq D_1(1)|Z=1)$=1. Under the threshold representation
$D_0(0)=\mathbf{1}\{V\ge v_{00}\}$ and $D_1(1)=\mathbf{1}\{V\ge v_{11}\}$,
this implies $v_{11}\le v_{00}$. Hence, $D_0(0)\le D_1(1)$ holds almost surely, and therefore
$Pr(D_0(0)\le D_1(1)\mid Z=0)=1$ as well.
\end{proof}
We now clarify how dCDH's identifying assumptions restrict the treatment adoption behavior across units in exposed and unexposed groups, respectively. Tables \ref{Apendix.E.2.3.table1}-\ref{Apendix.E.2.3.table2} show the heterogeneity in treatment adoption behavior under Fuzzy DID. Here, $DF^{Z}$ and $DF^{T}$ are excluded by Assumptions \ref{as32} and \ref{as33} (under $z=0$), respectively. Since these restrictions depend on the direction of the two monotonicity conditions, we paint these types by blue. $DF^{Z} \land AT^{T}$ and $NT^{Z} \land DF^{T}$ in an exposed group are excluded by Lemma \ref{E.2.3.lemma3} and $DF^{T}$, $CM^{T}$ and $DF^{Z} \land AT^{T}$ in an unexposed group are excluded by Lemma \ref{E.2.3.lemma2} and Lemma \ref{E.2.3.lemma4}. Since these restrictions do not depend on the direction of the two monotonicity conditions, we paint these types by gray.\par 
These tables show that four types are excluded in an exposed group and five types are excluded in an unexposed group under Fuzzy DID. Note that if we exclude $DF^{Z}$ and $CM^{T}$ by Assumption \ref{as32} and Assumption \ref{as33} (under $z=0$), the same number of types are excluded in exposed and unexposed groups, respectively. If we exclude $CM^{Z}$ and $DF^{T}$ by Assumptions \ref{as32} and \ref{as33} (under $z=0$), four types are excluded in an exposed group and six types are excluded in an unexposed group.\par

\begin{table}[H]
\centering
\renewcommand{\arraystretch}{1.2}
\caption{Exposed group ($z=1$)}
\label{Apendix.E.2.3.table1}
\begin{tabular*}{14cm}{p{7cm}c@{\hspace{1cm}}c}
\hline \hline
observed & \multicolumn{2}{c}{counterfactual} \\
$D_0(0)$\hspace{2mm}\text{or}\hspace{2mm}$D_1(1)$ & $D_1(0)=1$ & $D_1(0)=0$ \\
\hline
$D_0(0)=1, D_1(1)=1$ & $AT^Z\land AT^T$ & \cellcolor{blue!20}$CM^Z\land DF^T$ \\
$D_0(0)=1, D_1(1)=0$ & \cellcolor[gray]{0.8}$DF^Z\land AT^T$ & \cellcolor[gray]{0.8}$NT^Z\land DF^T$ \\
$D_0(0)=0, D_1(1)=1$ & $AT^Z\land CM^T$ & $CM^Z\land NT^T$ \\
$D_0(0)=0, D_1(1)=0$ & \cellcolor{blue!20}$DF^Z\land CM^T$ & $NT^Z\land NT^T$ \\
\hline
\end{tabular*}
\end{table}

\begin{table}[H]
\centering
\renewcommand{\arraystretch}{1.2}
\caption{Unexposed group ($z=0$)}
\label{Apendix.E.2.3.table2}
\begin{tabular*}{14cm}{p{7cm}c@{\hspace{1cm}}c}
\hline \hline
observed & \multicolumn{2}{c}{counterfactual} \\
$D_0(0)$\hspace{2mm}\text{or}\hspace{2mm}$D_1(0)$ & $D_1(1)=1$ & $D_1(1)=0$ \\
\hline
$D_0(0)=1,D_1(0)=1$ & $AT^Z\land AT^T$ & \cellcolor[gray]{0.8}$DF^Z\land AT^T$ \\
$D_0(0)=1,D_1(0)=0$ & \cellcolor[gray]{0.8}$CM^Z\land DF^T$ & \cellcolor[gray]{0.8}$NT^Z\land DF^T$ \\
$D_0(0)=0,D_1(0)=1$ & \cellcolor[gray]{0.8}$AT^Z\land CM^T$ & \cellcolor[gray]{0.8}$DF^Z\land CM^T$ \\
$D_0(0)=0,D_1(0)=0$ & $CM^Z\land NT^T$ & $NT^Z\land NT^T$ \\
\hline
\end{tabular*}
\\[5pt]
\begin{minipage}{0.95\textwidth}
\footnotesize
\textit{Notes}: These tables represent mutually exclusive and exhaustive types under Fuzzy DID designs. The types painted in gray color are excluded by Lemmas \ref{E.2.3.lemma2} and \ref{E.2.3.lemma3}. The types painted in blue color are excluded by the two monotonicity assumptions \ref{as32} and \ref{as33} (under $z=0$).
\end{minipage}
\end{table}\par

\subsubsection{Comparison}\label{ApeE.2.4}
By comparing Tables \ref{Apendix.E.2.2.table1}-\ref{Apendix.E.2.2.table2} with Tables \ref{Apendix.E.2.3.table1}-\ref{Apendix.E.2.3.table2}, we obtain two implications. First, the restrictions imposed under Fuzzy DID are stronger than those under DID-IV. Second, while the restrictions under DID-IV are symmetric between exposed and unexposed groups, they are asymmetric under Fuzzy DID.\par
Note that these differences are driven by differences in the identifying assumptions
between the two designs. Assumption \ref{chas3}, which is stronger than Assumption \ref{ApeBas2},
imposes time monotonicity and thus rules out one of the two time-switching types ($CM^T$ or $DF^T$).
The condition \eqref{cheq1} in Assumption \ref{chas1}, which DID-IV does not impose,
rules out the type $NT^{Z} \land DF^{T}$ in the exposed group and the type $DF^{Z} \land AT^{T}$ in the unexposed group.
Finally, Assumption \ref{chas2}, combined with Assumption \ref{ApeBas2},
eliminates both $DF^{T}$ and $CM^{T}$ in the unexposed group under Fuzzy DID.
\subsection{The difference in target parameter}\label{ApeE.3}
In this subsection, building on Tables \ref{Apendix.E.2.3.table1}-\ref{Apendix.E.2.3.table2}, we next clarify the difference between dCDH's target parameter (SLATET) and ours (LATET). Specifically, we decompose the SLATET under the identifying assumptions in Fuzzy DID and discuss its implications.\par
We prove Theorem \ref{E.3.Theorem1} below.
\begin{Theorem}
\label{E.3.Theorem1}
Suppose Assumptions \ref{chas1}-\ref{chas5} hold. Then, we can decompose the SLATET as follows, which depends on the direction of the two monotonicity assumptions \ref{as32} and \ref{as33} (under $z=0$).
\begin{itemize}
    \item [(i)] Under the monotonicity assumptions that exclude $DF^{Z}$ and $DF^{T}$, we have:
    \begin{align}
&SLATET \equiv E[Y(1)-Y(0)|Z=1,T=1,D_{0}(0) < D_{1}(1)]\notag\\
&=E[Y(1)-Y(0)|Z=1,T=1,AT^{Z} \land CM^{T}]Pr(AT^{Z} \land CM^{T}|Z=1,SW)\notag\\
\label{E.3.eq1}
&+E[Y(1)-Y(0)|Z=1,T=1,CM^{Z} \land NT^{T}]Pr(CM^{Z} \land NT^{T}|Z=1,SW).
\end{align} 
    \item [(ii)] Under the monotonicity assumptions that exclude $DF^{Z}$ and $CM^{T}$, we have:
    \begin{align*}
SLATET &\equiv E[Y(1)-Y(0)|Z=1,T=1,D_{0}(0) < D_{1}(1)]\notag\\
&=E[Y(1)-Y(0)|Z=1,T=1,CM^{Z} \land NT^{T}].
\end{align*} 
    \item [(iii)] Under the monotonicity assumptions that exclude $CM^{Z}$ and $DF^{T}$, we have: 
    \begin{align*}
SLATET &\equiv E[Y(1)-Y(0)|Z=1,T=1,D_{0}(0) < D_{1}(1)]\notag\\
&=E[Y(1)-Y(0)|Z=1,T=1,AT^{Z} \land CM^{T}].\notag
\end{align*} 
\end{itemize}
\end{Theorem}
\begin{proof}
Regarding the case (i), the proof directly follows from Table \ref{Apendix.E.2.3.table1}. We can also construct the table in an exposed group for the cases (ii) and (iii), which ensures the above interpretation of the SLATET.
\end{proof}
Theorem \ref{E.3.Theorem1} has several important implications. First, the interpretation of the SLATET depends on the direction of the two
monotonicity conditions \ref{as32} and \ref{as33} (under $z=0$). This implies that, if empirical researchers wish to understand which latent type’s causal effect is identified by the SLATET, they need to specify the direction of these monotonicity conditions \textit{ex ante}.\par
Second, the SLATET may fail to be a policy-relevant treatment parameter
(\cite{Heckman2001-ur}), even when the instrument represents the policy change of
interest to the researcher.
Specifically, under case (i), the SLATET is contaminated by the treatment effects
among the type $AT^{Z} \land CM^{T}$, who are affected by time but not affected
by the instrument.
Under case (iii), the SLATET captures only the treatment effects among the type $AT^{Z} \land CM^{T}$.\par

Third, the SLATET may be defined on a strictly smaller population than the LATET.
In particular, under case (ii), the SLATET identifies the causal effect among the type $CM^{Z} \land NT^{T}$, which is a strict subpopulation of
$CM^{Z}$.
This implies that even when empirical researchers choose the direction of the two
monotonicity conditions so that the SLATET is policy relevant, it will measure the
treatment effects for a narrower population than the LATET.\par
\begin{Remark}\label{D.3.Remark1}
    Using the decomposition result for the SLATET, we can also clarify the interpretation of Assumption \ref{Homogeneous1}, which dCDH introduce as an alternative to Assumption \ref{chas2}.\par
 First, we describe the heterogeneity in treatment adoption behavior in the unexposed group without Assumption \ref{chas2}. If we do not impose Assumption \ref{chas2}, we obtain Table \ref{E.3.table1} for the
unexposed group. Here, we consider the case in which Assumptions \ref{as32} and \ref{as33}
exclude $DF^{Z}$ and $DF^{T}$ in the unexposed group, respectively.
    \begin{table}[H]
\centering
\renewcommand{\arraystretch}{1.2}
\caption{Unexposed group ($z=0$)}
\label{E.3.table1}
\begin{tabular*}{14cm}{p{7cm}c@{\hspace{1cm}}c}
\hline \hline
observed & \multicolumn{2}{c}{counterfactual} \\
$D_0(0)$\hspace{2mm}\text{or}\hspace{2mm}$D_1(0)$ & $D_1(1)=1$ & $D_1(1)=0$ \\
\hline
$D_0(0)=1,D_1(0)=1$ & $AT^Z\land AT^T$ & \cellcolor[gray]{0.8}$DF^Z\land AT^T$ \\
$D_0(0)=1,D_1(0)=0$ & \cellcolor{blue!20}$CM^Z\land DF^T$ & \cellcolor{blue!20}$NT^Z\land DF^T$ \\
$D_0(0)=0,D_1(0)=1$ & $AT^Z\land CM^T$ & \cellcolor{blue!20}$DF^Z\land CM^T$ \\
$D_0(0)=0,D_1(0)=0$ & $CM^Z\land NT^T$ & $NT^Z\land NT^T$ \\
\hline
\end{tabular*}
\begin{minipage}{0.95\textwidth}
\footnotesize
\textit{Notes}: This table represents mutually exclusive and exhaustive types in an unexposed group under Fuzzy DID without Assumption \ref{chas2}. The types painted in gray color are excluded by Lemma \ref{E.2.3.lemma4}. The types painted in blue color are excluded by the two monotonicity assumptions \ref{as32} and \ref{as33} (under $z=0$).
\end{minipage}
\end{table}
Then, from Table \ref{E.3.table1}, we can interpret the SLATET' (see Assumption \ref{Homogeneous1}) as the treatment effects among the type $AT^{Z} \land CM^{T}$ in an unexposed group:
\begin{align}
SLATET' &\equiv E[Y(1)-Y(0)|Z=0,T=1,D_{0}(0) \neq D_{1}(0)]\notag\\
\label{E.3.eq2}
&=E[Y(1)-Y(0)|Z=0,T=1,AT^{Z} \land CM^{T}].
\end{align}\par 
Combining \eqref{E.3.eq2} with \eqref{E.3.eq1}, we see that Assumption \ref{Homogeneous1}
($SLATET = SLATET'$) does not require homogeneous treatment effects for the same latent
type across exposed and unexposed groups. Rather, it equates a weighted average of treatment
effects among types $AT^{Z} \land CM^{T}$ and $CM^{Z} \land NT^{T}$ in the exposed group
with the treatment effect for type $AT^{Z} \land CM^{T}$ in the unexposed group.
\end{Remark}
\subsection{The use of the Wald-DID estimand}\label{ApeE.4}
In this section, we explain why the role of the Wald–DID estimand differs between Fuzzy DID and DID-IV designs. In DID-IV, we adopt the Wald–DID estimand as a natural estimand for identifying the LATET. By contrast, dCDH point out that, under Fuzzy DID, the Wald-DID estimand requires the stable treatment effect assumption (Assumption \ref{chas5}) to identify the SLATET. In the following discussion, we show that this difference arises because Fuzzy DID and DID-IV rely on different types of the parallel trends assumption. 

\subsubsection{Why the Wald-DID estimand requires Assumption \ref{chas5} under Fuzzy DID}\label{ApeE4.1}
We begin by explaining why the Wald-DID estimand requires Assumption \ref{chas5} under Fuzzy DID. We show that this requirement does not stem from the Wald-DID estimand itself, but rather from imposing the parallel trends assumption on untreated outcomes.\par 
To make this point precise, we first clarify Assumption \ref{chas5} using latent treatment adoption types. This assumption is in fact conditional on $D_0(Z)=1$ in addition to $Z$ and $T$, but it is not immediately clear from dCDH why such a strong restriction is imposed only on this subpopulation.\par
From Tables \ref{Apendix.E.2.3.table1}-\ref{Apendix.E.2.3.table2}, we can reinterpret Assumption \ref{chas5} as follows\footnote{Under the identifying assumptions in Fuzzy DID, the type $DF^{Z} \land AT^{T}$ is eliminated in both groups. Therefore, we write the type $AT^{Z} \land AT^{T}$ as $AT^{T}$.}:
\begin{align}
E[Y(1)-Y(0)|Z=0,T=1,D_0(0)=1]&=E[Y(1)-Y(0)|Z=0,T=0,D_0(0)=1]\notag\\
\label{E.4.1.eq1}
\iff E[Y(1)-Y(0)|Z=0,T=1,AT^{T}]&=E[Y(1)-Y(0)|Z=0,T=0,AT^{T}],\\
E[Y(1)-Y(0)|Z=1,T=1,D_0(1)=1]&=E[Y(1)-Y(0)|Z=1,T=0,D_0(1)=1]\notag\\
\label{E.4.1.eq2}
\iff  E[Y(1)-Y(0)|Z=1,T=1,AT^{T}]&=E[Y(1)-Y(0)|Z=1,T=0,AT^{T}].
\end{align}
Here, we use $D_0(0)=D_0(1)$ from the sharp assignment of the instrument $v_{10}=v_{00}$. Equations \eqref{E.4.1.eq1}-\eqref{E.4.1.eq2} show that Assumption \ref{chas5} actually requires the stable treatment effect only for the time always-takers $AT^{T}$ within each group.\par
We now show that Assumption \ref{chas5} is not required by the Wald-DID estimand per se,
but arises because dCDH impose the parallel trends assumption
on untreated outcomes. The key point is that imposing the parallel trends assumption in untreated outcomes under the DID-IV setting forces us to impute the counterfactual untreated potential outcome for the time always-takers $AT^{T}$, which in turn requires imposing
Assumption \ref{chas5}.\par
To see this formally, we focus on the numerator of the Wald-DID estimand. Under Fuzzy DID, we can decompose the conditional expectations $E[Y|Z=0,T=1]-E[Y|Z=0,T=0]$ and $E[Y|Z=1,T=1]-E[Y|Z=1,T=0]$ as follows.
\begin{align*}
&E[Y|Z=0,T=1]-E[Y|Z=0,T=0]\\
=&(E[Y(1)-Y(0)|Z=0,T=1,AT^{T}]-E[Y(1)-Y(0)|Z=0,T=0,AT^{T}])\\
\times & Pr(AT^{T}|Z=0)+\{E[Y(0)|Z=0,T=1]-E[Y(0)|Z=0,T=0]\}.
\end{align*}
\begin{align*}
&E[Y|Z=1,T=1]-E[Y|Z=1,T=0]\\
=&(E[Y(1)-Y(0)|Z=1,T=1,AT^{T}]-E[Y(1)-Y(0)|Z=1,T=0,AT^{T}])\\
\times & Pr(AT^{T}|Z=1)+\{E[Y(0)|Z=1,T=1]-E[Y(0)|Z=1,T=0]\}\\
+&E[Y(1)-Y(0)|SW,Z=1,T=1]Pr(SW|Z=1).
\end{align*}\par
These decompositions clarify why Assumption \ref{chas5} becomes necessary.
Imposing parallel trends on untreated outcomes requires constructing
the average time trends of the untreated potential outcome in both groups (the second term in each conditional expectation).
Since these trends are not observed for time always-takers $AT^{T}$,
this construction necessarily introduces the first term
in the above decompositions, which captures changes
in the average treatment effect among $AT^{T}$.
For the Wald-DID estimand to identify the SLATET,
one must therefore impose stable treatment effects for $AT^{T}$,
as in Assumption \ref{chas5}.\footnote{Here, we do not require the stable treatment assumption for the time compliers $CM^{T}$ and the type $CM^{Z} \land NT^{T}$ in an exposed group: by adding and subtracting the expectation of the untreated potential outcome path for these types, we have the third term $E[Y(1)-Y(0)|SW,Z=1,T=1]Pr(SW|Z=1)$, which is necessary for identifying the SLATET.}\par
Importantly, this restriction is not intrinsic to the Wald-DID estimand itself,
but is a consequence of imposing parallel trends on untreated potential outcomes.
\subsubsection{Why the issue does not arise under DID-IV}\label{ApeE.4.2}
In contrast to Fuzzy DID, the issue described above does not arise under DID-IV. The key difference is that, in DID-IV designs, the parallel trends assumption
is imposed on unexposed outcomes rather than untreated potential outcomes. As a result, the Wald-DID estimand does not require imputing the counterfactual untreated potential outcomes for time always-takers,
and hence does not require Assumption \ref{chas5}.
\subsection{Which parallel trends assumption is suitable for the DID-IV setting}\label{ApeE.5}
In Subsection \ref{ApeE.4}, we showed that differences in the parallel trends (PT) assumption
lead to different stances toward the use of the Wald-DID estimand
across Fuzzy DID and DID-IV designs.
In this subsection, we argue that the PT assumption on unexposed outcomes is more suitable for DID-IV settings in terms of (i) its alignment with the source of identifying variation,
(ii) empirical relevance,
and (iii) testability.
\subsubsection*{Which PT assumption aligns with the source of identifying variation}
The PT assumption on unexposed outcomes
is more consistent with the source of identifying variation in DID-IV settings
than the PT assumption on untreated potential outcomes. This is because the latter relies on variation in treatment,
while the former exploits variation in instrument.\par
Note that in the canonical DID designs, we rely on the PT assumption in untreated outcomes because the units are partitioned into treatment and control groups based on the exposure to the treatment in the second period. In the DID-IV settings, however, the units are partitioned into exposed and unexposed groups based on the exposure to the instrument in the second period. Therefore, imposing the PT assumption on unexposed outcomes
would be more natural than imposing the PT assumption
on untreated potential outcomes in DID-IV settings.
\subsubsection*{Empirical relevance}
In DID-IV settings, the parallel trends (PT) assumption on untreated potential outcomes
is often implausible in practice,
while the PT assumption on unexposed outcomes is more appealing.
More precisely, because imposing parallel trends on untreated potential outcomes
is often not attractive,
empirical researchers exploit variation in exposure to an instrument
and rely on parallel trends in unexposed outcomes in practice.\par

A canonical example is \cite{Duflo2001-nh}, who study the returns to schooling in Indonesia.
The outcome is log annual earnings and the treatment is educational attainment.
In this context, imposing PT on untreated potential outcomes would require
that the average time trends of log earnings be the same across individuals
in the absence of schooling.
Given the strong selection into education,
such an assumption is difficult to justify empirically.
To address this concern,
\cite{Duflo2001-nh} exploit the staggered rollout
of a school construction program as an instrument for schooling
and instead rely on the identifying assumption
that, in the absence of the program,
the evolution of wages and education across cohorts
would not have differed systematically across regions.
This identifying restriction corresponds precisely
to the parallel trends assumptions on treatment
and on unexposed outcomes in DID-IV designs.\par
\subsubsection*{Testability}
In DID-IV settings, the PT assumption on unexposed outcomes
is indirectly testable using pre-exposed period data as we discussed in Subsection \ref{sec4.3}. By contrast, the PT assumption on untreated potential outcomes
is not testable in general.
This is because some units may already adopt the treatment
before period $0$,
which prevents a comparison of pre-trends
in untreated outcomes
between exposed and unexposed groups.
From this perspective,
the PT assumption on unexposed outcomes
is more attractive to empirical researchers.\par
Although dCDH propose a placebo test in their appendix,
its scope for assessing the plausibility of the parallel trends assumption
in untreated potential outcomes is somewhat limited.
Their procedure consists of two steps.\footnote{Although dCDH assume
conditions \eqref{chastest1} and \eqref{chastest2},
they note that when either condition fails,
the second step can no longer be used to assess Assumption \ref{chas4}.
We therefore describe their procedure in two steps.}
First, we examine whether the share of treated units
remain unchanged in both groups from $T=-1$ to $T=0$:
\begin{align}
\label{chastest1}
&E[D|Z=1,T=0]-E[D|Z=1,T=-1]=0,\\
\label{chastest2}
&E[D|Z=0,T=0]-E[D|Z=0,T=-1]=0.
\end{align}
Second, we test the following null hypothesis:
\begin{align}
&E[Y|Z=1,T=0]-E[Y|Z=1,T=-1]\notag\\
\label{chastest3}
=&E[Y|Z=0,T=0]-E[Y|Z=0,T=-1].
\end{align}\par

This testing strategy is informative under the maintained conditions
\eqref{chastest1} and \eqref{chastest2},
but its applicability may be limited in practice.
First, these first-step conditions are strong,
and the procedure does not provide guidance
when they are violated.
Second, the second step does not directly assess
the evolution of the mean untreated outcome
between exposed and unexposed groups in pre-exposed periods.
Indeed, dCDH interpret \eqref{chastest3}
as a necessary condition for Assumptions \ref{chas4} and \ref{chas5}
to hold under \eqref{chastest1} and \eqref{chastest2}.
Finally, while the discussion focuses on the case
in which data are available for a single pre-exposed period,
it is not clear how the procedure should be extended
to settings with multiple pre-exposed periods.\par
Notably, dCDH's procedure corresponds to our pre-trend test in Subsection \ref{sec4.3}
once the first-step conditions \eqref{chastest1}–\eqref{chastest2}
are replaced by the weaker condition:
\begin{align}
\label{chastest4}
&E[D|Z=1,T=0]-E[D|Z=1,T=-1]\\
=&E[D|Z=0,T=0]-E[D|Z=0,T=-1].
\end{align}
Under repeated cross-section settings,
conditions \eqref{chastest3}–\eqref{chastest4}
coincide with conditions \eqref{pretrend1treatment}–\eqref{pretrend1outcome} in Subsection \ref{sec4.3}.
This observation suggests that the underlying logic
of dCDH's placebo test is more aligned
with the assessment of parallel trends assumptions
in DID-IV designs.
\subsection{Other comparisons}\label{ApeE.6}
In this subsection, we document the relationship between DID-IV and Fuzzy DID in settings with non-binary, ordered treatments and multiple groups. We also discuss the alternative estimands proposed in dCDH that avoid Assumption \ref{chas5}.
\subsubsection{Non-binary, ordered treatment}\label{ApeE.6.1}
Again, we first document the similarities and differences between this paper and dCDH with respect to the empirical setting, the target parameter and the identifying assumptions.

\subsubsection*{Empirical setting}
In the case of non-binary, ordered treatment, the condition (ii) of the DID-IV setting is satisfied in dCDH because the group variable $G$ is included in their ordered treatment equation (see Assumption 3' in dCDH). By contrast, to the best of our knowledge, the condition (i) of the DID-IV setting is not imposed in dCDH: the threshold $v^{d}_{GT}$ in their ordered treatment equation is allowed to depend on $G$ in period $T=0$ for all $d \in \{0,\dots,\bar{d}\}$. This implies that the DID-IV setting differs from the setting considered in dCDH.\par
Nevertheless, based on their empirical application, it is reasonable to infer that dCDH have the DID-IV setting in mind. In particular, dCDH revisit \cite{Duflo2001-nh}, in which the outcome $Y$ is log annual earnings, the treatment $D$ is education attainment (an ordered treatment), and the group variable $G \in \{0,1\}$ indicates the regions that are highly exposed to the policy shock in the second period. Given this, the absence of condition (i) in dCDH can be interpreted as reflecting an additional restriction on treatment adoption behavior in period $T=0$.\par
From the above discussion, in what follows, we replace $G$ with $Z$.
\subsubsection*{The target parameter}
In this paper, we define the ACRT as our target parameter. By contrast, dCDH consider the following target parameter, which we call the switcher average causal response on the treated (SACRT).
\begin{align*}
SACRT \equiv \sum_{j=1}^{J}w_j \cdot E(Y(j)-Y(j-1)|Z=1,T=1,D_0(1)<j<D_1(1)),
\end{align*}
where $w_j$ is
\begin{align*}
w_j=\frac{P(D \geq j|Z=1,T=1)-P(D \geq j|Z=1,T=0)}{E[D|Z=1,T=1]-E[D|Z=1,T=0]}.
\end{align*}
Here, to ease the interpretation, we slightly modify the notation in dCDH. Specifically, we define $D_t(Z) = \sum_{j=1}^{J}\mathbf{1}\{V \geq v_{Zt}\}$ as the treatment status at time $T=t$ for group $Z$, rather than using the notation $D(t)$ in dCDH (see Assumption \ref{dCDHordered1} below).\par
\par
\subsubsection*{The estimand}
Both papers consider the same Wald-DID estimand. This paper regards the Wald-DID estimand as the natural estimand for identifying the ACRT. By contrast, dCDH point out that this estimand identifies the SACRT only if the stable treatment effect assumption (see Assumption \ref{dCDHordered2} below) is satisfied.
\subsubsection*{The identifying assumptions}
In this paper, we consider the same identifying assumptions under non-binary, ordered treatment settings. By contrast, while dCDH maintain  Assumption \ref{chas1} and Assumption \ref{chas4}, they replace Assumption \ref{chas3} with the following one.
\begin{Assumption}[Ordered treatment equation]
\label{dCDHordered1}
\begin{align*}
D=\sum_{j=1}^{J}\mathbf{1}\{V \geq v^{j}_{ZT}\},\hspace{2mm}\text{with}\hspace{2mm}-\infty=v^{0}_{ZT}<v^{1}_{ZT}\dots<v^{J+1}_{ZT}=\infty\hspace{2mm}\text{and}\hspace{2mm}V \indep T|Z. 
\end{align*}
\end{Assumption}
In addition, they replace Assumption \ref{chas2} with the following two conditions \eqref{ch_stochastic1}-\eqref{ch_stochastic2}.
\begin{align}
\label{ch_stochastic1}
&D_{11} \succeq D_{10},\\
\label{ch_stochastic2}
&D_{01} \sim D_{00}.
\end{align}
Here, $D_{ZT}$ denotes the treatment conditional on $Z$ and $T$, $\succeq$ denotes stochastic dominance between two random variables, and $\sim$ denotes equality in distribution. Under non-binary, ordered treatment settings, Assumption \ref{chas5} is formulated as follows:
\begin{Assumption}[Stable treatment effect over time with non-binary, ordered treatment]
\label{dCDHordered2}
For all $d \in \{0,1,\dots,J\}$,
\begin{align*}
E[Y(d)-Y(0)|Z,T=1,D_0(Z)=d]=E[Y(d)-Y(0)|Z,T=0,D_0(Z)=d].
\end{align*}
\end{Assumption}
\par 
Next, we compare the restrictions on treatment adoption behavior between this paper and dCDH. We also discuss the difference in target parameter. Finally, we discuss the different stances regarding the use of the Wald-DID estimand.
\subsubsection*{Restrictions on treatment adoption behavior}
In DID-IV, we restrict the treatment adoption behavior through the monotonicity with respect to the instrument in the second period (Assumption \ref{ApeBas2}).\footnote{Note that dCDH do not impose this restriction, as Assumption \ref{dCDHordered1} does not imply Assumption \ref{ApeBas2} in general. However, Assumption \ref{dCDHordered1} implies the following restrictions:
\begin{align*}
    v^{j}_{1t} \geq v^{j}_{0t}\hspace{2mm}\text{or}\hspace{2mm}v^{j}_{1t} \leq v^{j}_{0t}.
\end{align*}
for all $j \in \{1,\dots,J\}$ and all $t \in \{0,1\}$.}\par
In Fuzzy DID, dCDH impose the restrictions on treatment adoption behavior through the conditions \eqref{ch_stochastic1}-\eqref{ch_stochastic2}. Specifically, as dCDH point out in their proof (see page 1025 in dCDH), these conditions combined with Assumption \ref{dCDHordered1} imply:
\begin{align}
\label{E.6.1.eq1}
    v^{j}_{01}&=v^{j}_{00},\\
\label{E.6.1.eq2}
    v^{j}_{11}& \leq v^{j}_{10}
\end{align}
for all $j \in \{1,\dots,J\}$.
Here, the condition \eqref{E.6.1.eq1} requires that the treatment rate should be stable over time for all units in the absence of exposure to the instrument. The condition \eqref{E.6.1.eq2} requires that the treatment should be weakly increasing over time for all units under exposure to the instrument.
\subsubsection*{The difference in target parameter}
In general, the SACRT considered in dCDH differs from the ACRT defined in this paper. As in the case of the SLATET, the SACRT may not be policy relevant. This is because the type $SW^{'}=\{Z=1,T=1,D_0(1)<j<D_1(1)\}$ is induced to increase the treatment by time, and not equal to the type $\{Z=1,T=1,D_1(0)<j<D_1(1)\}$ in general.\par 
Note that our target parameter, the ACRT, is the conditional version of the ACR defined in \citeonline{Angrist1995-ij}. As a result, the ACRT is policy-relevant parameter by construction.\par
\subsubsection*{The use of the Wald-DID estimand}
dCDH point out that the stable treatment effect assumption (Assumption \ref{dCDHordered2}) is also required for the Wald–DID estimand to identify the SACRT. As in the binary treatment case, however, this requirement arises from imposing the parallel trends assumption on untreated outcomes.\par
To see this formally, we first rewrite Assumption \ref{dCDHordered2} as follows:
\begin{align*}
&E[Y(d)\mid Z,T=1,D_0(Z)=d]-E[Y(d)\mid Z,T=0,D_0(Z)=d] \\
=&E[Y(0)\mid Z,T=1,D_0(Z)=d]-E[Y(0)\mid Z,T=0,D_0(Z)=d].
\end{align*}
From the above expression, this assumption can be interpreted as requiring that, for units with $D_0(Z)=d$, the average time trend of the potential outcome $Y(d)$ coincides with that of the untreated potential outcome $Y(0)$, conditional on $Z$.\par
Recall that under Fuzzy DID with ordered treatments, Assumptions \eqref{ch_stochastic1}--\eqref{ch_stochastic2} imply that treatment adoption is stable over time in the unexposed group and weakly increasing over time in the exposed group. As a result, units with $D_0(0)=d$ in the unexposed group necessarily remain at treatment level $d$ under $T=1$, while in the exposed group such units can be partitioned into those whose treatment remains at $d$ and those whose treatment increases from $d$ to $d'$.

For units whose treatment remains constant over time, the left-hand side of Assumption \ref{dCDHordered2} is observed. The assumption therefore uses this observed quantity to impute the unobserved average time trends of untreated potential outcomes for these units. For units in the exposed group whose treatment increases from $d$ to $d'$, the same assumption additionally allows one to isolate the causal effect associated with the change from $d$ to $d'$, which constitutes a component of the SACRT.

In short, Assumption \ref{dCDHordered2} is required not because of the Wald--DID estimand itself, but because imposing the parallel trends assumption on untreated outcomes forces one to impute the unobserved evolution of untreated potential outcomes for units with $D_0(Z)=d$.
\subsubsection{Multiple groups}\label{ApeE.6.2}
In multiple group settings, the DID-IV design considered in this paper is not directly compatible with the Fuzzy DID framework. There are two reasons. First, the group variable $G$ takes a finite number of ordered values: $G \in \{0,1,\dots,\bar{g}\}$. Second, dCDH construct the “super groups” (see dCDH) from the group structure in the data, $G \in \{0,1,\dots,\bar{g}\}$, and use the Wald-DID estimand to identify the following target parameter:
\begin{align*}
\Delta^{\ast}=E[Y(1)-Y(0)|\cup_{g=0}^{\bar{g}}S_g, T=1],
\end{align*}
where $S_g=\{D_{0}(g) \neq D_{1}(g),G=g\}$ denotes the set of switchers in group $G=g$.\footnote{Following $D_{t}(Z)$, we define $D_t(G) = \mathbf{1}\{V \geq v_{Gt}\}$ to be the treatment status under time $T=t$ in group $G$.}\par
However, even in multiple group settings, dCDH continue to impose Assumption \ref{chas3}. This implies that their framework still maintains two restrictions on treatment adoption behavior: monotonicity with respect to time $T$ and monotonicity with respect to the group variable $G$, in the sense of \cite{Imbens1994-qy}.\par
\begin{Remark}
   While dCDH and its supplemental appendix study multiple groups and ordered treatments separately, to the best of our knowledge, we are not aware of the identification results for settings with multiple groups under ordered treatments in either the main text or the supplement.
\end{Remark}
\subsubsection{Alternative estimand}\label{ApeE.6.3}
As an alternative to the Wald-DID estimand, dCDH propose the time-corrected Wald (TC-Wald) estimand and the changes-in-changes Wald (CIC-Wald) estimand, which do not rely on Assumptions \ref{chas4} and \ref{chas5}. Under both estimands, however, dCDH continue to impose Assumptions \ref{chas1}-\ref{chas3} to identify the SLATET. This implies that the same discussions in Subsections \ref{ApeE.2}-\ref{ApeE.3} can be applied to both estimands regarding the restrictions on treatment adoption behavior and the target parameter, the SLATET (or SACRT). These estimands do not require Assumption \ref{chas5} because Assumption \ref{chas4} is not imposed for the time-always takers $AT^{T}$.
\subsection{Summary and guidance}\label{ApeE.7}
Throughout this section, we compare the DID-IV with Fuzzy DID considered in dCDH. To sum up, we obtain the following implications. First, the restrictions on treatment adoption behavior under Fuzzy DID are generally stronger than those under DID-IV. Second, the target parameter in Fuzzy DID, the SLATET, is not policy relevant in general, and even when it is policy relevant, it is defined on a strictly smaller population than the LATET. Finally, the Wald-DID estimand requires the stable treatment effect assumption under Fuzzy DID because dCDH imposes the parallel trends assumption on untreated outcomes.\par
We conclude this section by summarizing practical considerations that may guide empirical researchers in choosing between DID-IV and Fuzzy DID designs under the DID-IV setting.

\subsubsection*{Policy-relevant parameter}
If empirical researchers are interested in identifying a policy-relevant parameter, DID-IV designs are preferable. Under Fuzzy DID designs, the target parameter may not be policy relevant in general. If researchers instead focus on the SLATET (or the SACRT), it is recommended that they clarify why the causal effect for units whose treatment status is affected by time but not by the instrument is of substantive interest in their empirical context.

\subsubsection*{Restrictions on treatment adoption behavior}
If empirical researchers are interested in identifying the SLATET (SACRT), they should carefully assess the plausibility of the restrictions on treatment adoption behavior imposed under Fuzzy DID designs. If these restrictions appear implausible, researchers may instead target the LATET (ACRT) and adopt DID-IV designs, under which the only restriction on treatment adoption behavior is monotonicity with respect to the instrument.

\subsubsection*{Testability of the parallel trends assumption}
If empirical researchers wish to assess the plausibility of the parallel trends assumption, DID-IV designs are preferable. Under Fuzzy DID designs, Assumption~\ref{chas4} (and its counterparts for the alternative estimands) is typically difficult to assess using pre-exposure period data (see Subsection~1.1 of the Supplemental Appendix in dCDH).

\bibliographystyleonline{econ-econometrica.bst}
\bibliographyonline{reference} 
\end{document}